\def\W{X}
\def\QQ{\mathbb{Q}}
\def\QQp{\mathbb{Q}^+}
\def\WW{\mathbb{W}}
\def\ZZ{\mathbb{Z}}
\def\ZZp{\mathbb{Z}^+}
\def\df{\mathrm{d}}
\def\Rl{\mathbb{R}}
\def\normal{N}
\def\prior{\pi}
\def\brobri{\text{BB}}
\def\fearn{\texttt{FearnpMCMC}}
\def\eul{\texttt{EulpMCMC}}
\def\pf{\texttt{FearnPF}}
\newtheorem{theorem}{Theorem}[section]
\newtheorem{proposition}[theorem]{Proposition}
\newtheorem{corollary}[theorem]{Corollary}
\newtheorem{definition}{Definition}
\begin{document}

\def\spacingset#1{\renewcommand{\baselinestretch}%
{#1}\small\normalsize} \spacingset{1}


{
 \title{\bf An Exact Auxiliary Variable Gibbs Sampler for a Class of Diffusions}
 \author{Qi Wang 
   \hspace{.2cm}\\
  Department of Statistics, Purdue University\\
  and \\
  Vinayak Rao\\
  Department of Statistics, Purdue University\\
  and \\
  Yee Whye Teh \\
  Department of Statistics, University of Oxford}
  \maketitle
}

\bigskip

\begin{abstract}
  Stochastic differential equations (SDEs) or diffusions are continuous-valued continuous-time stochastic processes widely used in the applied and mathematical sciences. 
  Simulating paths from these processes is usually an intractable problem, and typically involves time-discretization approximations. 
We propose an exact Markov chain Monte Carlo sampling algorithm that involves no such time-discretization error. 
Our sampler is applicable to the problem of prior simulation from an SDE, posterior simulation conditioned on noisy observations, as well as parameter inference given noisy observations. 
Our work recasts an existing rejection sampling algorithm for a class of diffusions as a latent variable model, and then derives an auxiliary variable Gibbs sampling algorithm that targets the associated joint distribution. 
At a high level, the resulting algorithm involves two steps: simulating a random grid of times from an inhomogeneous Poisson process, and updating the SDE trajectory conditioned on this grid.
Our work allows the vast literature of Monte Carlo sampling algorithms from the Gaussian process literature to be brought to bear to applications involving diffusions. 
We study our method on synthetic and real datasets, where we demonstrate superior performance over competing methods.
\end{abstract}

\noindent%
{\it Keywords}: Brownian motion, Markov chain Monte Carlo, Poisson process, stochastic differential equations 

\spacingset{1.45}

\section{Introduction}
\label{sec:intro}
Diffusion processes are a class of stochastic processes that have been deeply studied and widely applied across a variety of theoretical and applied domains. 
A diffusion evolves though time according to a stochastic differential equation (SDE)~\citep{oksendalstochastic}, and is a continuous-time Markov process whose realizations are continuous paths. 
The most well-known example is Brownian motion, corresponding to a random walk through some  finite-dimensional Euclidean space. 
Brownian motion is characterized by two fixed parameters: a drift coefficient $\alpha$, and a diffusion coefficient $\sigma$. 
SDEs generalize this, allowing the drift and diffusion to depend on the current state of the process. 
A simple example is the Ornstein-Uhlenbeck (OU) process~\citep{uhlenbeck1930theory}, where the drift equals the negative of the difference between the current state and some constant $\mu$, resulting in mean-reverting dynamics. 
Closely related is the Brownian bridge~\citep{oksendalstochastic}, where the drift at any time $t$ is this negative difference divided by $T-t$, the time remaining till the end of an interval $[0,T]$. 
This ensures that with probability one, the process ends at $\mu$ at time $T$. 
The OU process and the Brownian bridge are still simple Gauss-Markov processes, with the distribution over the process value at some future time following an easy-to-compute normal distribution. 
More general drift and diffusion dependencies allow SDEs to model rich, mechanistic, nonlinear and nonstationary phenomena from a variety of applied disciplines.
Examples include astronomy~\citep{schuecker2001cosmic}, biology~\citep{ricciardi2013diffusion}, psychology~\citep{tuerlinckx2001comparison}, ecology~\citep{holmes2004beyond}, economics~\citep{bergstrom1990continuous}, 
genetics~\citep{lange2003mathematical}, 
finance~\citep{black1973pricing},  physics~\citep{papanicolaou1995diffusion}, and 
political and social sciences~\citep{cobb1981stochastic}. 

The flexibility that SDEs offer comes at a severe computational cost, especially in data-driven applications. 
With a few exceptions, the nonlinear, continuous-time dynamics of SDEs result in distributions over future values that are not just non-Gaussian, but also unavailable in closed form. 
If an SDE forms a prior distribution over paths, then even simulating from this distribution forms an intractable problem. 
Given noisy measurements via some measurement process, posterior simulation is an even more challenging problem. 
As a consequence, both prior and posterior simulation are typically carried out by imposing approximations through time-discretization, common examples being Euler-Maruyama or Millstein approximations~\citep{kloeden2012numerical}. 
While this allows ideas from the discrete-time literature to be used, time-discretization introduces errors into inferences, and controlling these requires fine discretization grids and expensive computation. 

Our main contribution in this paper is an auxiliary variable Markov chain Monte Carlo (MCMC) algorithm that targets the posterior distribution over paths {\em exactly} without any such approximations. 
Our scheme builds on a rejection sampling algorithm that allows exact simulation from a class of SDEs, outlined in the papers~\citet{beskos2005exact, beskos2006retrospective, beskos2006exact}. 
Our work recasts this rejection sampling algorithm as a latent variable model, and then derives a Gibbs sampling algorithm that at a high level involves two simulation steps: 1) simulate a random grid of times from an inhomogeneous Poisson process conditioned on a set of diffusion values, and 2) update the diffusion values on this Poisson grid.
Our algorithm allows us to easily use standard tools from the vast Gaussian process literature~\citep{titsias2008markov}, and also allows conditional simulation given noisy observations. 
Our focus is mostly on one-dimensional diffusions, although our ideas also apply to some multivariate diffusions which can be transformed to have a constant diffusion function $\sigma(\cdot)$. 
A more serious restriction is that like~\citet{beskos2005exact}, our algorithm applies to diffusions whose Radon-Nikodym derivative with respect to a biased Brownian bridge is bounded (see section~\ref{sec:ea1}): we call these SDEs of class EA1.
It is conceptually easy to see how our basic idea extends to larger classes of diffusions (called EA2 and EA3), though these generalizations can be quite involved. We leave this for future work.

We organize our paper as follows. Section~\ref{sec:sdes} briefly introduces stochastic differential equations and describes the exact EA1 algorithm of~\cite{beskos2005exact}. 
Section~\ref{sec:posterior} sets up the general Bayesian model for which we wish to carry out posterior inference, and describes our proposed MCMC algorithm in this broader setting. 
Section~\ref{sec:param} shows how to extend our MCMC algorithm to incorporate parameter sampling.
We discuss related work in section~\ref{sec:related}, while in section~\ref{sec:exp} and~\ref{sec:example}, we evaluate our, and three other sampling algorithms, on synthetic and real datasets.

\vspace{-.2in}
\section{Stochastic differential equations (SDEs)}
\label{sec:sdes}
\vspace{-.1in}
A diffusion $\W_t$ is a continuous-valued continuous-time Markov process that solves the SDE
\vspace{-.1in}
\begin{align} \label{eq:sde_general}
\df \W_t &= \alpha_\theta(\W_t) \df t + \sigma_\theta(\W_t) \df B_t . 
\vspace{-.2in}
\end{align}
The process is driven by a Brownian motion whose value at time $t$ is $B_t$. 
The functions $\alpha_\theta(\cdot)$ and $\sigma_\theta(\cdot)$ are the {\em drift} and {\em diffusion} terms respectively, while $\theta$ represents parameters governing the system dynamics. 
For clarity, we drop dependencies on $\theta$ until section~\ref{sec:param} on parameter sampling.

Informally, equation~\eqref{eq:sde_general} implies that $\df \W_t$, the infinitesimal change in the value of the diffusion at time $t$, is comprised of two parts, a deterministic and a stochastic component. 
The former is determined by the current value $\W_t$ of the diffusion transformed by $\alpha(\cdot)$, while the latter is an increment of Brownian motion $\df B_t$ scaled by $\sigma(\W_t)$. 
In general $\W_t$ and $B_t$ can be $d$-dimensional vectors, with $\alpha(\W_t) \in \Rl^d$ and $\sigma(\W_t) \in \Rl^{d \times d}$. 
For one-dimensional diffusions, all these are scalars. 

In this paper, as in~\citet{beskos2005exact} and follow-up papers, we will assume that the diffusion coefficient $\sigma(\cdot)=1$. 
Thus, we will be dealing with diffusions solving the equation
\begin{align} \label{eq:sde}
  \df \W_t = \alpha(\W_t)\df t + \df B_t.
\end{align}
For one-dimensional diffusions, this is a mild assumption, since a general SDE can be transformed to have a diffusion coefficient of one via the Lamperti transform~\citep{moller2010state}. 
This involves scaling the diffusion by the function 
$ \eta(x) = \int_{-\infty}^{x} \frac{1}{\sigma(u)} \df u$. 
Now, the process  $\W'_t = \eta(\W_t)$ is a diffusion with 
diffusion coefficient equal to 1~\citep{moller2010state}. 
In higher-dimensions, the restriction to constant diffusion is more significant, since the Lamperti transform typically does not exist, and since there 
needs to exist a function $A(\textbf{u})$ satisfying $\alpha(\textbf{u}) = \nabla A(\textbf{u})$~\citep{fearnhead2010random}. 
In what follows, we assume such a transformation has been applied to produce our SDE of interest. 

\vspace{-.2in}
\subsection{Simulation via the Euler-Maruyama Method} 
The Euler-Maruyama method \citep{iacus2009simulation, kloeden2012numerical} forms the simplest approach to simulating general diffusions over an interval $[0,T]$. 
This simplicity comes at the price of approximation error. 
Under the Euler-Maruyama scheme, one chooses a time-discretization granularity $\Delta t$, with the change in the diffusion value $ \Delta \W_t := \W_{t + \Delta t} - \W_t$ approximated as 
\vspace{-.1in}
\begin{align} \label{eq:euler}
\Delta \W_{t} \approx \alpha(\W_t ) \Delta t + \sigma(\W_t) \Delta B_t,
\end{align}
where $\Delta B_t \sim N(0, \Delta t)$. Effectively, the change $\Delta \W_t$ follows a conditionally Gaussian distribution:
\begin{align} \label{eq:euler2}
  \Delta \W_{t} \sim N(\alpha(\W_t) \Delta t, \sigma(\W_t) \Delta t).
\end{align}
The discretization error from the Euler-Maruyama method can be reduced by using a finer discretization grid. 
Alternately, more sophisticated approaches like the Millstein algorithm~\citep{kloeden2012numerical} can provide more accurate approximations for a fixed time resolution. 


\subsection{An exact simulation algorithm (EA1) for diffusion processes} \label{sec:ea1}
Algorithms like the Euler-Maruyama method allow easy path simulation from general SDEs at the price of discretization error. 
The algorithm of~\citet{beskos2005exact} on the other hand allows {\em exact} simulation from a subclass of SDEs with diffusion coefficient 1. 
In follow-up work~\citep{beskos2006retrospective}, this was extended to a broader class of such SDEs, though we focus on the original algorithm, called the 
Exact algorithm 1 or EA1. 
We refer to the associated family of SDEs as class EA1, which we characterize below. 
At a high level, EA1 is a rejection sampling scheme, where proposals are made from a simple stochastic process (Brownian motion), and are accepted or rejected with appropriate probability. 
The ingenuity of the algorithm lies in a retrospective sampling scheme that only requires evaluating the paths on a finite set of times. 

Assume that at time $0$ the diffusion has initial value $X_0 = x$; later we will place a probability $\prior$ over $\W_0$. 
Since the diffusion coefficient equals 1, the resulting stochastic process differs from standard Brownian motion only through the drift function $\alpha(\cdot)$. 
Informally, this results in paths from the SDE having the same `roughness' as the Brownian motion paths. 
A consequence is that the probability measure over paths specified by the diffusion process is absolutely continuous with respect to the probability measure corresponding to Brownian motion. 
This is formalized by {Girsanov's theorem }\citep{oksendalstochastic}, that characterizes the diffusion process via a Radon-Nikodym derivative with respect to Brownian motion. 

Write $\mathcal{C}$ for the space of continuous functions on $[0,T]$. We will refer to generic elements of this space as $\omega$. For paths $\omega$ with initial value $x$, write $\WW_x$ and $\QQ_x$ for path probability measures corresponding to Brownian motion and the SDE respectively. 
Then, under standard assumptions (we refer to~\citet{oksendalstochastic} for more details), we have
\begin{theorem}[Girsanov's theorem]
The Radon-Nikodym derivative $\frac{\df \QQ_x}{\df \WW_x}$ satisfies
\begin{align}
  \frac{\df \QQ_x}{\df \WW_x}(\omega) =
  \exp\left\{  \int_0^T \alpha(\omega_t) \df \omega_t - \frac{1}{2}\int_0^T \alpha^2(\omega_t)  \df t \right\}.
  \label{eq:girsanov}
\end{align}
\end{theorem}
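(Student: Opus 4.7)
The plan is to prove this by constructing the measure $\QQ_x$ from $\WW_x$ via the putative density and then verifying that under this new measure the coordinate process has the distribution of the SDE. I would take $\omega$ to be the coordinate process on $\mathcal{C}$; under $\WW_x$ this is a Brownian motion started at $x$, which I denote $B_t$.

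First, define the candidate density process
\begin{equation*}
  Z_t = \exp\!\left\{ \int_0^t \alpha(B_s)\,\df B_s - \tfrac{1}{2}\int_0^t \alpha^2(B_s)\,\df s \right\}.
\end{equation*}
A direct application of It\^o's formula to $Z_t = \exp(Y_t)$ with $Y_t = \int_0^t \alpha(B_s)\,\df B_s - \tfrac{1}{2}\int_0^t \alpha^2(B_s)\,\df s$ shows that $\df Z_t = Z_t \alpha(B_t)\,\df B_t$, so $Z_t$ is a continuous local martingale under $\WW_x$ with $Z_0=1$. The first technical step is to upgrade this to a true martingale; under mild regularity on $\alpha$ (e.g.\ Novikov's condition $\EE_{\WW_x}[\exp(\tfrac12\int_0^T \alpha^2(B_s)\,\df s)]<\infty$, which is essentially built into the EA1 class assumptions used later), $Z$ is a uniformly integrable martingale on $[0,T]$ with $\EE_{\WW_x}[Z_T]=1$.

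Second, define a new probability measure $\tilde{\QQ}_x$ on $\mathcal{C}$ by $\df \tilde{\QQ}_x = Z_T \df \WW_x$. The key step is to show that under $\tilde{\QQ}_x$ the process
\begin{equation*}
  \tilde{B}_t := B_t - \int_0^t \alpha(B_s)\,\df s
\end{equation*}
is a standard Brownian motion. This is accomplished via L\'evy's characterization: $\tilde{B}$ is continuous with $\tilde{B}_0=0$, its quadratic variation under $\tilde{\QQ}_x$ equals $t$ (quadratic variation being invariant under equivalent changes of measure), and one checks it is a $\tilde{\QQ}_x$-martingale using the general rule that $M$ is a $\tilde{\QQ}_x$-martingale iff $MZ$ is a $\WW_x$-martingale; expanding $\df(\tilde{B}_t Z_t)$ by It\^o's product rule and using $\df Z_t = Z_t\alpha(B_t)\,\df B_t$ yields a pure stochastic integral, establishing the martingale property.

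Third, rewriting the defining relation gives $\df B_t = \alpha(B_t)\,\df t + \df\tilde{B}_t$ under $\tilde{\QQ}_x$, which is exactly the SDE~\eqref{eq:sde} driven by the $\tilde{\QQ}_x$-Brownian motion $\tilde{B}$ with initial value $x$. By the weak uniqueness of solutions to this SDE (guaranteed by the usual Lipschitz/growth hypotheses that are implicit throughout), the law of $\omega$ under $\tilde{\QQ}_x$ coincides with $\QQ_x$, so $\tilde{\QQ}_x = \QQ_x$ and the displayed Radon--Nikodym derivative follows. The main obstacle is the martingale verification of $Z$: local martingality is immediate from It\^o, but genuine martingality requires the integrability hypothesis on $\alpha$ and is where the standing assumptions alluded to by the citation of~\citet{oksendalstochastic} enter essentially.
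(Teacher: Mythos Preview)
The paper does not actually prove this statement: Girsanov's theorem is quoted as a classical result with a reference to \citet{oksendalstochastic}, and the paper immediately moves on to use it in deriving Proposition~\ref{prop:girs}. So there is no ``paper's own proof'' to compare against.

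That said, your sketch is the standard textbook argument for Girsanov's theorem (essentially the one in \citet{oksendalstochastic}): define the exponential local martingale $Z_t$, upgrade it to a true martingale via an integrability condition such as Novikov's, change measure, and identify the drift-shifted process as a Brownian motion under the new measure via L\'evy's characterization, concluding by weak uniqueness of the SDE. The outline is correct and the obstacles you flag (martingality of $Z$, the role of the regularity hypotheses on $\alpha$) are exactly the right ones. One small remark: the EA1 boundedness assumption on $\tfrac12(\alpha^2+\alpha')$ does not by itself imply Novikov's condition, so you should not lean on the EA1 class for that step; the paper simply assumes the standing hypotheses of Girsanov's theorem separately (``under standard assumptions'') and you should do the same.
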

Let $A(u) = \int_0^u \alpha(t) \df t$, and recall the definition of a Brownian bridge: this is just a Brownian motion conditioned on its end points. 
For a density $h_x(u) \propto \widetilde{h}_x(u) := \exp(A(u) - (u - x)^2/2T)$, define an $h_x$-biased Brownian bridge as a stochastic process starting at $x$, ending with a value $\W_T$ drawn from $h_x$, with the two points linked by a Brownian bridge. 
Write $\ZZ_x$ for the law of this process. 
Note that for this to be well defined, $\widetilde{h}_x$ must be normalizable, so that the integral $\int \widetilde{h}_x(u)\df u = \int \exp(A(u) - (u - x)^2/2T) \df u$ is finite.
Then we have~\citep{beskos2005exact}:
\begin{proposition}
  Let the drift function $\alpha$ satisy the conditions of Girsanov's theorem and be continuously differentiable. Then
\begin{align}
  \frac{\df \QQ_x}{\df \ZZ_x}(\omega) &\propto
  \exp\left\{ -\frac{1}{2}
             \int_0^T \left(\alpha^2(\omega_t) + \alpha'(\omega_t)  \right) \df t \right\}.
             \label{eq:sde_poiss_orig}
\end{align}
\label{prop:girs}
\end{proposition}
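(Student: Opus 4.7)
The plan is to compute $d\QQ_x/d\ZZ_x$ by first expressing $d\QQ_x/d\WW_x$ in a form free of stochastic integrals, then computing $d\ZZ_x/d\WW_x$ directly from the construction of the biased Brownian bridge, and finally dividing.

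First I would simplify Girsanov's expression \eqref{eq:girsanov} using It\^o's formula. Since $A$ is an antiderivative of $\alpha$ and $\alpha$ is continuously differentiable, applying It\^o's formula to $A(\omega_t)$ under $\WW_x$ gives
\begin{align*}
A(\omega_T) - A(x) = \int_0^T \alpha(\omega_t)\,\df \omega_t + \tfrac{1}{2}\int_0^T \alpha'(\omega_t)\,\df t,
\end{align*}
which eliminates the stochastic integral:
\begin{align*}
\frac{\df \QQ_x}{\df \WW_x}(\omega) = \exp\!\left\{ A(\omega_T) - A(x) - \tfrac{1}{2}\int_0^T\!\bigl(\alpha^2(\omega_t) + \alpha'(\omega_t)\bigr)\df t \right\}.
\end{align*}

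Next I would compute $d\ZZ_x/d\WW_x$. Both measures produce paths starting at $x$, and conditional on the endpoint $\omega_T$ both place the same Brownian-bridge law on the interior of the path. Hence the Radon-Nikodym derivative depends only on the endpoint marginal: under $\WW_x$ the endpoint is $\mathcal{N}(x,T)$ with density proportional to $\exp(-(u-x)^2/2T)$, while under $\ZZ_x$ it has density $h_x(u) \propto \exp(A(u) - (u-x)^2/2T)$. Taking the ratio,
\begin{align*}
\frac{\df \ZZ_x}{\df \WW_x}(\omega) \propto \exp\bigl(A(\omega_T)\bigr),
\end{align*}
with a normalizing constant that depends on $x$ and $T$ but not on $\omega$.

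Finally, dividing the two derivatives gives
\begin{align*}
\frac{\df \QQ_x}{\df \ZZ_x}(\omega) = \frac{\df \QQ_x/\df \WW_x}{\df \ZZ_x/\df \WW_x}(\omega) \propto \exp\!\left\{ -\tfrac{1}{2}\int_0^T\!\bigl(\alpha^2(\omega_t) + \alpha'(\omega_t)\bigr)\df t \right\},
\end{align*}
since the $A(\omega_T)$ terms cancel and $A(x)$ is a constant absorbed into the proportionality. The main technical point to be careful about is the It\^o calculus step and the tacit use of Fubini/absolute continuity to justify the ``ratio of densities'' argument for $\df\ZZ_x/\df\WW_x$; the rest is bookkeeping, and the normalizability assumption on $\widetilde h_x$ ensures $\ZZ_x$ is well-defined so that the quotient of Radon-Nikodym derivatives makes sense.
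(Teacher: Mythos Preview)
Your proof is correct and follows essentially the same approach as the paper: apply It\^o's formula to $A(\omega_t)$ to remove the stochastic integral from Girsanov's expression, then observe that $\ZZ_x$ is just $\WW_x$ reweighted by $h_x(\omega_T)\propto\exp(A(\omega_T))$, so dividing cancels the $A(\omega_T)$ term and leaves the stated formula up to the constant $A(x)$. Your justification of $\df\ZZ_x/\df\WW_x$ via the shared Brownian-bridge conditional law is a touch more explicit than the paper's one-line ``by definition'' remark, but the argument is the same.
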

\vspace{-.5in}
\begin{proof}
    Write $A_t = A(\omega_t)$. By It\^{o}'s lemma~\citep{oksendalstochastic},
\begin{align}
  \df A_t &= \frac{\partial A_t}{\partial t} \df t + \frac{\partial A_t}{\partial \omega_t} \df \omega_t + \frac{1}{2} \frac{\partial^2 A_t}{\partial \omega_t^2} \df t  
       = 0 + \alpha(\omega_t) \df \omega_t + \frac{1}{2} \alpha'(\omega_t) \df t.
\end{align}
Solving for $\int_0^T \alpha(\omega_t) \df \omega_t $ and substituting in equation \eqref{eq:girsanov}, we get
\begin{align}
  \frac{\df \QQ_x}{\df \WW_x}(\omega) =
  \exp\left\{ A(\omega_T) - A(\omega_0) -\frac{1}{2}
  \int_0^T \left(\alpha^2(\omega_t) + \alpha'(\omega_t) \right) \df t \right\}. \label{eq:joint_unnorm}
\end{align}
By definition, the measure $\ZZ_x$ is a reweighting of $\WW_x$ by $h_x(\omega_T)$. Thus, 
\begin{align}
  \frac{\df \QQ_x}{\df \ZZ_x}(\omega) &\propto
  \exp\left\{ -A(\omega_0) -\frac{1}{2}
             \int_0^T \left(\alpha^2(\omega_t) + \alpha'(\omega_t)  \right) \df t \right\}. \label{eq:sde_poiss1}
\end{align}
Since we are fixing $\omega_0=x, A(\omega_0)$ is a constant, and the result follows.
\end{proof}
We now come to the key assumption of the EA1 algorithm of~\citet{beskos2005exact}:
\begin{definition}
  \label{def:ea1}
  An SDE belongs to class EA1 if it satisfies the assumptions of Proposition~\ref{prop:girs}, and its drift function $\alpha$ satisfies $\frac{1}{2}\left(\alpha^2(\cdot) + \alpha'(\cdot) \right) \in [L, L+M]$ for finite $L$ and $M$.
\end{definition}
\noindent We focus on SDEs of class EA1. Adding and subtracting $L$ from the exponent in equation~\eqref{eq:sde_poiss1}, 
\begin{align}
  \frac{\df \QQ_x}{\df \ZZ_x}(\omega) &\propto
  \exp\left\{ -\frac{1}{2}
             \int_0^T \left(\alpha^2(\omega_t) + \alpha'(\omega_t) - 2L \right) \df t \right\} 
                                     := \exp\left\{ - \int_0^T \phi(\omega_t)  \df t \right\} 
                                    := \rho(\omega). 
             \label{eq:sde_poiss}
\end{align}
For class EA1, the function $\phi(\cdot) = \frac{1}{2}(\alpha^2(\cdot) + \alpha'(\cdot)-2L)$ is positive, and exponentiating its negative integral gives a number $\rho(\omega)$ between $0$ and $1$. 
This suggests a rejection sampling scheme~\citep{Robert2005} to simulate from $\QQ_x$: propose a path from the stochastic process $\ZZ_x$, and accept it with probability $\rho(\omega)$. 
Naively, this requires 1) simulating the entire path $\omega$, and 2) transforming and integrating $\omega$ to calculate $\rho(\omega)$, both being impossible steps. 
The EA1 algorithm bypasses this by recognizing that equation \eqref{eq:sde_poiss} gives the probability that a Poisson process with intensity $\{\phi(\omega_t), t\in [0,T]\}$ produces $0$ events on the interval $[0,T]$. 
It takes the approach of partially `uncovering' the path $\omega$, simulating it on a finite set of times, until the number of Poisson events is determined. 
To do this, the EA1 algorithm exploits the bound $\phi(\cdot) \le M$ to simulate a rate-$\phi(\omega)$ Poisson process via the thinning theorem~\citep{Lewis1979}. 
It does this in three steps: simulate a Poisson process $\Psi$ with intensity $M$ on the interval $[0,T]$, instantiate an $h$-biased Brownian bridge $\omega_t$ on $\Psi$, and keep each point $t_i \in \Psi$ with probability $\phi(\omega_{t_i})/M$. 
The surviving points then form an {\em exact} realization from a rate-$\phi(\omega)$ Poisson process. 
The probability this Poisson process has $0$ events is given by equation~\eqref{eq:sde_poiss}.

Now, the EA1 algorithm involves repeatedly simulating from the rate-$\phi(\omega)$ Poisson process this way until a realization with no events is produced. 
We write the corresponding path as $\W$, this forms an exact realization of the SDE of interest. 
Note that at this stage, we only have $\W_0, \W_T$ and $\W_\Psi$, the last being the values of the diffusion uncovered on the times in the Poisson set $\Psi$. 
We will refer to the pair $(0 \cup \Psi \cup T, \W_0 \cup \W_\Psi \cup \W_T)$ as the diffusion `skeleton', 
this forms a sufficient statistic that allows the diffusion at any other set of times to be easily and exactly simulated. 
For this, we recognize that the accepted path was a proposal from a biased Brownian bridge, but which only was evaluated at times in $0 \cup \Psi \cup T$. 
It can retrospectively be uncovered at a set of times by conditionally simulating from a Brownian bridge. 
Consider a set of times $G$ between two successive elements $t_i$ and $t_{i+1}$ of $\Psi$. 
We simulate $\W_G$, the diffusion evaluated on $G$, from a Brownian bridge with endpoints $\W_{t_i}$ and $\W_{t_{t+1}}$. 
We write this as 
$X_G \sim \brobri_G(t_i,\W_{t_i},t_{i+1},\W_{t_{i+1}})$ (see equation~\eqref{eq:brownian_br} in the appendix).
Algorithm~\ref{alg:EA1} describes all steps involved with the EA1 algorithm. 
\begin{algorithm}[h]
  \caption{Simulate an SDE of class EA1 with drift term $\alpha(\cdot)$ over an interval $[0,T]$}
   \label{alg:EA1}
  \begin{tabular}{l l}
    \textbf{Input:  } & \text{An initial distribution over the diffusion state $\prior(\cdot)$},
                       \text{a finite grid of times  $G \in [0,T]$.} \\
    \textbf{Output:  }& \text{A diffusion skeleton $(0 \cup \Psi \cup T, \W_0 \cup \W_\Psi \cup \W_T)$.} \\
     & The diffusion values \text{$\W_G$ evaluated on the grid $G. \qquad \qquad \qquad \qquad \qquad \qquad \qquad $} \\
   \hline
   \end{tabular}
   \begin{algorithmic}[1]
     \State {Calculate $A(\cdot), \phi(\cdot)$, and the constants $L$ and $M$ from $\alpha(\cdot)$,} and
     set {\texttt{accept}} to {\texttt{false}}.
     \While{\texttt{accept = false}}  \Comment{Rejection sampling}
     \State {Simulate a rate-$M$ Poisson process $\Psi = \{t_1, t_2, \cdots, t_{|\Psi|}\}$  on $[0,T]$.}
      \State At the start time $0$, simulate the initial value $\W_0$ of the diffusion from $\prior$.
      \State At the end point $T$, simulate $\W_T$ from $h_{\W_0}(\W_T) \propto \exp(A(\W_T) - (\W_T - \W_0)^2/2T)$.
      \State Simulate a Brownian bridge connecting $(0, \W_0)$ and $(T, \W_T)$ on the times $\Psi$:
            \vspace{-.1in}
          \begin{align}
            \label{eq:brownianbridge}
            X_\Psi \sim \brobri_\Psi(0,\W_{0},T,\W_{T}) \qquad \text{(see equation~\eqref{eq:brownian_br} for details)}.
          \end{align}
            \vspace{-.4in}
      \State {For $i \in \{1,\dotsc, |\Psi|\}$, simulate $u_i \sim$ Uniform$(0,1)$. 
    If all $u_i > \frac{\phi(\W_{t_i})}{M}$, set {\texttt{accept = true}}. 
  } 
      \EndWhile
      \For {i in $\{0,\dotsc,|\Psi|\}$} \Comment{Impute diffusion on $G$} 
      \State 
      Define $t_0=0, t_{|\Psi|+1}=T$ and $G_{i} = G \cap (t_i,t_{i+1})$.
      Simulate $\W_{G_{i}}\sim \brobri_{G_{i}}(t_i,\W_{t_i},t_{i+1},\W_{t_{i+1}})$.
      \EndFor
   \end{algorithmic}
\end{algorithm}

\vspace{-.1in}
\section{Posterior simulation for SDEs} \label{sec:posterior}
\vspace{-.1in}
The EA1 algorithm, while exact, can suffer from high rejection rates. 
This happens when dealing with long time intervals, or when the drift $\alpha(\cdot)$ causes $\QQ_x$ to differ significantly from the 
proposal $\ZZ_x$. 
Further, the EA1 algorithm is primarily designed to simulate from an SDE prior or an end-point conditioned SDE. 
As we describe below, extending it to simulating SDE paths from conditional distributions given noisy observations can be challenging. 
Our proposed sampler aims to address both these problems, and brings sampling algorithms from the Gaussian process literature to applications with SDEs. 
Before describing our algorithm, we set up the general problem. 

\vspace{-.08in}
\subsection{Bayesian model} \label{sec:bayes_model} 
\vspace{-.1in}
Consider a latent trajectory $X = \{X_t: t \in [0,T]\}$ on the interval $[0,T]$. 
We model this as a realization of an SDE of class EA1, with drift $\alpha(\cdot)$, and distribution $\prior$ on the initial state $X_0$. 
Following our previous notation, our prior distribution on the process $X_0 \times \{X_t: t \in (0,T]\}$ equals the product measure $\prior \times \QQ_{X_0}$. 
Write this as $\QQ_{\prior}$. 
We are given noisy measurements of the latent trajectory, with likelihood $\ell(X)$. 
We will assume this depends only on the trajectory values at a finite set of times $O=\{o_1,\dotsc,o_{|O|}\}$ (without loss of generality, we let $O$ include $0$ and $T$), so that $\ell(X) = \ell(X_O)$ (recall that $\W_O$ is the diffusion evaluated on the times in $O$). 
A simple example is when we have i.i.d.\ additive-noise measurements $Y_O = \{y_1,\cdots,y_{|O|}\}$ at the times $O$, so that $\ell(X) = \prod_{o \in O} \ell_o(X_o)$ and for example, $\ell(X_o) = N(y_o|X_o,\sigma_Y^2)$. 
We can also consider more complex likelihoods, where this condition holds after augmenting the observations with additional variables. 
Examples of such likelihoods include point processes~\citep{adams-murray-mackay-2009b, RaoTeh2011b}, jump processes~\citep{RaoTeh13}, or even other diffusions modulated by the latent SDE trajectory. 
In this case, our MCMC sampler will include such data-augmentation as an inner step. 
Obviously, our setup includes the problem of prior simulation, where there are no observations.

Our goal is to simulate from the conditional distribution over paths under a prior $\QQ_{\pi}$, given the observations with likelihood $\ell(\W)$. Write this as $\QQ_{\pi,\ell}$, which forms the posterior distribution over paths under our Bayesian model. Observe that this satisfies 
\begin{align} 
 \frac{\df \QQ_{\pi,\ell}}{\df \QQ_\prior}(\omega) \propto \ell(\omega). 
 \label{eq:sde_post}
\end{align} 
The EA1 algorithm, as outlined in section~\ref{sec:ea1}, can only simulate trajectories from the prior distribution $\QQ_\prior$. 
In~\citet[][Section 6.2]{beskos2006retrospective}, the authors adapt the EA1 rejection-sampling algorithm to conditional simulation where the diffusion is exactly observed at a finite set of times. 
One can adapt this when the diffusion is noisily observed, repeating two steps:
1) given $\W_O$, the diffusion imputed on the observation times $O$, use the rejection-sampling algorithm to simulate an SDE skeleton within each sub-interval $(o_i, o_{i+1})$, and 2) conditioned on the skeleton, update the diffusion values at the observation times $O$. 
The first step exploits the Markov property of the SDE, and runs the conditional EA1 algorithm independently for each interval $[o_i,o_{i+1}]$. 
The second step involves simulating each $X_o, o \in O$ from the conditional distribution resulting from a Brownian bridge prior on $\W_o$ (algorithm~\ref{alg:EA1}, line 10) and the likelihood $\ell(\W_o)$. 

Such an approach, while useful, can scale badly with high observation-rates, as is common in fields like high-frequency finance. 
Even with low to moderate observation rates, it can be necessary to partition the observation interval into small sub-intervals to maintain low rejection rates~\citep[][Section 4]{beskos2006retrospective}. 
This slows down MCMC mixing, since 1) we are instantiating more of the diffusion path, and 2) rather than updating the entire path in a single step, we conditionally update part of the trajectory given the rest. 
The finer the sub-intervals, the stronger the coupling, and thus, the poorer the mixing. 
Our proposed algorithm eliminates the rejection sampling step altogether, instead allowing practitioners to exploit MCMC algorithms for Gaussian process models in a fairly straightforward fashion.

\vspace{-.1in}
\subsection{Our proposed auxiliary variable Gibbs sampler for SDEs} \label{sec:aux_gibbs}
We describe an MCMC sampling algorithm that targets the 
posterior distribution over trajectories $\QQ_{\prior,\ell}$ from our Bayesian model of the previous section. 
Note that this equals the prior $\QQ_{\prior}$ when the likelihood $\ell(\cdot)$ is a constant function. 
To keep our notation simple, we will write $\QQ_{\pi,\ell}$ as $\QQ$. 
%
Recall that $\ZZ_x$ is an $h$-biased Brownian bridge starting at $x$.
In a similar manner to $\QQ_x$, use $\ZZ_x$ to define $\ZZ_\prior$ and $\ZZ_{\prior,\ell}$. 
Thus, $\ZZ_\prior$ is the distribution over Brownian bridge paths, with values at time $0$ and $T$ distributed as $\prior$ and $h_{\W_0}$ respectively. 
Treating this as a prior over paths, $\ZZ_{\prior,\ell}$ is the posterior distribution corresponding to observations with likelihood $\ell(\omega)$. 
Again, we set $\ZZ$ as equal to $\ZZ_{\prior,\ell}$. 
It follows directly from equations~\eqref{eq:sde_poiss} and~\eqref{eq:sde_post} that for a path $\omega \in \mathcal{C}$, 
\begin{align}
  \frac{\df \QQ}{\df \ZZ}(\omega) \propto \exp\left\{-\int_0^T \phi(\omega_t)\df t\right\}. \label{eq:sde_poiss2}
\end{align}
Write $\mathcal{M}$ for the space of finite point process realizations on the interval $[0,T]$. Let $\mathbb{M}$ be the probability measure on $\mathcal{M}$ corresponding to a rate-$1$ Poisson process. 
Define the product measure $\ZZp = \ZZ \times \mathbb{M}$. 
For $\Psi \in \mathcal{M}$, and recalling that $M$ is the supremum of $\phi(\cdot)$, define the measure $\QQp$ via the following Radon-Nikodym derivative with respect to $\ZZp$: 
\begin{align}
  \frac{\df \QQp}{\df \ZZp}(\omega,\Psi) =  
  \exp(-MT) \prod_{t \in \Psi} \left(M - {\phi(\omega_t)}\right).
  \label{eq:joint_data}
\end{align}
\begin{proposition}
  $\QQ^+$ has $\QQ$ as its marginal distribution:
  $\int_{\mathcal{M}}  \mathrm{d}\QQp(\omega,\Psi) =  \mathrm{d}\QQ(\omega)$.
  \label{prop:target_marg}
\end{proposition}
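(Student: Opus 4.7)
The plan is to marginalize $\Psi$ out of $\QQp$ and compare the resulting density on $\mathcal{C}$ with the characterization of $\QQ$ in equation~\eqref{eq:sde_poiss2}. Since $\ZZp = \ZZ\times\mathbb{M}$ and the Radon--Nikodym derivative in equation~\eqref{eq:joint_data} depends on $\Psi$ only through the product $\prod_{t \in \Psi}(M - \phi(\omega_t))$, Fubini's theorem reduces the task to evaluating, for each fixed $\omega$,
$$\exp(-MT)\int_{\mathcal{M}} \prod_{t \in \Psi}\bigl(M - \phi(\omega_t)\bigr)\,\df\mathbb{M}(\Psi).$$

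The main tool I would invoke is the probability generating functional of a Poisson process: for a rate-$1$ Poisson process on $[0,T]$ and any bounded measurable $g$,
$$\mathbb{E}_\mathbb{M}\!\left[\prod_{t \in \Psi} g(t)\right] = \sum_{n=0}^{\infty} \frac{e^{-T}}{n!}\int_{[0,T]^n}\prod_{i=1}^{n} g(t_i)\,\df t_1\cdots \df t_n = \exp\!\left(\int_0^T (g(t)-1)\,\df t\right).$$
This standard identity follows by conditioning on the number of Poisson points $|\Psi|=n$ and using that, given $|\Psi|=n$, the points are i.i.d.\ uniform on $[0,T]$. I would plug in $g(t) = M - \phi(\omega_t)$, which is bounded since the EA1 definition forces $0 \le \phi \le M$, so that Fubini and the functional above both apply cleanly. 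Multiplying by $\exp(-MT)$, the $MT$ contributions cancel, leaving the marginal density
$$\int_{\mathcal{M}} \frac{\df\QQp}{\df\ZZp}(\omega,\Psi)\,\df\mathbb{M}(\Psi) = \exp(-T)\exp\!\left(-\int_0^T \phi(\omega_t)\,\df t\right)$$
with respect to $\ZZ$. By equation~\eqref{eq:sde_poiss2} this is a constant multiple of $\df\QQ/\df\ZZ(\omega)$.

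The one piece of bookkeeping that needs care is the normalizing constant: equations~\eqref{eq:sde_poiss2} and~\eqref{eq:joint_data} are stated via proportionality, so the extra $\exp(-T)$ factor gets absorbed once both $\QQ$ and $\QQp$ are understood as probability measures (or, equivalently, as unnormalized measures sharing the same normalizer $\int \exp(-\int_0^T \phi(\omega_t)\,\df t)\,\df\ZZ(\omega)$). I do not anticipate a genuine technical obstacle; the proof is essentially a one-line computation once the Poisson generating functional is in hand, and the main things to watch are the boundedness of $g$ (so Fubini applies) and the sign in the exponent of the PGF identity.
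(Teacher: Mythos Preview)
Your proposal is correct and follows essentially the same route as the paper: integrate out $\Psi$ by computing the Poisson expectation of $\prod_{t\in\Psi}(M-\phi(\omega_t))$ and compare with equation~\eqref{eq:sde_poiss2}. The only cosmetic differences are that the paper invokes Campbell's theorem by name rather than writing out the generating-functional series, and that you track the residual $\exp(-T)$ constant explicitly whereas the paper absorbs it silently into the proportionality in equation~\eqref{eq:sde_poiss2}.
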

\begin{proof}
  From equation~\eqref{eq:joint_data}, we have 
  $$\int_{\mathcal{M}} \mathrm{d}\QQp(\omega,\Psi)  = 
  \int_{\mathcal{M}}  \mathrm{d} \ZZp(\omega,\Psi) \exp(-MT)\prod_{t \in \Psi} \left(M- {\phi(\omega_t)} \right) =
  \exp(-MT) \mathrm{d} \ZZ(\omega) \mathbb{E}_{\mathbb{M}} \left[\prod_{t \in \Psi} \left(M- {\phi(\omega_t)} \right)\right],
$$
where $\mathbb{E}_\mathbb{M}$ is the expectation with respect to the Poisson measure $\mathbb{M}$. 
By Campbell's theorem \citep{kingman1992poisson}, we have
$\mathbb{E}_{\mathbb{M}}\left[\prod_{t \in \Psi} \left(M- {\phi(\omega_t)}\right)\right] = 
\mathbb{E}_{\mathbb{M}}\left[\exp\left\{\sum_{t \in \Psi} \log\left(M- {\phi(\omega_t)}\right)\right\}\right] = $
$\exp\left\{\int_0^T (M-\phi(\omega_t)) \df t \right\}$. 
The result 
follows directly from this and equation~\eqref{eq:sde_poiss2}. 
\end{proof}
While our goal is to produce samples from $\QQ$, our MCMC sampler 
is an auxiliary variable sampler that targets the joint distribution $\QQ^+$. 
Its state-space is the SDE trajectory $\W$ as well as the random set of Poisson times $\Psi$. 
Proposition~\ref{prop:target_marg} tells us that discarding the Poisson times $\Psi$ produces trajectories $\W$ from the desired conditional distribution $\QQ$. 
Our algorithm takes a Gibbs sampling approach, and targets the distribution $\QQp$ by repeating two steps: simulate Poisson times $\Psi$ given the path $\W$, and update the path given the Poisson times. 
Equation~\eqref{eq:joint_data} allows us to derive two simple corollaries that underpin our Gibbs sampler. 
\begin{corollary}
  Conditioned on the trajectory $\W$, the point events $\Psi$ follow an inhomogeneous Poisson process with rate $(M-\phi(\W))$.
\label{corr:psi_cond}
\end{corollary}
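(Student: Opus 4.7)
The plan is to work directly from the Radon--Nikodym derivative in equation~\eqref{eq:joint_data} and extract the conditional law of $\Psi$ given $\omega$. Since the base measure $\ZZp = \ZZ \times \mathbb{M}$ is a product, conditioning on $\omega$ amounts to renormalizing the joint density as a function of $\Psi$ alone, with $\mathbb{M}$ (rate-$1$ Poisson) as the reference. Reading off equation~\eqref{eq:joint_data} with $\omega$ fixed, the conditional density of $\Psi$ given $\omega$ with respect to $\mathbb{M}$ is proportional to
\[
  \prod_{t \in \Psi}\bigl(M - \phi(\omega_t)\bigr),
\]
where the proportionality constant depends only on $\omega$ (and so is irrelevant for identifying the conditional law of $\Psi$).

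Next, I would recall the standard characterization of an inhomogeneous Poisson process via change of measure from a rate-$1$ Poisson process: if $\mathbb{M}_\lambda$ denotes the law of an inhomogeneous Poisson process on $[0,T]$ with nonnegative rate $\lambda(\cdot)$, then
\[
  \frac{\df \mathbb{M}_\lambda}{\df \mathbb{M}}(\Psi) = \exp\!\left\{\int_0^T \bigl(1 - \lambda(t)\bigr)\df t\right\}\prod_{t \in \Psi} \lambda(t).
\]
Setting $\lambda(t) = M - \phi(\omega_t)$, we see that this expression agrees, up to the $\omega$-dependent exponential factor, with the unnormalized conditional density derived above. Hence the conditional law of $\Psi$ given $\omega$ must be $\mathbb{M}_{M - \phi(\omega)}$, an inhomogeneous Poisson process with rate $M - \phi(\omega_t)$.

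For completeness, one should verify the normalization matches, which is essentially the Campbell's theorem calculation already performed in the proof of Proposition~\ref{prop:target_marg}: integrating $\prod_{t \in \Psi}(M - \phi(\omega_t))$ against $\mathbb{M}$ yields $\exp\{\int_0^T (M - \phi(\omega_t) - 1)\df t\}$, which is exactly the inverse of the $\omega$-dependent exponential in the display above.

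The only thing that needs care is ensuring the proposed rate $\lambda(t) = M - \phi(\omega_t)$ is nonnegative, so that it defines a legitimate Poisson intensity. This is precisely the EA1 assumption (Definition~\ref{def:ea1}), which guarantees $\phi(\cdot) \in [0, M]$, and so there is no obstacle here. The main conceptual step is simply recognizing the product form $\prod_{t \in \Psi}(M - \phi(\omega_t))$ as the signature of a Poisson change of measure; once that link is made the conclusion is immediate.
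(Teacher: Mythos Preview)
Your proof is correct. Both you and the paper start identically: fix $\omega$, read off from equation~\eqref{eq:joint_data} that the conditional law of $\Psi$ has density proportional to $\prod_{t\in\Psi}(M-\phi(\omega_t))$ with respect to the unit-rate Poisson measure $\mathbb{M}$. Where you diverge is in how you identify this as a Poisson law. You invoke the known closed-form Radon--Nikodym derivative of an inhomogeneous Poisson process with respect to a rate-$1$ Poisson process and match the product term directly; the paper instead computes the Laplace functional of the conditional law via Campbell's theorem and recognizes it as that of a rate-$(M-\phi(\omega))$ Poisson process. Your route is slightly more elementary, since it avoids the extra step of knowing that Laplace functionals characterize point-process laws, and it also makes the normalization and the role of the EA1 bound $\phi\in[0,M]$ explicit. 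The paper's route is a touch more self-contained in that it derives the identification from Campbell's theorem alone rather than quoting the Poisson change-of-measure formula. Either way the argument is short and the two are equivalent in substance.
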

\begin{proof}
  For $\omega$ fixed to $\W$, from equation~\eqref{eq:joint_data}, $\Psi$ is a point process whose density 
  with respect to $\mathbb{M}$ is proportional to
  $\exp(-MT)\prod_{t \in \Psi} \left(M - {\phi(\W_t)}\right)$.
  Write this as $\mathbb{M}_\W$.
  For any nonnegative function $g$, the Laplace functional $\mathbb{E}_{\mathbb{M}_\W}[\exp\left\{-\sum_{t \in \Psi}g(t)\right] \propto 
  \mathbb{E}_{\mathbb{M}}\left[\exp\left\{-MT-\sum_{t \in \Psi} g(t)\right\}\prod_{t \in \Psi} \left(M- {\phi(\W_t)}\right)\right]$.
  From Campbell's theorem, this equals  
  $\exp\left\{-MT+\int ((M-\phi(\W_t))e^{-g(t)}\mathrm{d}t \right\}$, 
  which is proportional to the Laplace functional of a rate-$(M-\phi(X))$ Poisson process~\citep{kingman1992poisson}.
\end{proof}
An important point to note is that conditioned on $\W$, the distribution over $\Psi$ does not depend on the likelihood $\ell(\W)$, 
since the observations depend only on the path values $\W$. 
Instead, they enter when we update $\W$.
Our second corollary concerns updating $\W$ given the Poisson times $\Psi$. 
\begin{corollary}
  Conditioned on the Poisson times $\Psi$, the trajectory $\W$ has density with respect to $\ZZ_{\prior}$ 
  given by $h_{\W_0}(\W_T)\ell(\W_O) \prod_{t \in \Psi} \left( 1- \frac{\phi(\W_t)}{M}\right)$.
\label{corr:omega_cond}
\end{corollary}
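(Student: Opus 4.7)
My plan is to apply Bayes' rule directly to the joint density given in equation~\eqref{eq:joint_data}. For fixed $\Psi$, the conditional density of $\omega$ under $\QQp$ is proportional, as a function of $\omega$, to the joint density of $(\omega,\Psi)$; the marginal over $\omega$ provides a normalizing constant that depends on $\Psi$ only.

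The first step is to change reference measure from $\ZZ = \ZZ_{\prior,\ell}$ to $\ZZ_\prior$. By definition, $\ZZ_{\prior,\ell}$ is the Bayesian reweighting of $\ZZ_\prior$ by the likelihood, so $\df\ZZ/\df\ZZ_\prior \propto \ell(\omega_O)$. Combining this with equation~\eqref{eq:joint_data} through the chain rule for Radon-Nikodym derivatives yields
\begin{equation*}
\frac{\df\QQp}{\df(\ZZ_\prior \times \mathbb{M})}(\omega,\Psi) \;\propto\; \ell(\omega_O)\prod_{t\in\Psi}\bigl(M - \phi(\omega_t)\bigr),
\end{equation*}
and absorbing the $M^{|\Psi|}$ factor into the normalizer produces the Poisson-thinning expression $\prod_{t\in\Psi}(1 - \phi(\omega_t)/M)$. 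Fixing $\Psi$, this is the conditional density of $\omega$ with respect to $\ZZ_\prior$, up to a constant that depends on $\Psi$ alone.

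The residual $h_{\W_0}(\W_T)$ factor in the stated density reflects an adjustment of reference measure: $\ZZ_\prior$ places an $h_{\W_0}$-biased distribution on the endpoint $\W_T$, and writing the conditional density against a reference under which $\W_T$ is treated as a free variable --- the natural setting for the subsequent Gibbs updates of the endpoint and the bridge between endpoints --- reintroduces this bias factor explicitly. The main obstacle is bookkeeping: keeping the $h$-bias on $\W_T$, the Bayesian reweighting by $\ell$, and the Poisson-conditioning factor all tracked simultaneously. The Poisson-process piece itself is immediate from equation~\eqref{eq:joint_data} and requires essentially no additional work.
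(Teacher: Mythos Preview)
Your approach is correct and matches the paper's own proof: fix $\Psi$ in equation~\eqref{eq:joint_data} to read off the conditional density of $\omega$ with respect to $\ZZ=\ZZ_{\pi,\ell}$, then unwind the definition of $\ZZ_{\pi,\ell}$ to pass to $\ZZ_\pi$ via $\df\ZZ/\df\ZZ_\pi\propto\ell(\omega_O)$. Your final paragraph on the $h_{\W_0}(\W_T)$ factor is additional bookkeeping that the paper's two-sentence proof leaves implicit.
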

\begin{proof}
  Conditioned on the times $\Psi$, from equation~\eqref{eq:joint_data}, we see that $\W$ has density with respect to $\ZZ$ proportional to $\prod_{t \in \Psi} \left(1- \frac{\phi(\W_t)}{M}\right)$. 
  The result follows from the definition of $\ZZ=\ZZ_{\pi,\ell}$.
\end{proof}
The above result shows us that conditioned on the Poisson skeleton $\Psi$, the probability density of the SDE path evaluated $\Psi$ and $O$ (write this as $\W_{\Psi \cup O}$) is given by 
\begin{align}
  p(\W_{\Psi \cup O}) &\propto \prior(\W_0)h_{\W_0}(\W_T) \brobri(X_{O \cup \Psi}|0,X_0,T,X_T) \ell(\W_O) \prod_{g \in \Psi} \left( 1- \frac{\phi(\W_g)}{M}\right). 
  \label{eq:hmc_target}
\end{align}
This corresponds to a fairly typical posterior distribution in applications involving Gaussian processes~\citep{williams2006gaussian}. Here our prior over trajectories is the Brownian motion prior $\WW_\pi$, and our likelihood is $h_{\W_0}(\W_T) \ell(\W_O) \prod_{g \in \Psi} \left( 1- \frac{\phi(\W_g)}{M}\right)$.
Consequently, after conditioning on the Poisson grid $\Psi$, we do not need to calculate intractable SDE transition probabilities to calculate prior probabilities over the trajectory $\W$. 
The SDE posterior is amenable to standard Gaussian process MCMC techniques. 
For a survey of such methods, see for example~\citet{titsias2008markov}, we will use Hamiltonian Monte Carlo~\citep{neal2011mcmc}.

\vspace{-.1in}
\subsection{Gibbs sampler details}
Corollaries~\ref{corr:psi_cond} and~\ref{corr:omega_cond} provide the basis of our Gibbs sampling algorithm. 
Each iteration of this algorithm starts with a pair $(\W_{\Psi \cup O}, \Psi)$, and repeats two steps: simulate a new Poisson grid $\Psi^*$ given $(\W_{\Psi \cup O}, \Psi)$, and then simulate a new set of diffusion values $\W^*_{\Psi^* \cup O}$ given $\Psi^*$. 
Recall that the set $O$ includes the start and end times, $0$ and $T$.
There are a few issues that must be resolved to translate these into a practical algorithm. 
We detail these below. \\
\textbf{Simulating a new Poisson grid $\Psi^*$ conditioned on $\W_{\Psi\cup O}$:} 
    Corollary~\ref{corr:psi_cond} shows that conditioned on the entire trajectory $\W$, $\Psi$ is a Poisson process with rate $\{M-\phi(\W_t), t \in [0,T]\}$. 
    In practice, our sampler will only evaluate $\W$ on the current set of Poisson times $\Psi$ and on the observation times $O$. 
    To simulate the new times $\Psi^{*}$, we exploit two facts: i) that the SDE skeleton summarizes the entire trajectory, whose values at other times can be retrospectively simulated from a Brownian bridge (steps~9 and 10 in algorithm~\ref{alg:EA1}), and ii) that $M-\phi(\cdot) \le M$. 
    We will use these along with the thinning theorem to simulate from the rate $M-\phi(\W)$ inhomogeneous Poisson process. 
    We first simulate a random set of times $\Gamma$ from a rate-$M$ Poisson process, and uncover $\W_\Gamma$, the trajectory on this set of times. 
    This second step just involves simulating from Brownian bridges over intervals defined by successive elements of $\Psi \cup O$ (algorithm~\ref{alg:EA1}, steps 9 and 10). 
    Having imputed $\W$ on $\Gamma$, we keep each element $g \in \Gamma$ with probability $1-\phi(\W_g)/M$, else we discard it. 
    The set of surviving elements of $\Gamma$ is a realization from a rate $M-\phi(\W)$ Poisson process, and forms the new times $\Psi^*$. 
    Along the way, we have evaluated $\W_{\Psi^*}$, the trajectory on this set of times. 
    Finally, we discard the path evaluations on the old skeleton, since, under the new skeleton, these can easily be resampled (again,  from a Brownian bridge). 
    {{The first five panels in figure~\ref{fig:alg_demo} shows these steps, where for simplicity we have ignored observations. }} \\
\begin{algorithm}[h]
   \caption{One iteration of the proposed auxiliary variable Gibbs sampler for EA1 diffusions}
   \label{alg:gibbs}
  \begin{tabular}{l l}
  \textbf{Input:  } & \text{A distribution $\pi(\cdot)$ over $X_0$, the initial value of the SDE} \\ 
    & \text{The drift term $\alpha(\cdot)$, and the associated quantities $A(\cdot), \phi(\cdot)$ and $M$} \\ 
                      & \text{The Poisson times $\Psi$ and the corresponding path values $\W_{\Psi}$} \\
                      & \text{The SDE path values $\W_O$ on the observation times $O$ (recall $O$ includes $0$ and $T$).} \\
                  \textbf{Output:  }& \text{A new SDE skeleton $(\Psi^{*}, \W_{\Psi^*}^{*})$, and new path values on $O$, $X^*_O$}.\\
   \hline
   \end{tabular}
   \begin{algorithmic}[1]
     \State {Simulate $\Gamma$ from a rate-M Poisson process on $[0, T]$}.
     \State Define $G = \Psi \cup O$. Write its $i$th element as $g_i$, with $g_1 = 0$ and $g_{|G|}=T$.
     \For {i in $1$ to $|G|-1$}
     \State Define $\Gamma_i = \Gamma \cap (G_{i}, G_{i+1})$. 
     Impute $\W$ on $\Gamma_i$ from a Brownian bridge: 
     \State \hspace{2in} $\W_{\Gamma_{i}}\sim \brobri_{\Gamma_{i}}(t_i,\W_{t_i},t_{i+1},\W_{t_{i+1}}).$
     \EndFor 
     \State Discard each point $g \in \Gamma$ with probability  $\frac{\phi(\W_g)}{M}$.
     Write $(\Psi^{*}, \W_{\Psi^{*}})$ for the set of surviving times and the associated path values.
    \State 
    Discard everything other than $\Psi^{*}, \W_{\Psi^{*}}$ and $\W_O$. 
     \State {Update $(\W_{\Psi^{*}}, \W_O) \equiv \W_{\Psi^* \cup O}$ on $\Psi^{*} \cup O$ with a Markov kernel having stationary distribution 
\begin{align*}
  p(\W_{\Psi^*},\W_O) &\propto \prior(\W_0)h_{\W_0}(\W_T) \text{BB}(X_{O \cup \Psi^*}|X_0,X_T) \ell(\W_O) \prod_{g \in \Psi^*} \left( 1- \frac{\phi(\W_g)}{M}\right) 
  \text{\ \ (see eq.\ \eqref{eq:hmc_target}).}
\end{align*}
We use Hamiltonian Monte Carlo. Write the new values as $(X^*_{\Psi^*}, X^*_O)$. Return $(\Psi^*,X^*_{\Psi^*},X^*_O)$.}
   \end{algorithmic}
\end{algorithm}
\noindent\textbf{Updating $\W$ conditioned on $\Psi$:} Corollary~\ref{corr:omega_cond} shows that conditioned on the Poisson grid $\Psi^*$, $\W_{\Psi^* \cup O}$ has density given by equation~\eqref{eq:hmc_target}.  
    This distribution, while intractable, can be evaluated up to a normalization constant, and is thus amenable to standard MCMC techniques 
   that update $\W_{\Psi^* \cup O}$ using a Markov kernel with equation~\eqref{eq:hmc_target} as stationary distribution. 
    We carry out this update using Hamiltonian Monte Carlo~\citep{neal2011mcmc}. 
    We can exploit the Markov structure of Brownian motion to calculate the log-likelihood and its gradient in linear time (see the experiments and appendix for details). 
    HMC exploits this gradient information to efficiently explore the conditional distribution.
    At the end of this step, we have a new set of path values $(\W^*_{\Psi^*}, \W^*_O)$. 
    This is shown in the last panel of figure~\ref{fig:alg_demo}. 
    Again, we can impute the SDE path $\W^*$ at any other set of times from a Brownian bridge.
Algorithm~\ref{alg:gibbs} outlines one iteration of our Gibbs sampler.

\begin{figure}[]
  \centering
  \includegraphics[width=.32\textwidth]{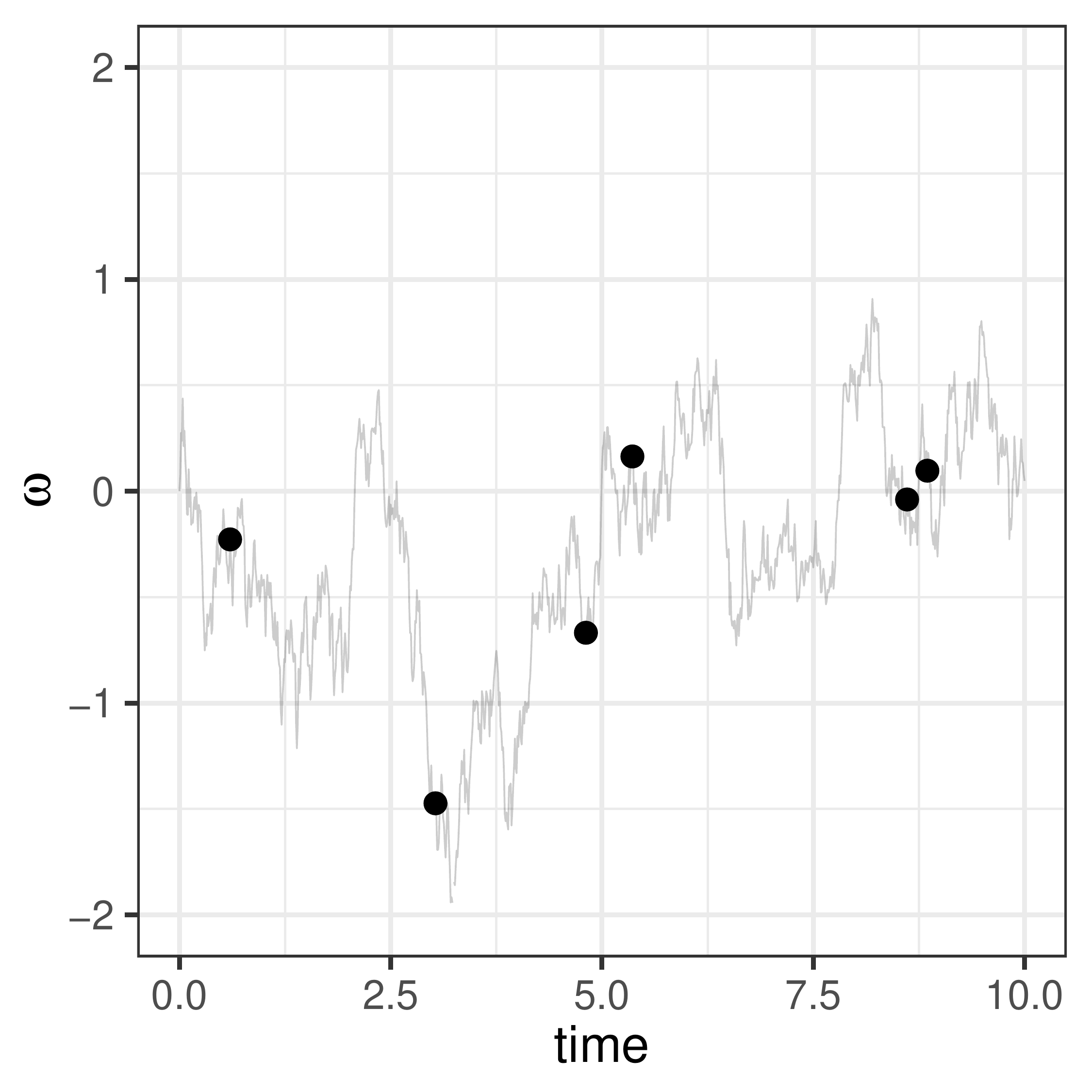}
  \includegraphics[width=.32\textwidth]{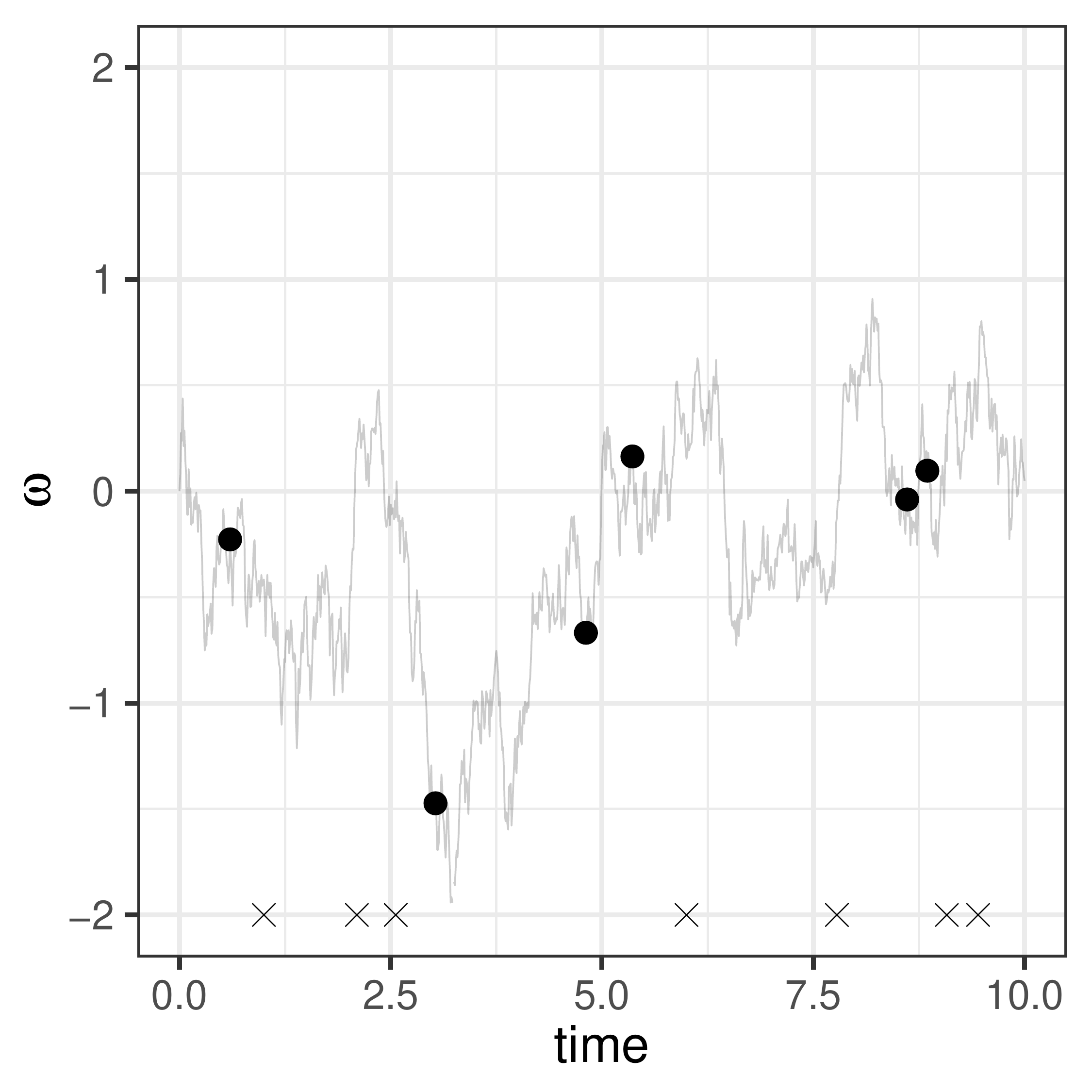}
  \includegraphics[width=.32\textwidth]{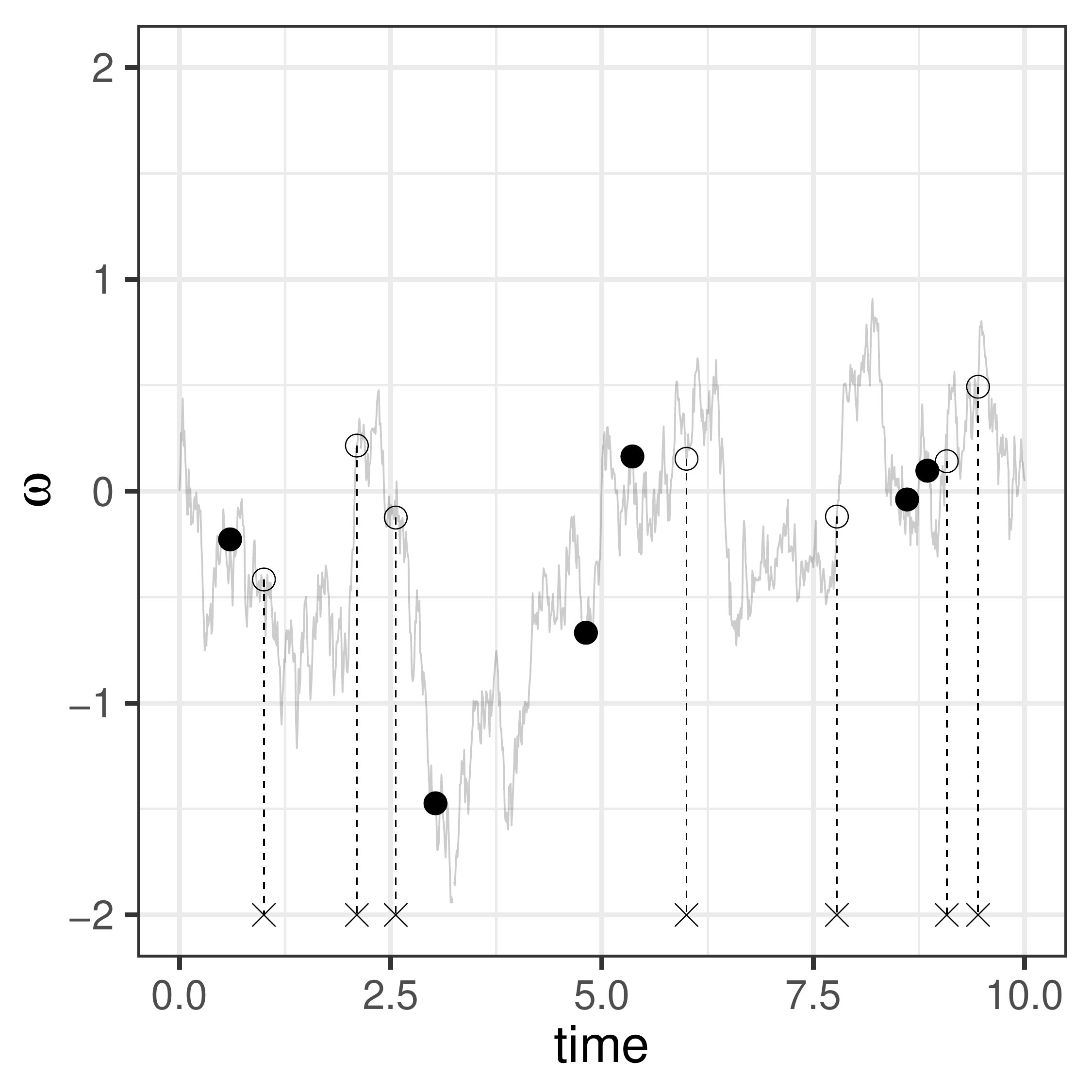}
  \includegraphics[width=.32\textwidth]{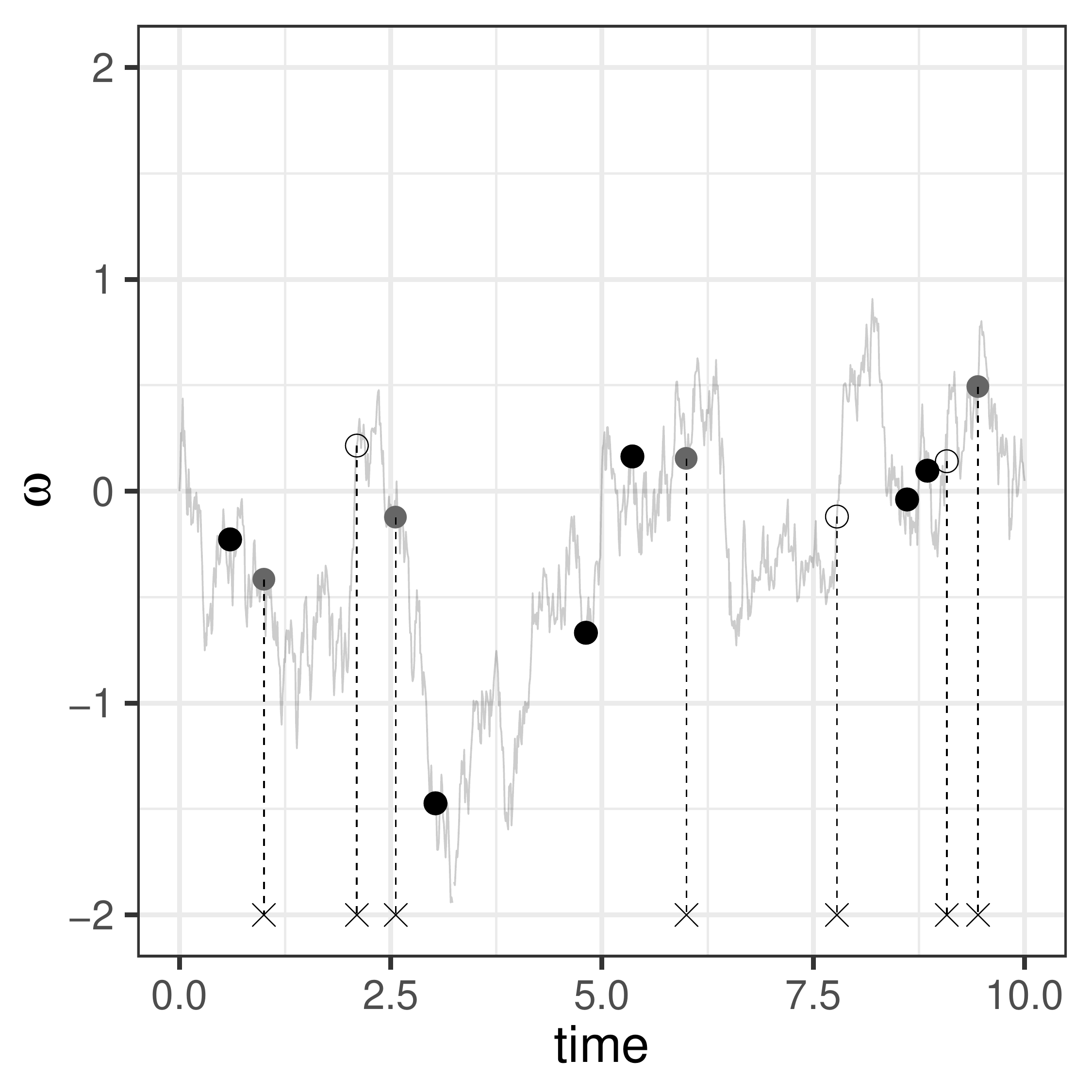}
  \includegraphics[width=.32\textwidth]{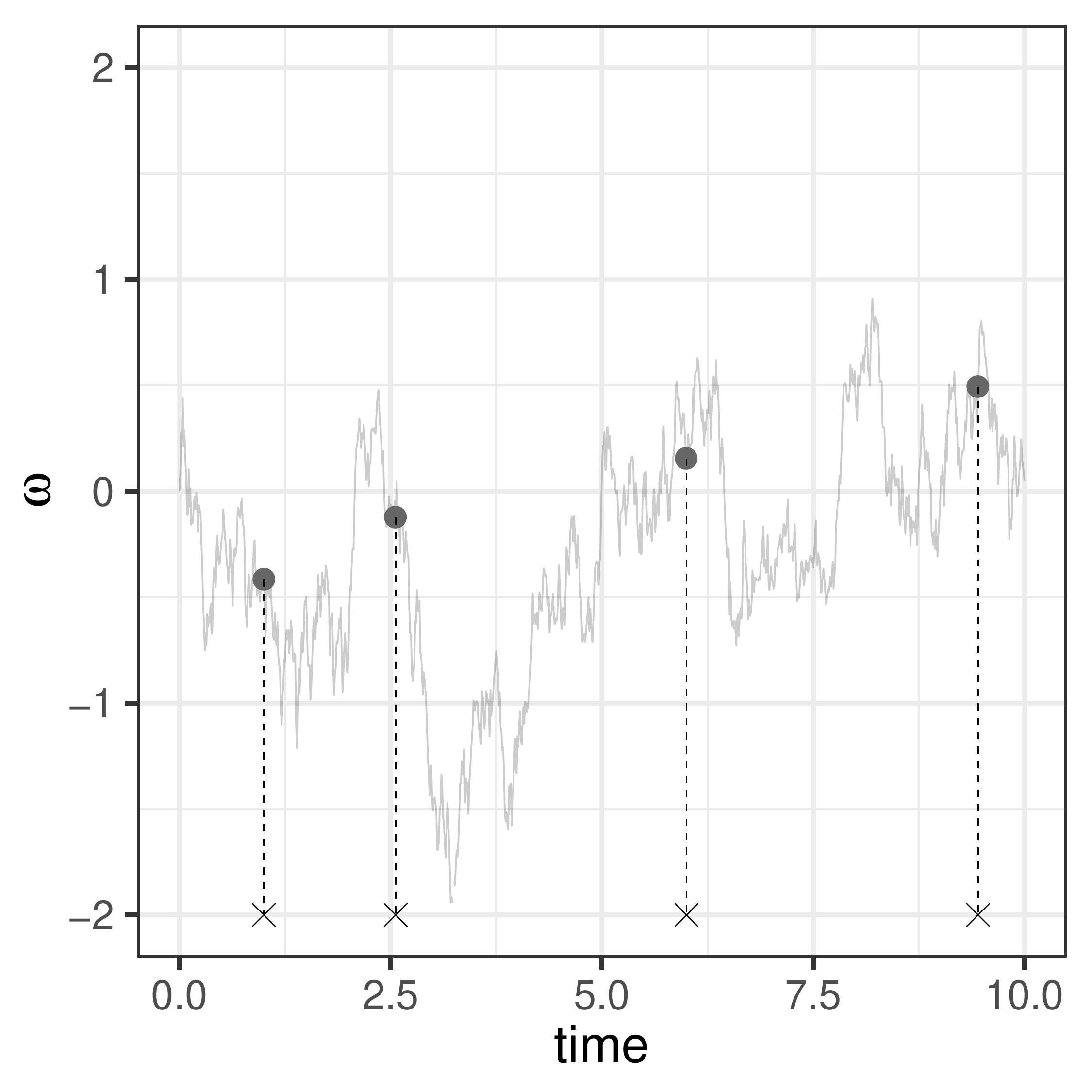}
  \includegraphics[width=.32\textwidth]{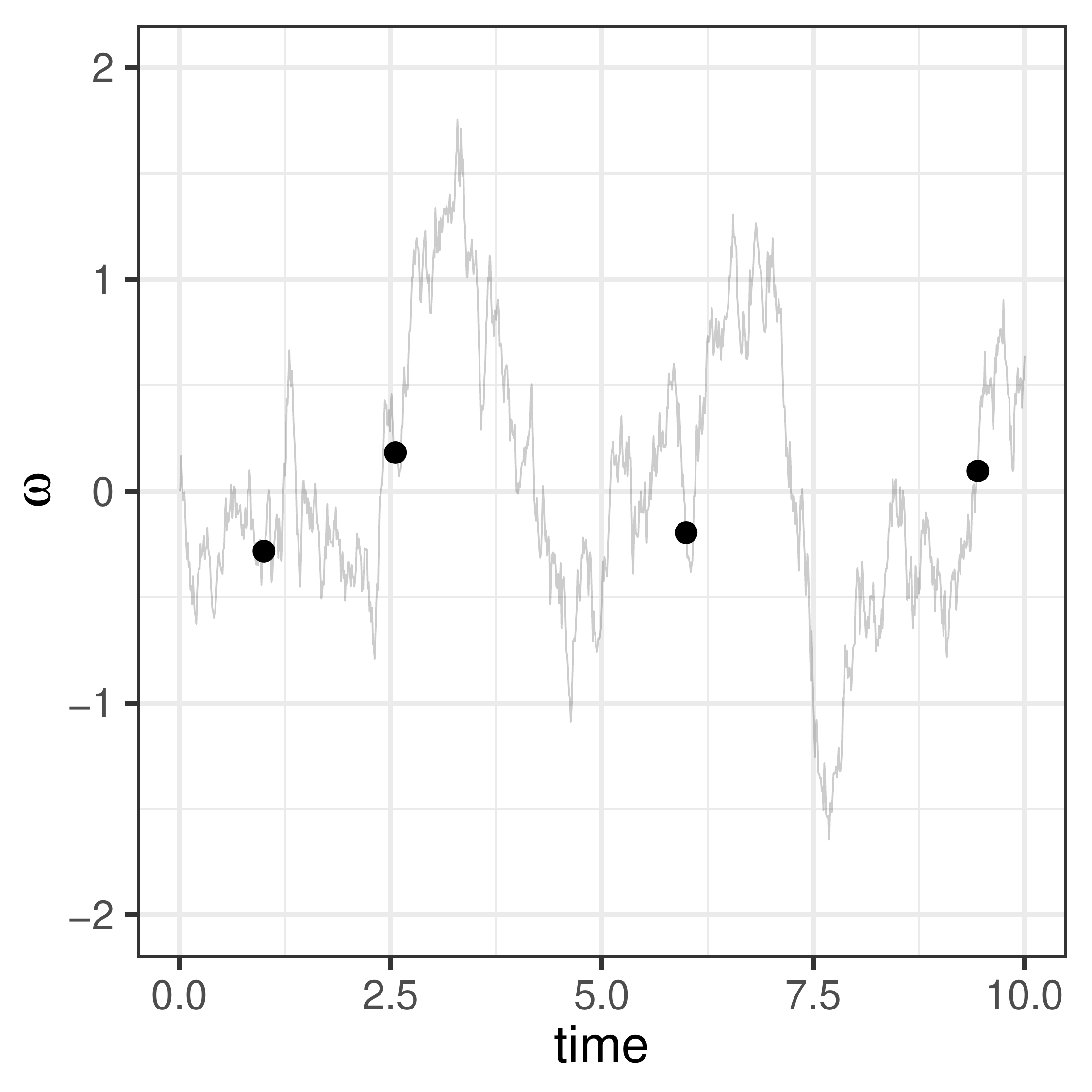}
  \caption{
    One interation of our proposed Gibbs sampling algorithm. For simplicity, we do not include observations (see Algorithm~\ref{alg:gibbs} for the general case).
    From the top-left to bottom-right: 
    1) The iteration starts with Poisson times $\Psi$ and the corresponding path values $\W_{\Psi}$. 
    This SDE skeleton is represented with the bold dots. 
    Also shown in grey is the SDE path. 
    This has not been instantiated by the algorithm, but can easily be simulated at any finite set of times from a Brownian bridge.
    2) Simulate $\Gamma$ from a rate-$M$ Poisson process on $[0, T]$ (shown as crosses). 
    3) Uncover $\W_\Gamma$, the SDE on $\Gamma$, by simulating from a  Brownian bridge (shown with hollow circles).
    4) Discard each element $g \in \Gamma$ with probability $\frac{\phi(\W_t)}{M}$. The points marked for deletion are kept hollow, while the surviving points filled in.
  5) Discard the hollow dots and the original skeleton $(\Psi,X_\Psi)$. The remaining times and values form the new skeleton, write this as $(\Psi^*,\W_{\Psi^*})$.
    6) Update $\W_{\Psi^*}$ via some standard MCMC kernel such as Hamiltonian Monte Carlo. The rest of the trajectory has also been refreshed, and can be simulated from a Brownian bridge.
  }
  \label{fig:alg_demo}
\end{figure}

For completeness, we include the following theorem which states that our sampler targets the joint measure $\ZZ^+$. Its proof is immediate (see~\citet{meyn2009}): the sampler has $\ZZ^+$ as its stationary distribution since the two Gibbs steps update the conditionals of $\ZZ^+$. The sampler is irreducible under mild conditions on the Markov kernel used to update $\W_{\Psi^* \cup O}$ given $\Psi^*$. 
\begin{theorem}
  The Gibbs sampler described above results in a Markov chain on the state space $(\Psi, \W)$ with stationary distribution $\ZZ^+(\Psi, \W) $.
\end{theorem}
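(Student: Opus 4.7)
My plan is to verify that the sampler's two sub-steps coincide with the two full conditionals of the joint target $\QQ^+$ defined in equation~\eqref{eq:joint_data}, and then appeal to standard Gibbs sampler theory for invariance and irreducibility. Since Corollaries~\ref{corr:psi_cond} and~\ref{corr:omega_cond} already identify both conditionals of $\QQ^+$ in closed form, most of the work is to check that the operational steps of Algorithm~\ref{alg:gibbs} realise them exactly.

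For the $\Psi$-update, Corollary~\ref{corr:psi_cond} tells us that $\Psi \mid \W$ under $\QQ^+$ is an inhomogeneous Poisson process on $[0,T]$ with rate $M-\phi(\W_t)$. Algorithm~\ref{alg:gibbs} draws this via thinning: simulate $\Gamma$ from a rate-$M$ Poisson process and keep each point $g \in \Gamma$ independently with probability $1-\phi(\W_g)/M$. By the thinning theorem~\citep{Lewis1979}, the retained set is Poisson with rate $M \cdot (1 - \phi(\W_g)/M) = M-\phi(\W_g)$. The only subtlety is that evaluating $\phi(\W_g)$ requires uncovering the latent path at points where it has not yet been instantiated; this is done by Brownian-bridge imputation between successive elements of $\Psi \cup O$. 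Since under $\QQ^+$ the path $\W$ between any two uncovered points has the law of a Brownian bridge conditioned on those endpoints (this is a direct consequence of $\QQ^+$ having $\ZZ$-marginal on the path and $\ZZ$ being built from Brownian bridges), the retrospective simulation is exact.

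For the $\W$-update, Corollary~\ref{corr:omega_cond} gives the conditional of $\W_{\Psi^* \cup O}$ given $\Psi^*$, with unnormalised density~\eqref{eq:hmc_target}. A correctly-implemented Hamiltonian Monte Carlo kernel with this density as its stationary distribution then preserves $\QQ^+$ by construction. Composing the two moves, each of which preserves $\QQ^+$, yields a Markov chain with $\QQ^+$ invariant; this is the standard two-block Gibbs argument~\citep{meyn2009}. Irreducibility of the joint chain follows from mild assumptions: the thinned-Poisson step can, with positive probability, reach any measurable configuration of $\Psi^*$ (including the empty one), while standard conditions on the HMC step (positive step size and integration length, positive-definite mass matrix, continuity of $\phi$ and $\alpha$) make the path kernel irreducible and aperiodic on the support of~\eqref{eq:hmc_target}.

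The main obstacle is measure-theoretic bookkeeping: the sampler stores only a finite, changing set of path values, while $\QQ^+$ lives on the infinite-dimensional space $\mathcal{C} \times \mathcal{M}$. One must verify that instantiating $\W_\Gamma$ by Brownian-bridge imputation, thinning to obtain $(\Psi^*, \W_{\Psi^*})$, and discarding all other path values is equivalent to a projection of a move on the full state space, so that the marginalisation in Proposition~\ref{prop:target_marg} interacts correctly with each Gibbs step. This is precisely the retrospective-sampling device developed in~\citet{beskos2005exact, beskos2006retrospective}, and the cleanest proof would invoke their framework rather than re-deriving consistency from scratch.
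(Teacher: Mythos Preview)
Your proposal is correct and takes essentially the same approach as the paper: both reduce to the observation that the two Gibbs steps sample from the full conditionals identified in Corollaries~\ref{corr:psi_cond} and~\ref{corr:omega_cond}, followed by an appeal to standard Gibbs-sampler invariance~\citep{meyn2009}. The paper's own proof is a single sentence, and your additional detail on thinning, retrospective imputation, and irreducibility simply makes explicit what the paper leaves to the surrounding text; note also that you correctly write the target as $\QQ^+$ rather than the $\ZZ^+$ appearing in the theorem statement, which is evidently a typographical slip in the paper.
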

\begin{proof}
  This follows immediately from the fact that two steps of the Gibbs sampler target the conditional distributions of $\ZZ^+(\Psi,\W)$.
\end{proof}

\section{Parameter inference} \label{sec:param}
Following~\citet[][Section 9]{beskos2006exact}, we extend our methodology to include posterior inference over the parameter $\theta$ in equation~\eqref{eq:sde_general}.
Our scheme absorbs the earlier trajectory update into a larger Gibbs sampler that also updates $\theta$ given the trajectory.
We start with equation~\ref{eq:joint_unnorm}, making explicit normalization constants that depend on $\theta$.
Recall $\QQ_x$ and $\WW_x$ are measures over paths starting at $x$ under the SDE and Brownian motion.
Let $\WW_{x,y}$ correspond to the Brownian bridge joining $\W_0=x$ to $\W_T=y$.
Integrating out the path between $0$ and $T$ gives the transition density 
\begin{align}
  p(\W_T=y|\W_0=x,&\theta) = N(y|x,T)\mathbb{E}_{\WW_{x,y}}\left[
  \exp\left\{ A_\theta(y) - A_\theta(x) -\frac{1}{2}
\int_0^T \left(\alpha_\theta^2(\W_t) + \alpha_\theta'(\W_t) \right) \df t \right\} \right] \nonumber \\ 
  &= N(y|x,T) \exp\left\{ A_\theta(y) - A_\theta(x) - 
    {L_\theta T}\right\} 
    \mathbb{E}_{\WW_{x,y}}\left[\exp\left\{ -\int_0^T \phi_\theta(\W_t)  \df t \right\}\right]
                 . 
\end{align}
On the other hand, from equation~\eqref{eq:joint_data}, conditioned on its endpoints, the trajectory and Poisson events have density with respect to the measure $\WW_{x,y}\times \mathbb{M}$ given by
\begin{align}
  p(\W, \Psi| \W_0=x, \W_{T}=y, \theta) &\propto  
   \exp(-M_\theta T) \prod_{g \in \Psi} \left( M_\theta - {\phi_\theta(\W_g)}\right) 
\end{align}
From Campbell's theorem, the normalization constant, obtained by integrating out both $\W$ and 
$\Psi$ is just the last term in the earlier equation.
Multiplying the above two equations, we get
\begin{align} 
  p(\W, \Psi| \W_0, \theta) &= N(\W_T|x,T) 
  \exp\left\{ A_\theta(\W_T) - A_\theta(\W_0) - 
  (M_\theta + {L_\theta })T\right\} 
  \prod_{g \in \Psi} \left( M_\theta- {\phi_\theta(\W_g)}\right).  \label{eq:param_joint}
\end{align}
Given observations at times $O$ in an interval $[0,T]$, we break the interval into segments $[o_{i-1},o_i]$, calculating the transition density across each segment as above.
The total density is the product of these. With a prior $p(\theta)$ over $\theta$, and dropping terms that do not depend on $\theta$, we have the following expression for the posterior over $\theta$ (see also Theorem 3 in~\citet{beskos2006exact}):
\begin{align}
  p(\theta|\psi, \W) \propto & p(\theta)  \ell_\theta(X_O)
  \exp\left\{ A_\theta(\W_T) - A_\theta(\W_0) - 
  (M_\theta + {L_\theta })T\right\} \nonumber \\
    & 
    \pi(X_0) \prod_{i=2}^{|O|} N(X_{o_i}|X_{o_{i-1}}, (o_i - o_{i-1}))
    \prod_{g \in \Psi} \left( M_\theta- {\phi_\theta(\W_g)}\right) 
\end{align}
Above, we allow the likelihood $\ell(\cdot)$ to also depend on $\theta$.
Simulating from this distribution is straightforward, and we do this using a Metropolis-Hastings step.
Our overall Gibbs sampler then alternates the two steps of algorithm~\ref{alg:gibbs} with a step to update $\theta$.
We point out that following ideas from~\citet{beskos2006exact}, we can use {\em non-centered reparametrizations}~\citep{papas2007general} that reduce coupling 
between diffusion paths and the parameter $\theta$. 
This is especially important when $\theta$ also affects the diffusion term $\sigma$: this situation requires some care, and we refer the reader to~\citet{beskos2006exact} for more 
details.

\vspace{-.3in}
\section{Related work}
\label{sec:related}
\vspace{-.1in}
Traditional approaches to simulating from an SDE involve time-discretization methods like the Euler-Maruyama method or Millstein's method. 
Time-discretization also simplifies posterior simulation, opening up the vast literature on MCMC sampling for discrete-time time-series models. 
Example methods include particle MCMC~\citep{andrieu2010particle}, the embedded HMM~\citep{neal2004inferring}, Hamiltonian Monte Carlo~\citep{neal2011mcmc} among many others. 
Discrete-time approximations however introduce bias into the simulations, and characterizing their effect in hierarchical models is not easy.
This makes it necessary to work with fine grids, resulting in long time-series and expensive computation. 
Further, controlling bias in this manner uncovers more of the diffusion, increasing coupling and degrading MCMC mixing~\citep{liu1994fraction, roberts2001inference}.

There are few approaches towards {\em exact} or {\em unbiased} estimation for diffusions to eliminate discretization error.
As described in subsection~\ref{sec:ea1}, our approach builds on a line of work starting from~\citet{beskos2005exact}, who proposed a rejection sampling algorithm allowing exact simulation from the EA1 class of SDEs.
%
%
Section~\ref{sec:bayes_model} shows how this prior similation method can be extended to posterior simulation given noisy observations.
Like our method, this involves instantiating the diffusion skeleton (the Poisson times and associated diffusion values), as well as the diffusion values on observation times. 
However, as we described, this algorithm alternately updates the diffusion skeleton given the values at observation times, and vice versa. 
By contrast, our algorithm updates the {\em entire} set of path values given the Poisson times, and then Poisson times given path values, reducing the coupling between the Gibbs steps.
Furthermore, this extension still involves the EA1 rejection sampling algorithm, and can have high rejection rates.
Controlling this requires instantiating more of the diffusion on additional grid points~\citep{beskos2006retrospective}, which will slow down mixing.
Our MCMC algorithm does not face this problem. 

In~\citet{fearnhead2008particle}, the authors propose another unbiased discretization-free algorithm, a random-weight particle filter to approximate the posterior distribution. 
This is a sequential Monte Carlo algorithm that targets the augmented distribution in equation~\eqref{eq:param_joint}. 
This algorithm is consistent as the number of particles tends to infinity, and for finite number of particles, can be incorporated into a particle MCMC scheme, giving an MCMC algorithm that targets the posterior without any error. 
There have been a number of follow-up papers improving~\citet{fearnhead2008particle}, whether by devising better proposal distributions or by developing particle smoothing algorithms that improve effective sample sizes and allow parameter inference~\citep{olsson2011particle, gloaguen2017online}. 
%
%
Another line of work for unbiased estimation with SDEs~\citep{rhee2015unbiased} builds on {\em multi-level Monte Carlo} (MLMC) methods~\citep{giles2008multilevel}.  
These methods involve picking a random time-discretization granularity, so that the interval $[0,T]$ is uniformly split into $2^g$ subintervals for a random $g$. 
With some care, the resulting algorithms allow unbiased estimation of path functionals of the SDE, and can be used for posterior estimation~\citep{Jasra2020}. 
These methods, along with some of the earlier particle methods, have the advantage of being applicable to a wider class of SDEs than we considered here, in particular they do not require the availability of a Lamperti transformation, and thus apply to more general multi-dimensional diffusions. 
%
Our HMC based-approach is quite different from these, and an interesting line of work is to use ideas from each to improve the other.

\vspace{-.2in}
\section{Experiments}
\vspace{-.1in}
\label{sec:exp}
In the following, we evaluate our sampler and a number of baselines on synthetic and real datasets. 
Our first baseline is the EA1 rejection sampling algorithm of~\citet{beskos2005exact}, we use this in settings where we want to simulate from an SDE prior, allowing us to study trade-offs between producing cheap but dependent samples from our MCMC algorithm, and producing independent samples at the possible cost of high rejection rates. 
Our second baseline is an approximate Markov chain Monte Carlo sampling algorithm, and uses Euler–Maruyama discretization to construct a particle Markov chain Monte Carlo (pMCMC) sampler. 
pMCMC~\citep{andrieu2010particle} is a standard and relatively off-the-shelf tool to simulate from nonlinear hidden state-space models. It makes proposals from a particle filtering algorithm, which are then accepted or rejected with appropriate probability. 
Algorithm~\ref{alg:partMCMC} in the appendix outlines the details of the algorithm,
we considered time-discretization levels of $0.1$ and $0.01$. 
Our last baseline is the unbiased random-weight particle filter of~\citet{fearnhead2010random}. 
We considered both this particle filter, as well as an exact pMCMC algorithm based on it. 
For both pMCMC algorithms, we considered a variety of settings for the number of particles, reporting results with 50 particles:
this usually gave best performance, with run-time  becoming unmanageably long with more than 200 particles.
We will refer to the time-discretizated pMCMC algorithm as \eul, the random-weight particle filter as \pf, and the exact pMCMC algorithm as \fearn.
All experiments were carried out on a desktop with an Intel(R) Core(TM) i7-3770 CPU @ 3.40GHz and 16GB RAM.

\vspace{-.1in}
\subsection{Example 1: The hyperbolic bridge}
\label{sec:hyp}
Consider the hyperbolic bridge, a special case of the hyperbolic diffusion~\citep{barndorff1978hyperbolic}: 
\begin{align} \label{eq:hyper1}
  \df X_t = -\frac{\theta X_t}{\sqrt{1+X_t^2}} \df t + \sigma \df B_t,  \quad \theta > 0.
\end{align}
As stated in Section~\ref{sec:sdes}, we fix the diffusion parameter $\sigma$ to 1. When we are not updating $\theta$, we fix it to $1$. 
It is easy to verify that the drift $\alpha(x) = -\frac{\theta x}{\sqrt{1+x^2}}$ satisfies the assumptions of Girsanov's theorem. We can calculate $A(x) = \int_0^x \alpha(u) \df u = \theta-\theta \sqrt{1 + x^2}$ and $\alpha'(x) = -\frac{\theta}{(1+x^2)^{3/2}}$, showing that $\frac{1}{2}(\alpha^2(x) + \alpha'(x)) =  \frac{1}{2}(\frac{\theta^2 x^2}{1+x^2} - \frac{\theta}{(1+x^2)^{3/2}})$ lies in $[-\frac{\theta}{2},\frac{\theta^2}{2}]$. We set 
\begin{align} 
  \phi(x) := \frac{1}{2}(\alpha^2(x) + \alpha'(x)) + \frac{\theta}{2}= \frac{1}{2}\left(\frac{\theta^2 x^2}{1+x^2} - \frac{\theta}{(1+x^2)^{3/2}}\right) + \frac{\theta}{2}. 
\end{align} 
This lies in the interval $[0,\frac{\theta^2}{2} + \frac{\theta}{2}]$. Accordingly, the EA1 Poisson process intensity $M$ equals $\frac{\theta^2}{2} + \frac{\theta}{2}$, and the associated $h$-biased Brownian bridge has $h_x(\omega_T) \propto \exp{\left(-\theta\sqrt{1 + \omega_T^2} + \theta\sqrt{1 + x^2} - \frac{(\omega_T - x)^2}{2T}\right)}$.


\noindent\textbf{Tuning the HMC sampler:} A key step of our Gibbs sampler involves conditionally updating the SDE trajectory $X$ given the Poisson grid $\Psi$, following equation~\eqref{eq:hmc_target}. 
We implement a Markov kernel that targets this conditional distribution using Hamiltonian Monte Carlo (HMC)~\citep{neal2011mcmc}, a widely used MCMC algorithm. 
We provide more details of this in the appendix, at a high-level this requires computing the gradient of the log of the joint probability specified in equation~\eqref{eq:hmc_target}.
HMC requires tuning three parameters $M, N$ and $\epsilon$, corresponding respectively to a mass matrix, the number of leapfrog steps and the leapfrog stepsize. 
The latter two govern the leapfrog symplectic approximation to the Hamiltonian dynamics that HMC uses to update $\W$. 
We obtained best performance for $M$ between 10 to 100 times the identity matrix and chose the latter (see~\citet{neal1996sampling, beskos2011hybrid} for more sophisticated approaches to setting $M$). We tried a range of values for both the size $\epsilon$ and number $N$ of leapfrog steps ($\{0.1, 0.2, 0.5, 1, 2\}$ and $\{1, 2, 5, 10\}$ respectively). 
We evaluate these for three problems, corresponding to simulating the hyperbolic SDE on intervals with length $T$ equal to $10, 20$ and $50$. 
For each combination of $\epsilon, N$ and $T$, we produced $10000$ samples from our sampler. 
\begin{figure}[]
  \centering
  \includegraphics[width=.32\textwidth]{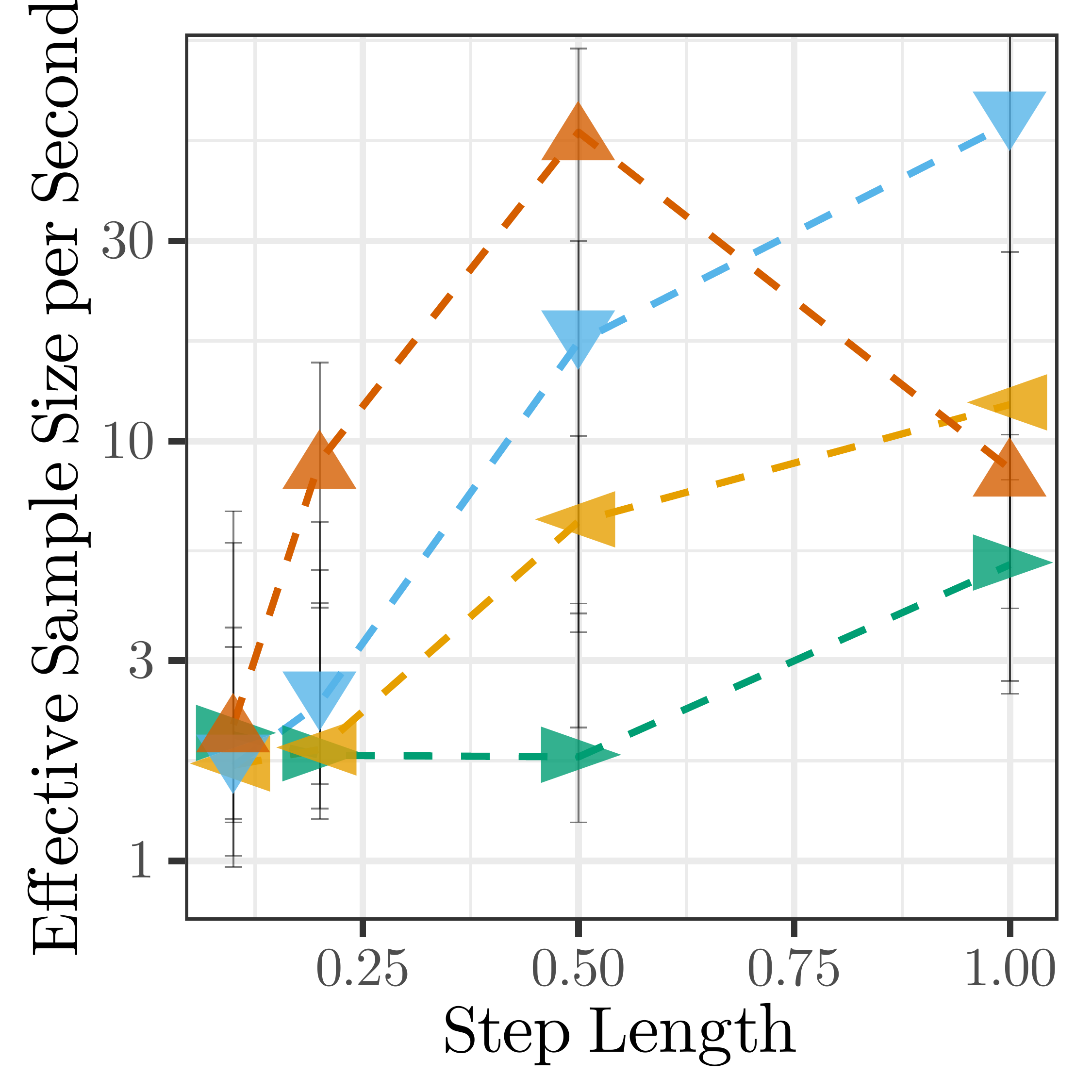}
  \includegraphics[width=.32\textwidth]{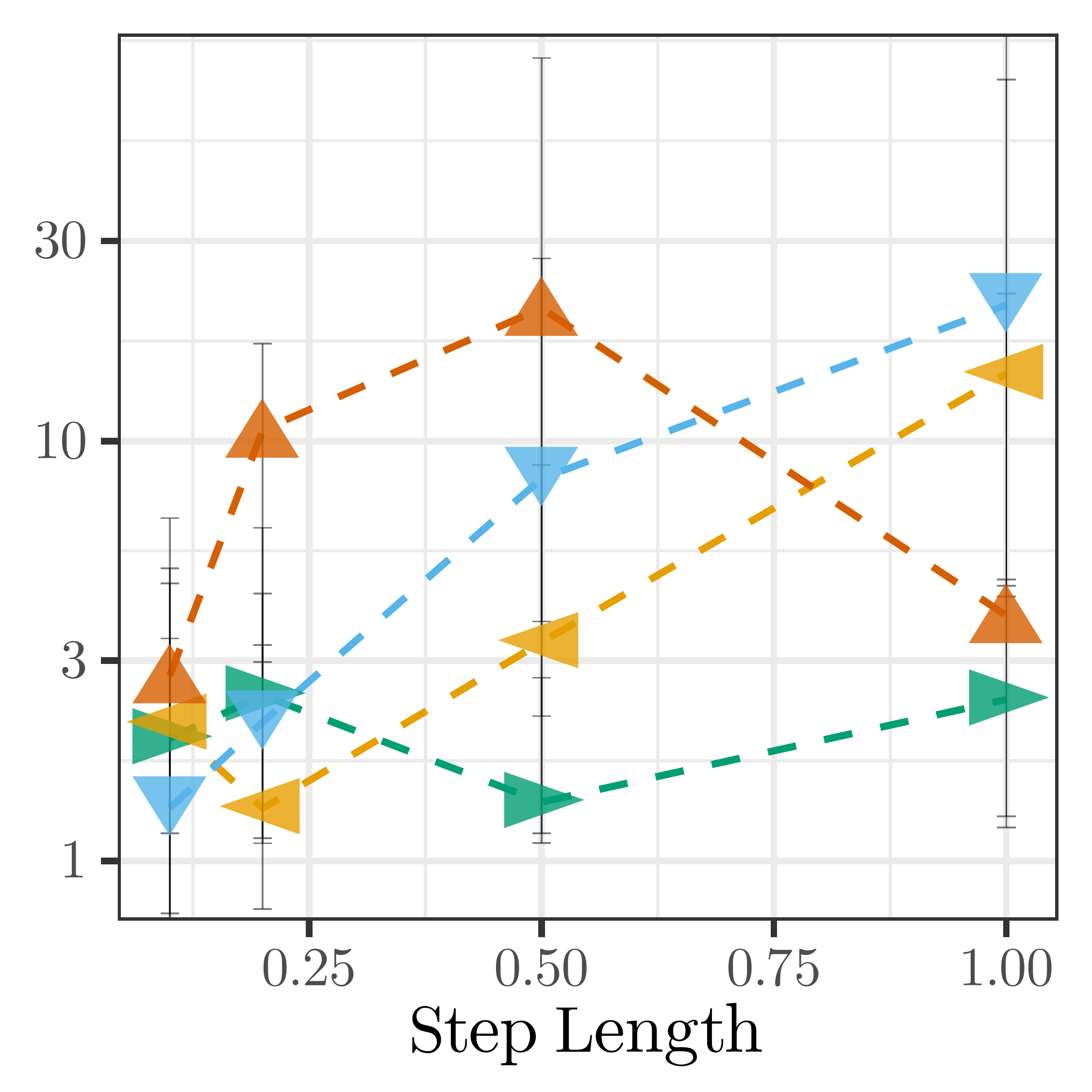}
  \includegraphics[width=.32\textwidth]{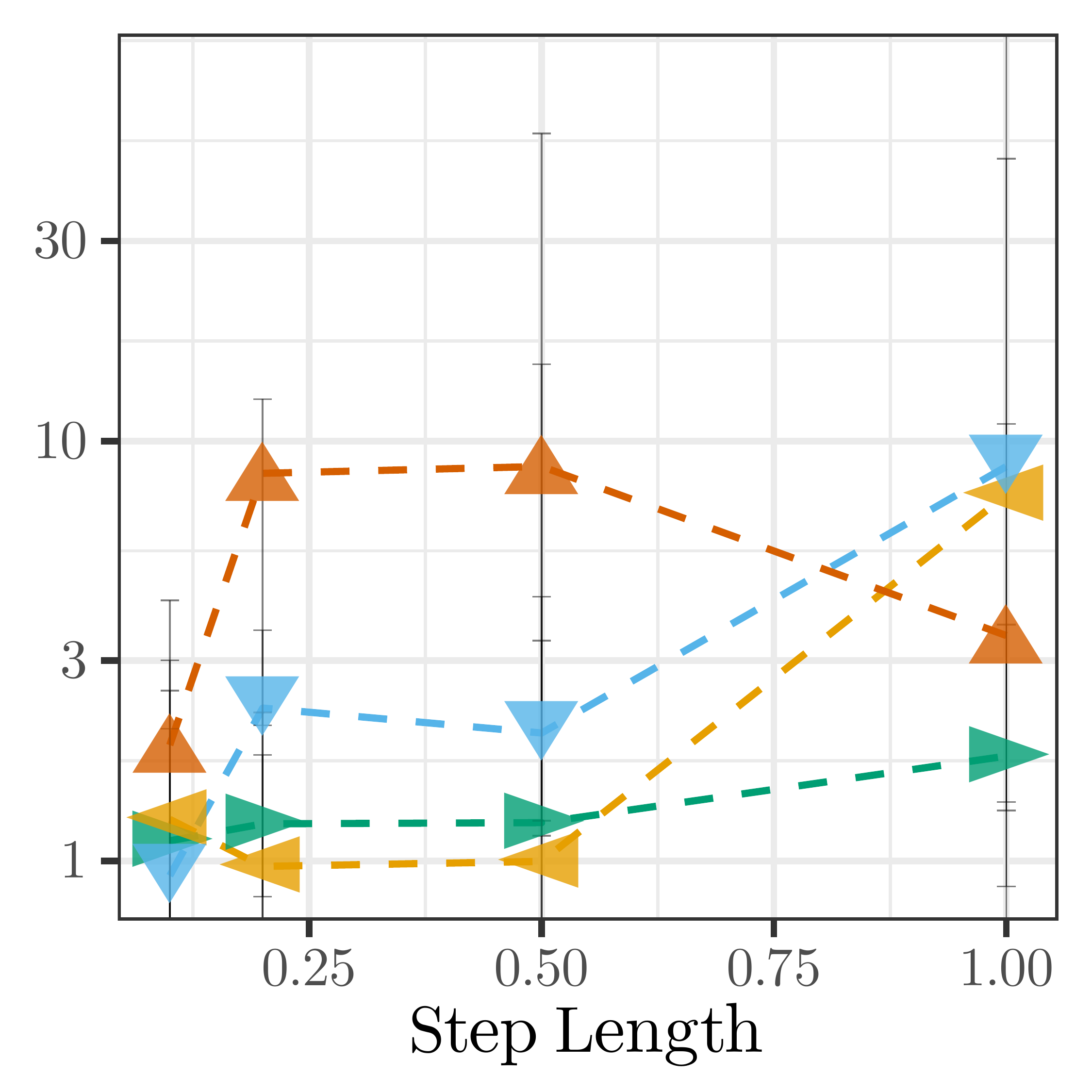}
  \caption{Effective sample size per second (ESS/s) for different settings of stepsize $\epsilon$ (x-axis), each curve being a different number of leapfrog steps $N$ for the HMC sampler. From left to right, the three panels fix $T = 10, 20$ and $50$ respectively. Different symbols represent the different values of $N$: {$\blacktriangleright$} represents 1 step, $\blacktriangleleft$ represents 2 steps,
  {$\blacktriangledown$} represents 5 steps and {$\blacktriangle$} represents 10 steps.}
  \label{fig:hyper_para}
\end{figure}

To evaluate sampler performance, we calculate the effective sample size (ESS) of $X_{T/2}$, the diffusion evaluated at $T/2$, the midpoint of the simulation interval. ESS estimates the number of independent samples that the MCMC output is equivalent to, and we calculated this using the \texttt{R} package \texttt{rcoda}~\citep{plummer2006coda}. 
To account for the different settings having different computational cost, we divide ESS by the compute time, yielding effective sample size per second (ESS/s) as our metric of sampler efficiency. 

From left to right, the three panels of figure~\ref{fig:hyper_para} fix $T$ to $10, 20$ and $50$, and plot ESS/s for different settings of $N$ and $\epsilon$. 
Based on this, we choose a fairly standard setting, with the stepsize $\epsilon$ equal to $0.2$ and number of steps $N$ equal to $5$. 
This configuration also performs adequately for other SDEs that we consider, moreover the algorithm does not show strong sensitivity to the choice of these tuning parameters. 

\noindent\textbf{Prior simulation:} Using these HMC parameters, we compare the efficiency of our sampler with two baselines, a simple approximation based on Euler-Maruyama discretization, and the exact EA1 rejection sampler. 
We used each method to simulate $10000$ trajectories from the hyperbolic diffusion. 
For each setting, we carried out 10 repetitions to produce error bars (not visible sometimes due to low variance), and plot ESS/s of $\W_{T/2}$ against $T$ in Figure \ref{fig:hyper_hmc_ea_euler}.

\begin{figure}[]
  \centering
  \begin{minipage}[hp]{0.4\linewidth}
  \centering
    \vspace{-0.1 in}
    \includegraphics[width=.99\textwidth]{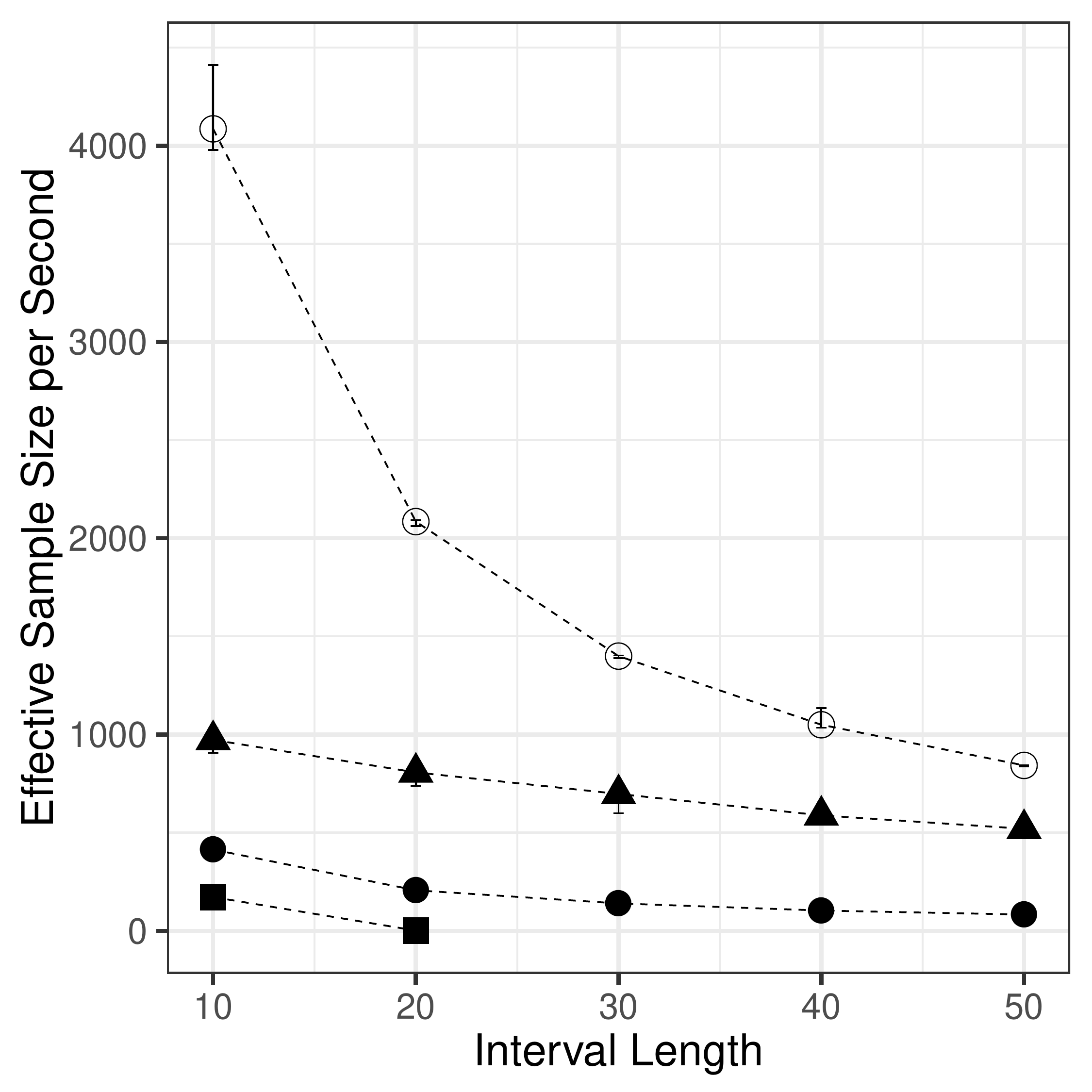}
    \vspace{-0.1 in}
  \end{minipage}
  \begin{minipage}[hp]{0.54\linewidth}
    \vspace{-0.6 in}
    \caption{ESS/s against simulation interval $T$ for the hyperbolic bridge prior. {\footnotesize $\blacktriangle$} represents our Gibbs sampler, $\bullet$ and $\circ$ represents Euler–Maruyama method when stepsize equals 0.01 and 0.1 respectively, and {\tiny $\blacksquare$} represents EA1. 
    Due to low acceptance rates, we did not run EA1 for interval lengths longer than 20. 
  }
  \label{fig:hyper_hmc_ea_euler}
    \vspace{-0.1 in}
  \end{minipage}
    \vspace{-0.4 in}
\end{figure}
Unsurprisingly, we observe that the Euler-Maruyama approximation with the coarsest time-discretization of $0.1$ is the most efficient algorithm computationally. 
Note though that this is an approximate algorithm. We can improve its accuracy by using a finer grid, a typical setting being a grid with resolution $0.01$. 
Interestingly, for this setting, even after correcting for the dependent samples, our MCMC algorithm is more efficient that Euler-Maruyama, with the Poisson grid allowing much fewer evaluations of the SDE trajectory. 
Additionally, the gap between our sampler and the $0.1$-grid Euler-Maruyama sampler reduces with $T$ through a combination of faster run-times and reduced dependency between MCMC samples. 
All algorithms were significantly more efficient that the exact EA1 algorithm, and for interval lengths greater than $20$, the acceptance rates (which decay exponentially with interval length) became too small to produce samples in a reasonable amount of time. 
As mentioned in Section~\ref{sec:bayes_model}, it is possible to reduce rejection rates by breaking the interval into smaller segments. 
However, noting the poor performance of the algorithm even for smaller intervals, we did not investigate this further. 

\noindent\textbf{Posterior simulation:} Our main interest is in settings where we wish to simulate from the diffusion conditioned on noisy measurements. 
We consider the following setting: additive Gaussian noise with mean $0$ and standard deviation $0.2$, at regularly spaced times on $[0,T]$. 
We compare with the two particle Markov Chain Monte Carlo algorithms, \eul\ and \fearn, running these with 50 particles and \eul\ with a discretization level of 0.01.
We considered two settings, first where we varied the length of the time interval $T$, keeping the number of observations fixed at $20$ (figure~\ref{fig:hmc_hyp_nObs_tEnd}, left panel), and second, where we varied the number of observations keeping $T=20$ (figure~\ref{fig:hmc_hyp_nObs_tEnd}, middle panel). 
All samplers were run for $10000$ iterations, and each setting was repeated 10 times to produce error bars. 
\begin{figure}[]
  \centering
  \begin{minipage}[hp]{0.32\linewidth}
  \centering
    \vspace{-0.1 in}
    \includegraphics[width=1\textwidth]{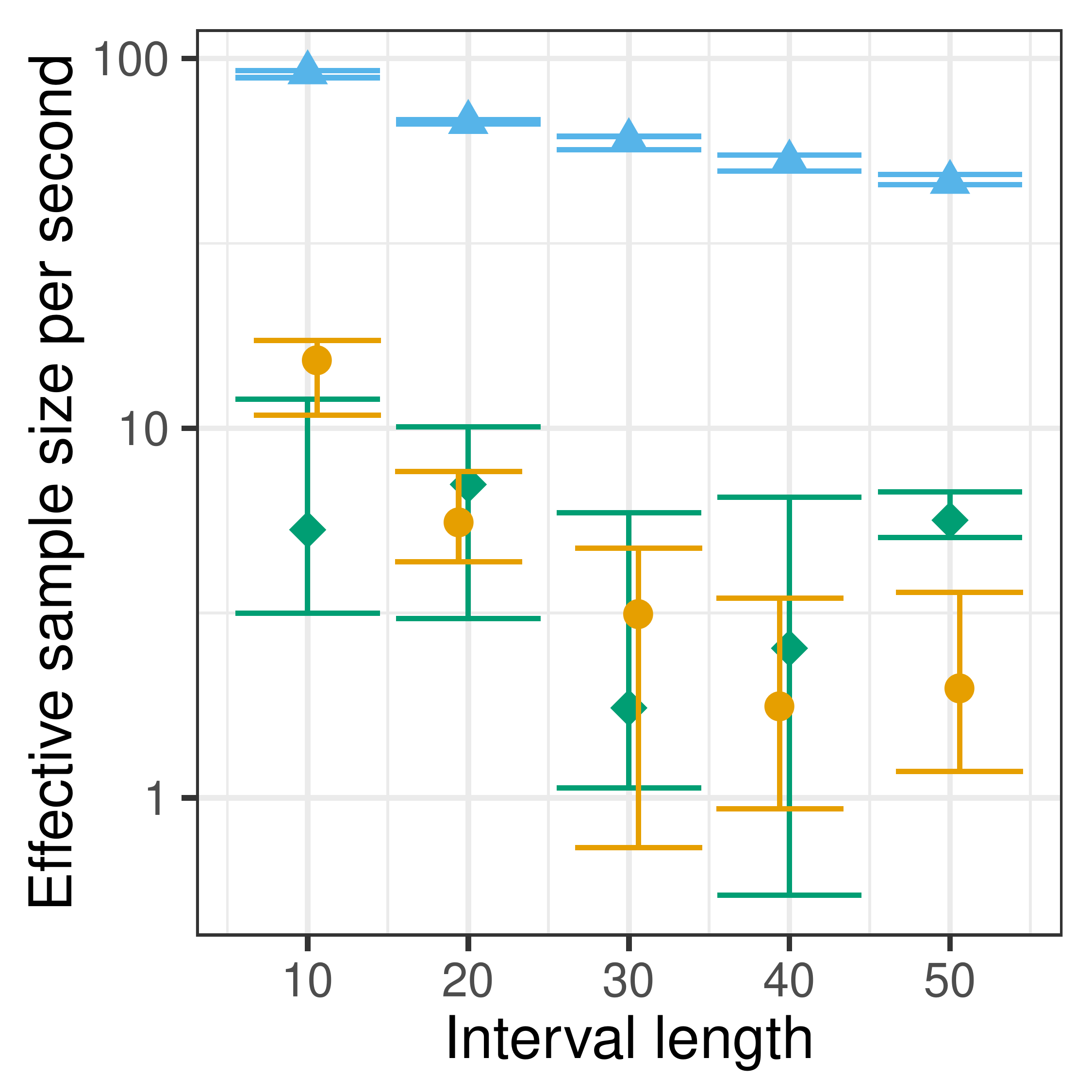}
    \vspace{-0.1 in}
  \end{minipage}
  \begin{minipage}[hp]{0.32\linewidth}
  \centering
    \vspace{-0.1 in}
    \includegraphics[width=1\textwidth]{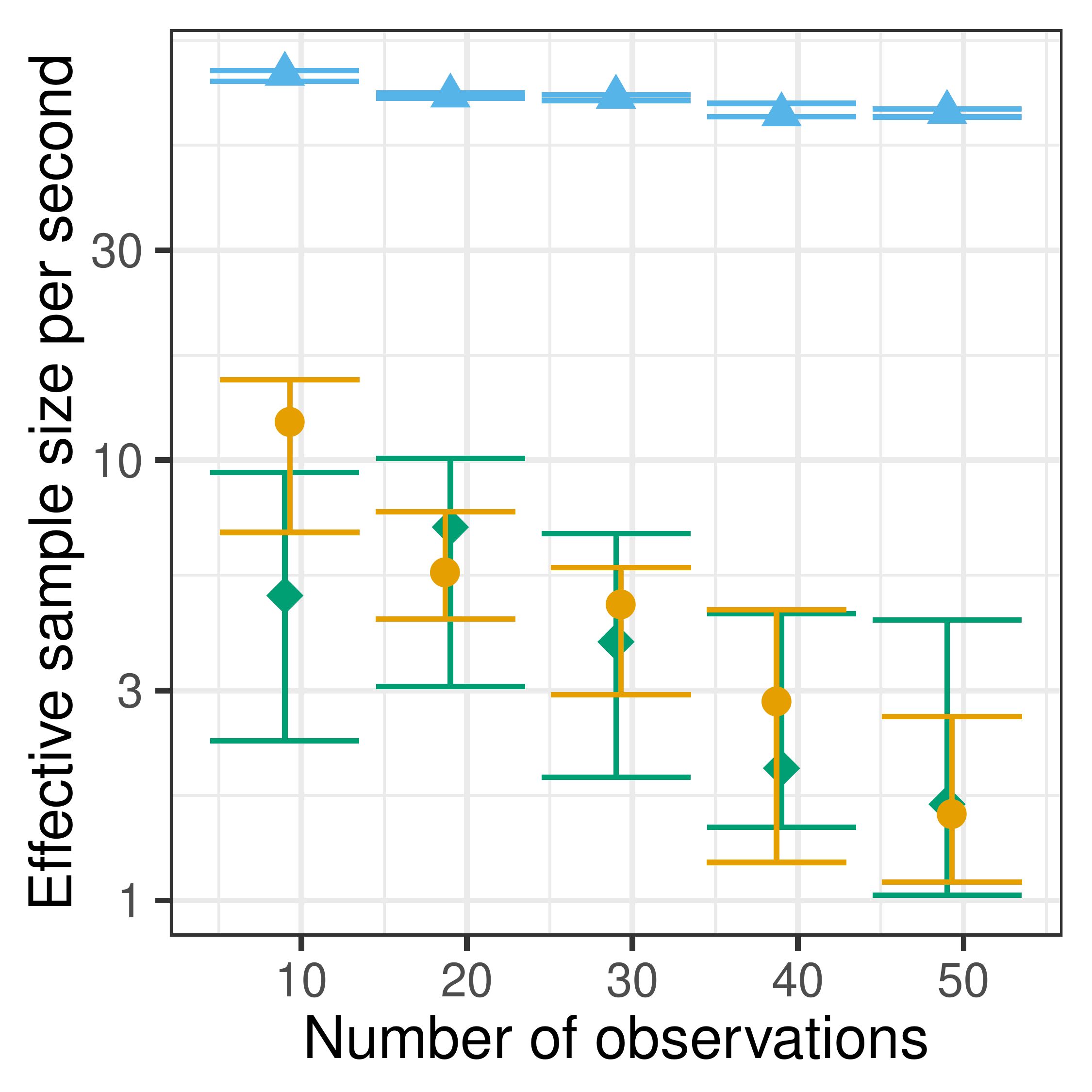}
    \vspace{-0.1 in}
  \end{minipage}
  \begin{minipage}[hp]{0.32\linewidth}
  \centering
    \vspace{-0.1 in}
  \includegraphics[width=1\textwidth]{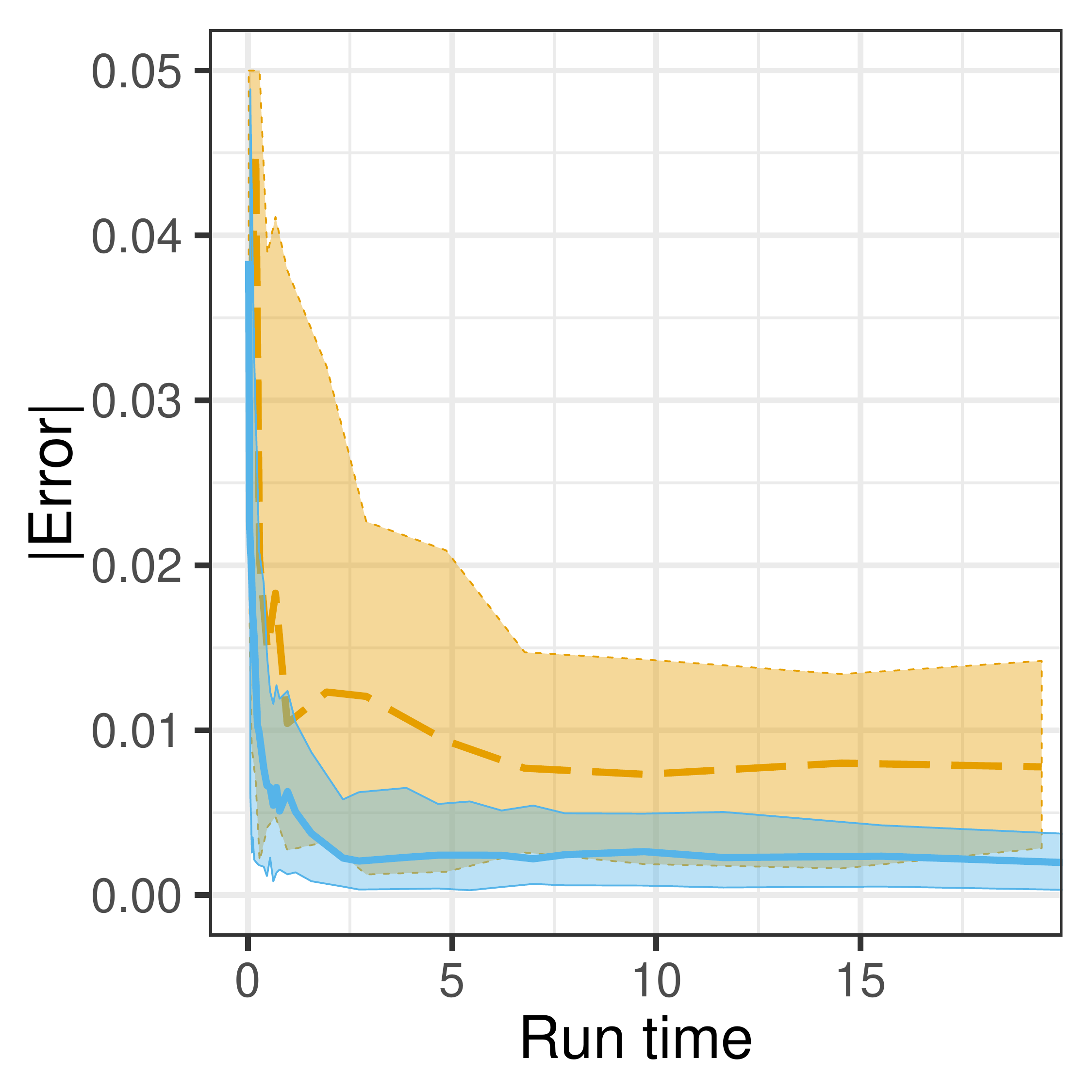}
    \vspace{-0.1 in}
  \end{minipage}
  \caption{ESS/s of our Gibbs sampler {\footnotesize $\blacktriangle$} and $50$-particle pMCMC samplers \fearn\ $\bullet$ and \eul\ {\footnotesize $\blacklozenge$} for (left) increasing interval $T$ with $N=20$ observations, and (middle) increasing $N$ with $T=20$. The right panel shows error in filtering distributions of our method (solid line) and particle filtering \pf\ (dashed line) for increasing computational budget.}
    \label{fig:hmc_hyp_nObs_tEnd}
\end{figure}
For both setups, we see that our method is more than an order of magnitude more efficient than both pMCMC algorithms. 
This performance gain increases as $T$ increases, where the increasingly long time-series both increase the run-time of the pMCMC algorithms, as well as reduce acceptance probabilities. 
As we note in section~\ref{sec:related}, it is possible to improve pMCMC performance with more careful choice of proposal distribution, though a more challenging issue is the long run-times involved with both pMCMC algorithms.


To address the last point, we note that \fearn\ is built on an unbiased particle filtering algorithm \pf. 
The latter forms a much cheaper baseline to compare against our MCMC sampler, with a run of 1000 particles taking slightly less time than 1000 iterations of our MCMC sampler. 
Unlike our MCMC sampler, the particle filter returns weighted samples, making comparison of the two algorithms trickier.
To understand the speed accuracy trade-offs involved, we used both algorithms in a filtering task:
 we simulated a diffusion over an interval of length 20, generating 20 equally spaced Gaussian observations centered on the path, and with standard deviation $0.2$.
We looked at the ability of both algorithms to reconstruct statistics of the posterior distribution at the last observation: $p(\W_N|y_1,\cdots,y_N)$. 
We considered the posterior mean, posterior variance as well as the posterior expectation of $\exp(\W_N)$.

The rightmost panel in figure~\ref{fig:hmc_hyp_nObs_tEnd} show the results for the absolute difference between the estimated variances and the `true' posterior variance, obtained from a long run of \eul. 
The plot shows the errors of our algorithm and \pf\ as computational budget (i.e.\ run-time) increases. The other two statistics gave similar results.
For our MCMC algorithm, increasing the computational budget just involved generating more samples (upto 5000 samples in the figure), for particle filtering, this involved running the algorithm with increasing number of particles (upto 5000 particles).
We can see that for the same budget, our method produces more accurate results, converging quickly to reasonable estimates that improve with more samples.
Although the random-weight particle filtering algorithm is consistent, it involves a significant amount of variance. 
Producing comparable errors to MCMC after many iterations will requires a large number of particles, raising issues with memory. 
We emphasize that particle filtering can also be improved in other ways, for example by choosing better proposal distributions, though we did not investigate this.
We note though that our filtering task was designed to favor particle filtering, and a more general smoothing task would require further extensions of the basic particle filter.


\begin{figure}[H]
  \centering
  \begin{minipage}[hp]{0.32\linewidth}
    \includegraphics[width=\textwidth]{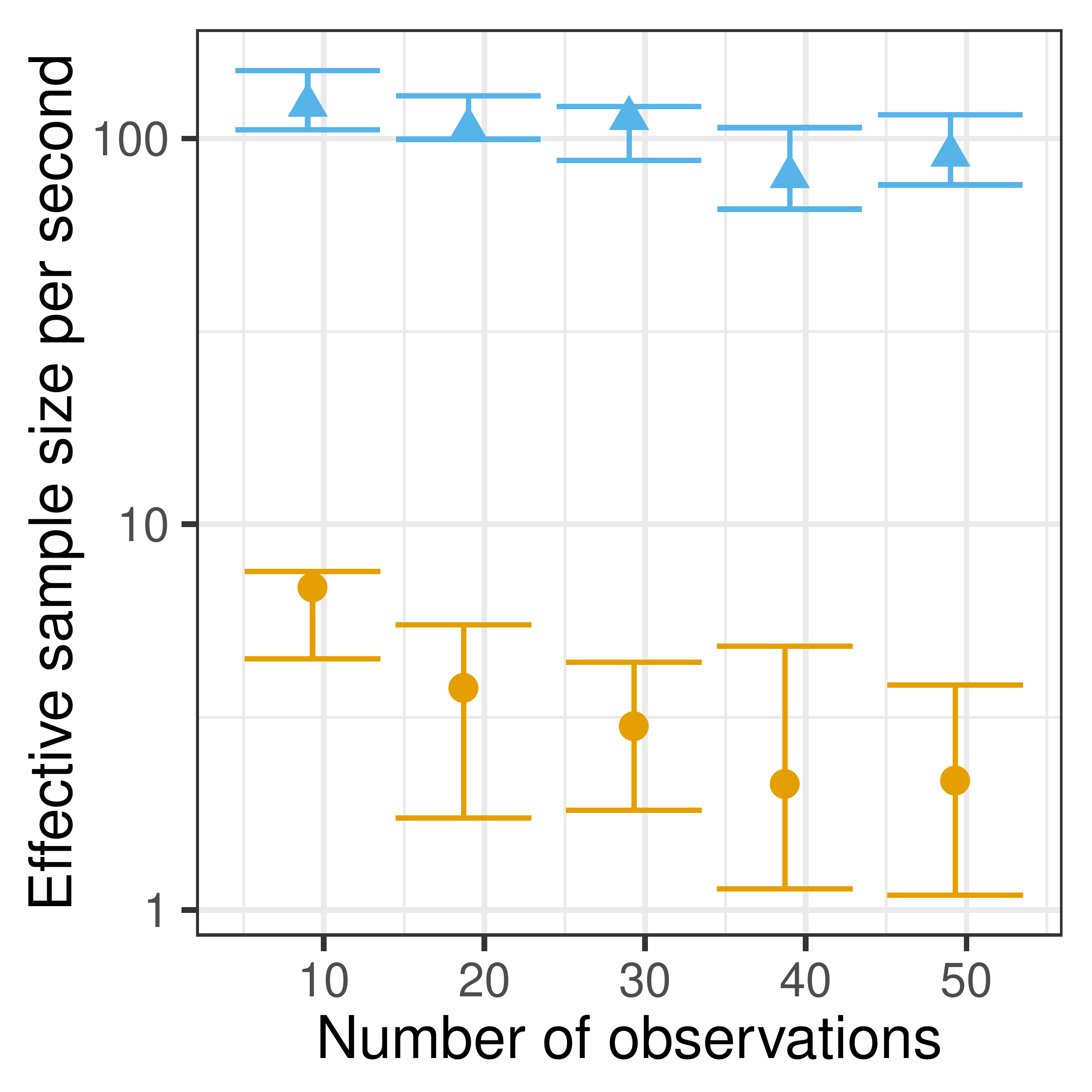}
  \end{minipage}
  \begin{minipage}[hp]{0.32\linewidth}
  \includegraphics[width=\textwidth]{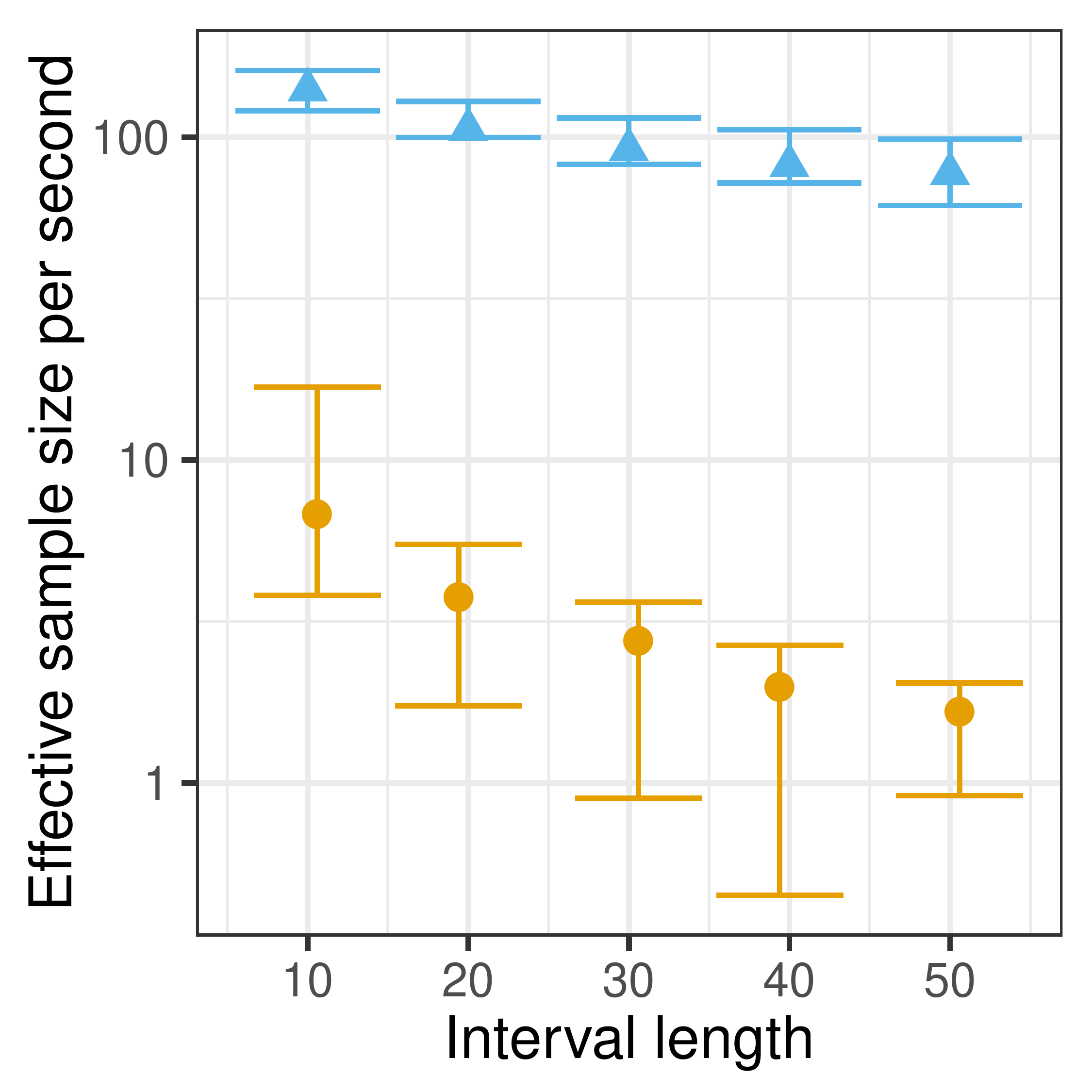}
  \end{minipage}
  \begin{minipage}[hp]{0.32\linewidth}
    \vspace{-.2in}
  \caption{ESS/s for posterior samples of $\theta$ for our sampler and a PIMH \eul\ sampler, as we increase number of observations $N$ with interval-length $T$ fixed at 20 (left), and as $T$ increases with $N=20$.}
  \label{fig:hmc_hyp_pars}
  \end{minipage}
\vspace{-.2in}
\end{figure}

In our final experiment, we place an exponential prior on the parameter $\theta$, and look at sampling from its posterior distribution.
We compare our MCMC sampler from section~\ref{sec:hyp} with \eul\ extended to include parameter inference.
This is a particle-independent Metropolis-Hastings (PIMH) sampler that proposes a new parameter $\theta^*$, and uses \eul\ to calculate the MH acceptance probability (see appendix).
Figure~\ref{fig:hmc_hyp_pars} shows the results using the same MH proposal distribution for both algorithms: we propose from the prior over $\theta$. 
Again our sampler is significantly more efficient than $\eul$.
We expect performance to further improve if we exploit ideas from~\citet{beskos2006exact} to reduce coupling between path and parameter, or jointly update path and parameter in the HMC step.
The PIMH sampler on the other hand does not suffer from such coupling, and its relative performance will improve when this is a significant factor.

%

\vspace{-.1in}
\subsection{Example 2: Periodic Drift}
\vspace{-.1in}
Our second example considers the sine-diffusion, an SDE with periodic drift $\alpha(x) = \sin(x-\theta)$:
\vspace{-.1in}
\begin{equation} \label{eq:sin1}
  \hspace{0.8in} \df X_t = \sin(X_t-\theta) \df t + \df B_t.
\end{equation}
Now $A(x) = \int_0^x \alpha(u) \df u = \cos(\theta) - \cos(x-\theta)$, and $h_x(u) \propto \exp(A(u) -A(x) - (u - x)^2/2T) = \exp(-\cos(u-\theta) + \cos(x-\theta)- (u-x)^2/2T).$ 
Now, $\sin^2(x) + \cos(x)$ lies in $[-1,5/4]$ and we set $\phi(x) = \sin^2(x-\theta)/2 + \cos(x-\theta)/2 + 1/2$. This lies in $[0,9/8]$, so that this SDE is of class EA1. 

The periodic drift term $\alpha(\cdot)$ in this SDE presents a potential challenge to our MCMC methodology due to the bimodality around zero when $\theta=0$. For positive values of $X_t$ in the interval $(0, \pi)$, the drift term is also positive, and the SDE experiences a repulsive push away from $0$. 
A similar effect, but in the opposite direction, occurs when $\W_t$ lies in $(-\pi,0)$. 
The symmetry of the problem means that $\W_t$ and $-\W_t$ are equally likely, however the repulsion away from $0$ can make it difficult for an MCMC algorithm to cross from one to the other. 
We can overcome this with a simple additional MCMC step: at the end of each iteration, flip the sign of the entire trajectory with probability $0.5$. 
This approach that exploits the problem's symmetry works well if we want prior samples from the sine-diffusion, producing similar results to figure~\ref{fig:hyper_hmc_ea_euler} (see the appendix). 

For posterior simulation given fairly informative observations, our experiments show that this bimodality in the prior presents less of a problem.
In settings where it might be a problem, the simple approach of flipping the path signs will require an MH correction step. 
The efficacy of this will depend on the degree of asymmetry introduced by the likelihood. 
In the appendix, we introduce a more general and flexible {\em tempering} scheme~\citep{swendsen1986replica,neal1996sampling} to explore the trajectory space more effectively, but do not discuss it in the main document.

Figure \ref{fig:hmc1_pmcmc1_nObs_tEnd} fixes $\theta=0$, and plots effective sample sizes per second of posterior samples of $\W_{T/2}$.
Again, the samplers are given equally spaced noisy observations of the diffusion trajectory, having mean equal to the trajectory value, and standard deviation equal to 0.2. 
In the left panel, we keep the number of observations fixed at $20$ as we vary the interval length $T$, while in the middle panel, we vary the number of observations with $T=20$. 
Once again, our sampler significantly outperforms the two 50-particle pMCMC baselines, \fearn\ and \eul\ (with a discretization level of $0.01$).
In all our experiments, we verified our samplers were exploring the posterior distribution by running a Kolmogorov-Smirnov two-sample test on outputs from different MCMC algorithms, and in all cases, the test failed to reject the null that the samples come from the same distribution.
This indicates that our sampler is not stuck in a mode of the posterior because of the bimodal drift function. 
The rightmost panel in the figure compares our sampler with the particle filter \pf\ on a filtering task, and we get results similar to the earlier experiment: for the same computational cost, our MCMC sampler is more accurate with less variance.
\begin{figure}[]
  \centering
  \begin{minipage}[hp]{0.32\linewidth}
  \centering
    \vspace{-0.1 in}
    \includegraphics[width=1\textwidth]{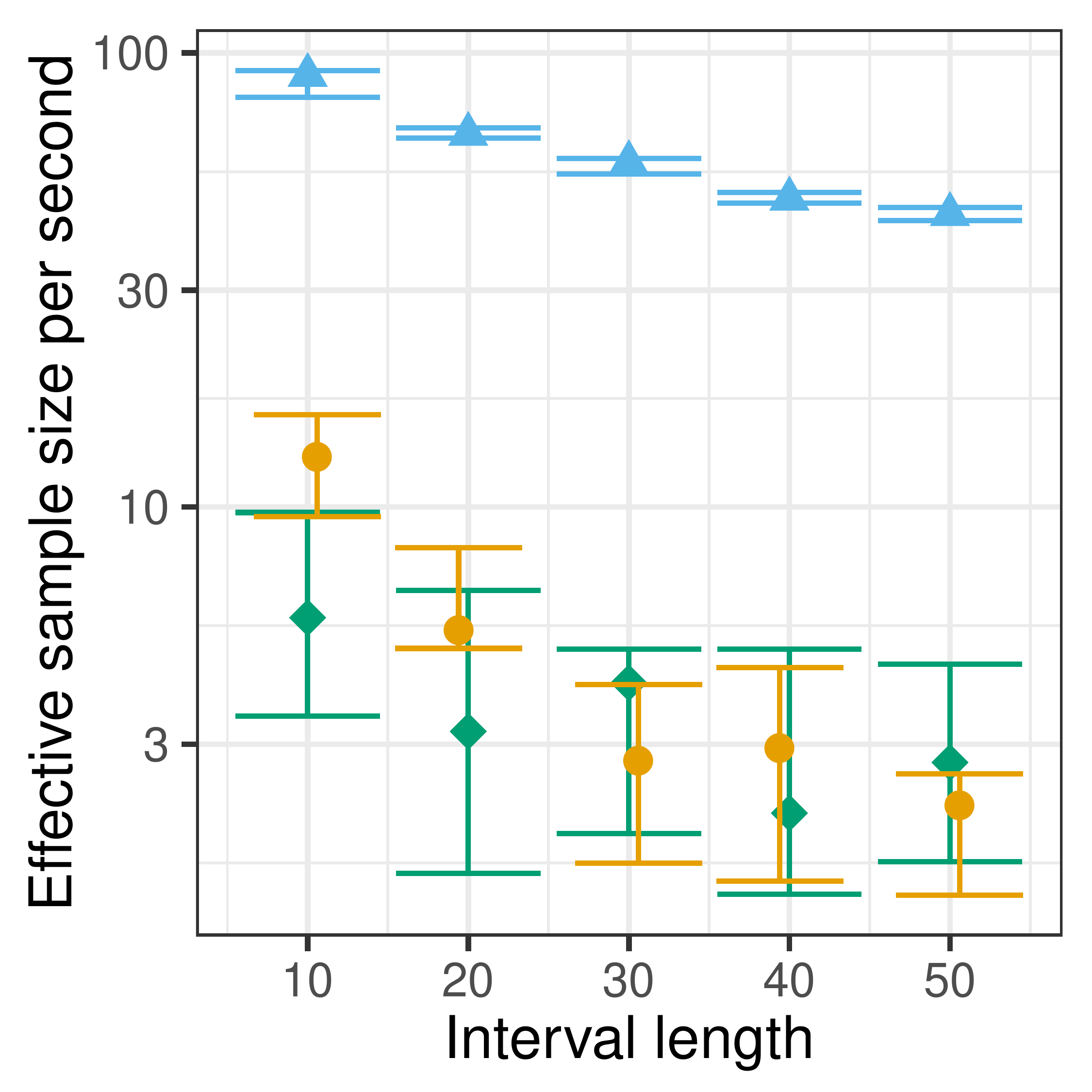}
    \vspace{-0.1 in}
  \end{minipage}
  \begin{minipage}[hp]{0.32\linewidth}
  \centering
    \vspace{-0.1 in}
    \includegraphics[width=1\textwidth]{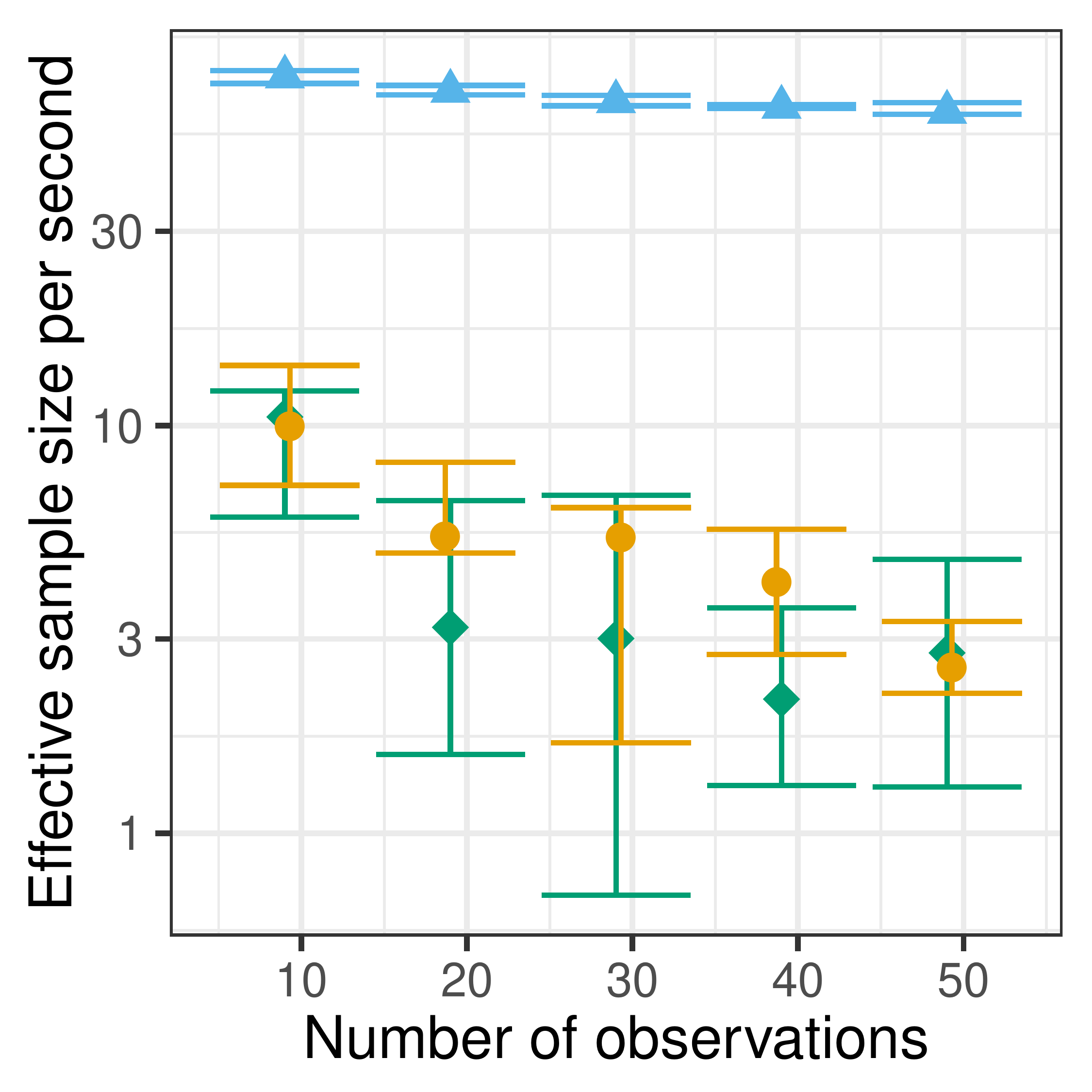}
    \vspace{-0.1 in}
  \end{minipage}
  \begin{minipage}[hp]{0.32\linewidth}
  \centering
    \vspace{-0.1 in}
  \includegraphics[width=1\textwidth]{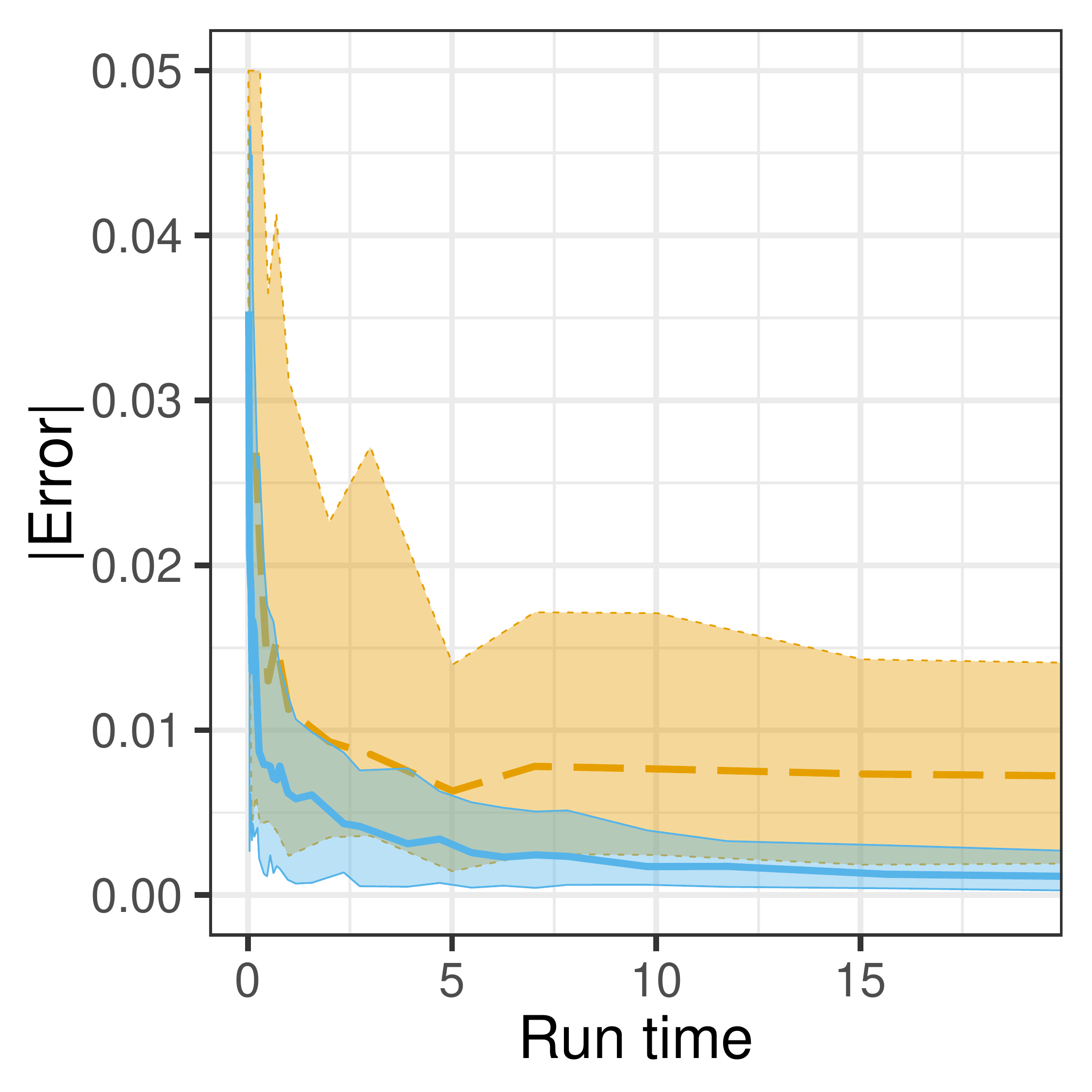}
    \vspace{-0.1 in}
  \end{minipage}
    \vspace{-0.1 in}
    \caption{ESS/s of our Gibbs sampler {\footnotesize $\blacktriangle$} and $50$-particle \fearn\  $\bullet$ and \eul\ {\footnotesize $\blacklozenge$} for (left) increasing $T$ with $N=20$ observations, and (middle) increasing $N$ with $T=20$. The right panel shows error in filtering distributions of our method (solid line) and particle filtering \pf\ (dashed line) for increasing computational budget.}
  \label{fig:hmc1_pmcmc1_nObs_tEnd}
    \vspace{-0.1 in}
\end{figure}

\begin{figure}[H]
  \begin{minipage}[hp]{0.32\linewidth}
  \centering
  \includegraphics[width=\textwidth]{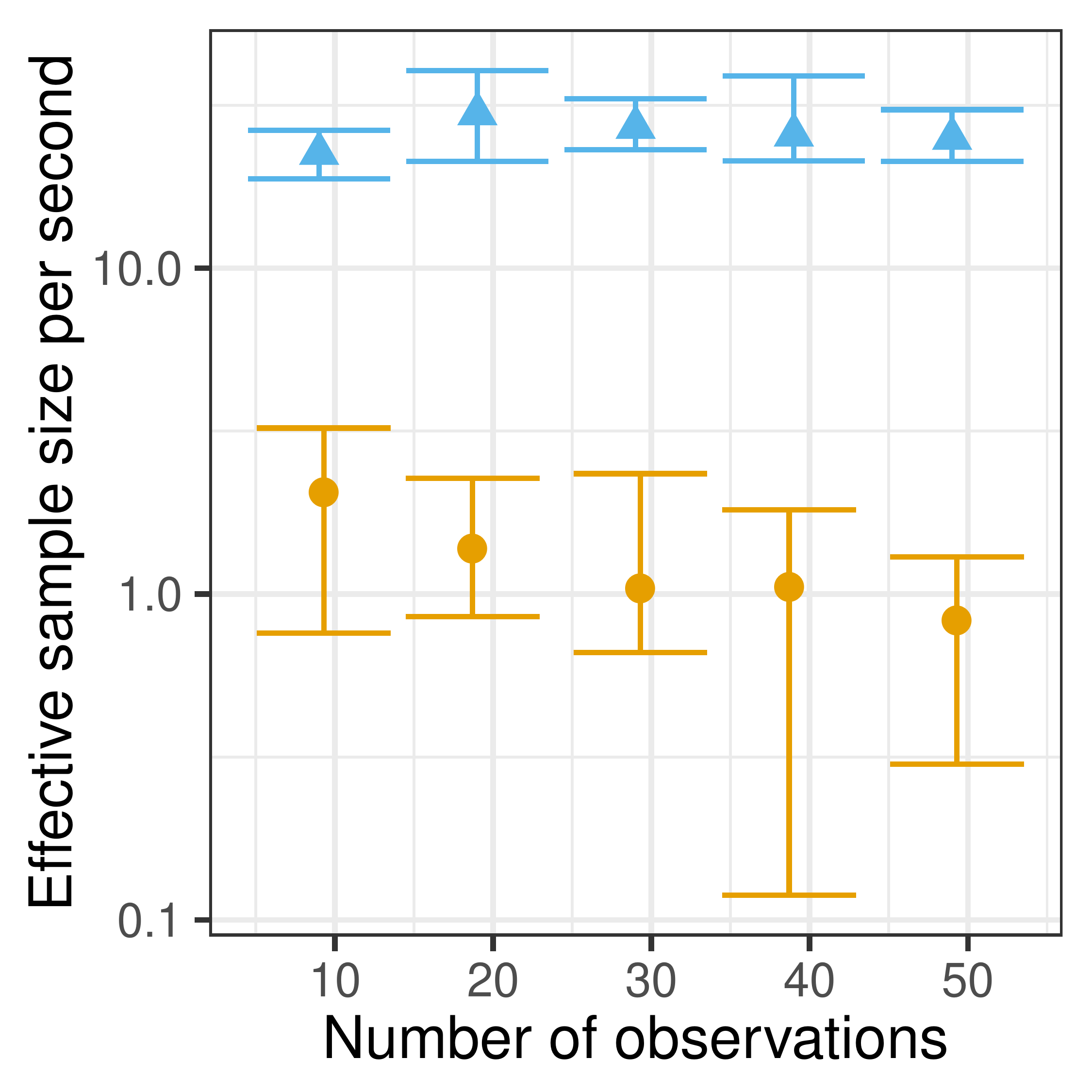}
  \end{minipage}
  \begin{minipage}[hp]{0.32\linewidth}
  \includegraphics[width=\textwidth]{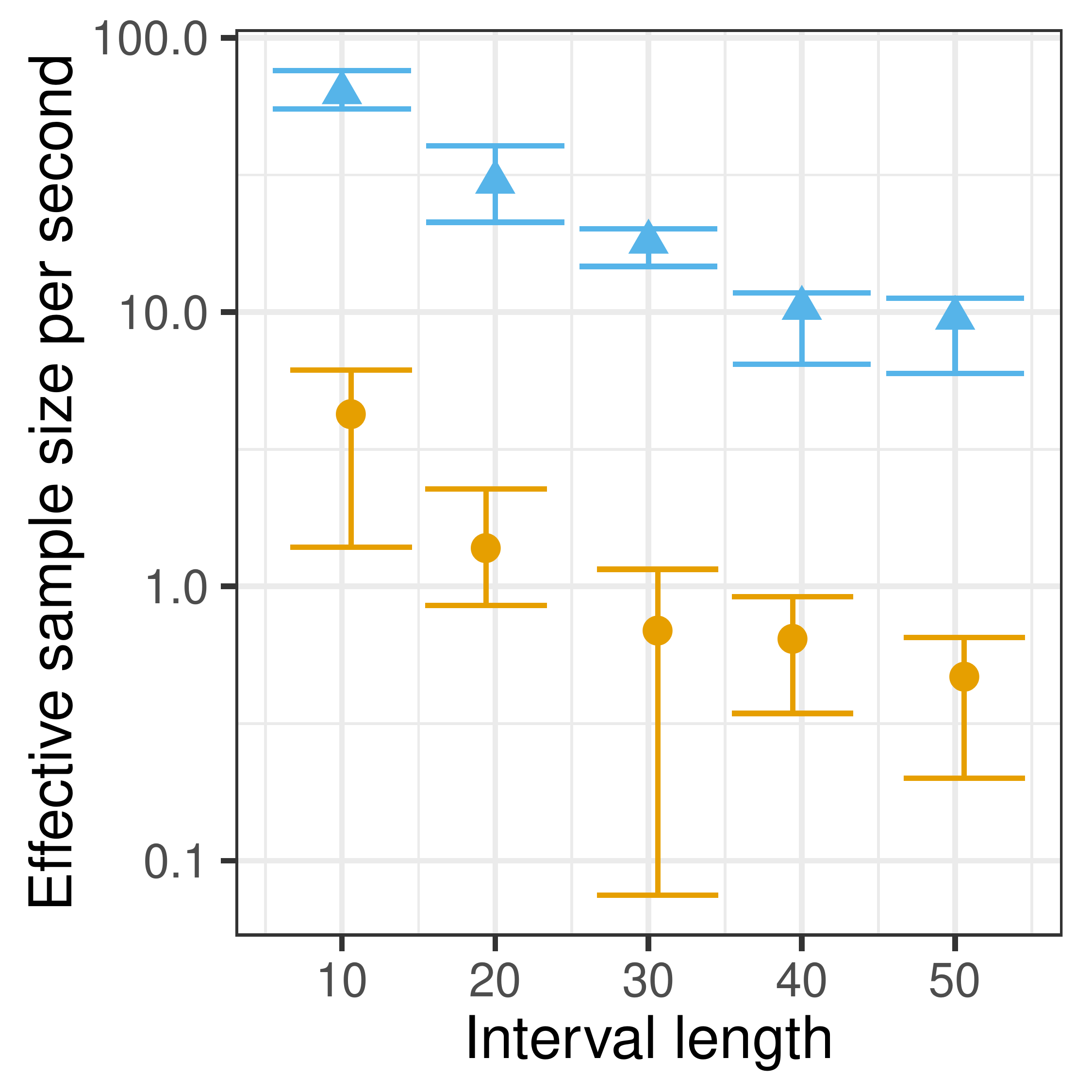}
  \end{minipage}
  \begin{minipage}[hp]{0.32\linewidth}
  \caption{ESS/s for the sine diffusion as we increase the number of observations (left) and the interval-length $T$ (right). {\footnotesize $\blacktriangle$} represents our Gibbs sampler, and $\bullet$ represents a particle-independent MH sampler based on an Euler–Maruyama discretization with stepsize 0.01.  }
  \label{fig:hmc_sin_pars}
  \end{minipage}
\end{figure}


The earlier experiments fixed the parameter $\theta$ to 0, in figure \ref{fig:hmc_sin_pars} we place of uniform $\text{Unif}(-\pi,\pi)$ on $\theta$, and compare performance of our sampler and a PIMH sampler based on \eul. 
Once again, our sampler is about an order of magnitude more efficient.

    \vspace{-0.2 in}
\section{Modeling stock prices}
    \vspace{-0.1 in}
\label{sec:example}
In our final experiment, we consider a real dataset of stock prices of Alphabet Inc.\footnote{Obtained from 
  \url{https://finance.yahoo.com/quote/GOOG?p=GOOG&.tsrc=fin-srch}}, selecting one observation each week from April 2013 to Aug 2017. 
The resulting dataset consists of $179$ observations, the first $146$ of which we used as training, and the last $33$ as test. 
We plot the data in the left panel of Figure~\ref{fig:stock_price}.  
As is typical, we preprocess the data, removing the linear trend, and then taking the logarithm of the detrended stock price. 
We also rescale time between $0$ and $10$.
Write $S_t$ for the transformed measurement at time $t$. 
For $n$  trading days $O = \{o_1, o_2, \dotsc, o_n\}$, our observations are $S =\{s_{o_1}, s_{o_2}, \dotsc, s_{o_n}\}$. 
The right panel in Figure~\ref{fig:stock_price} plots this transformed data.
\begin{figure}[H]
  \centering
  \begin{minipage}[hp]{0.35\linewidth}
  \centering
    \vspace{-0.1 in}
    \includegraphics[width=1\textwidth]{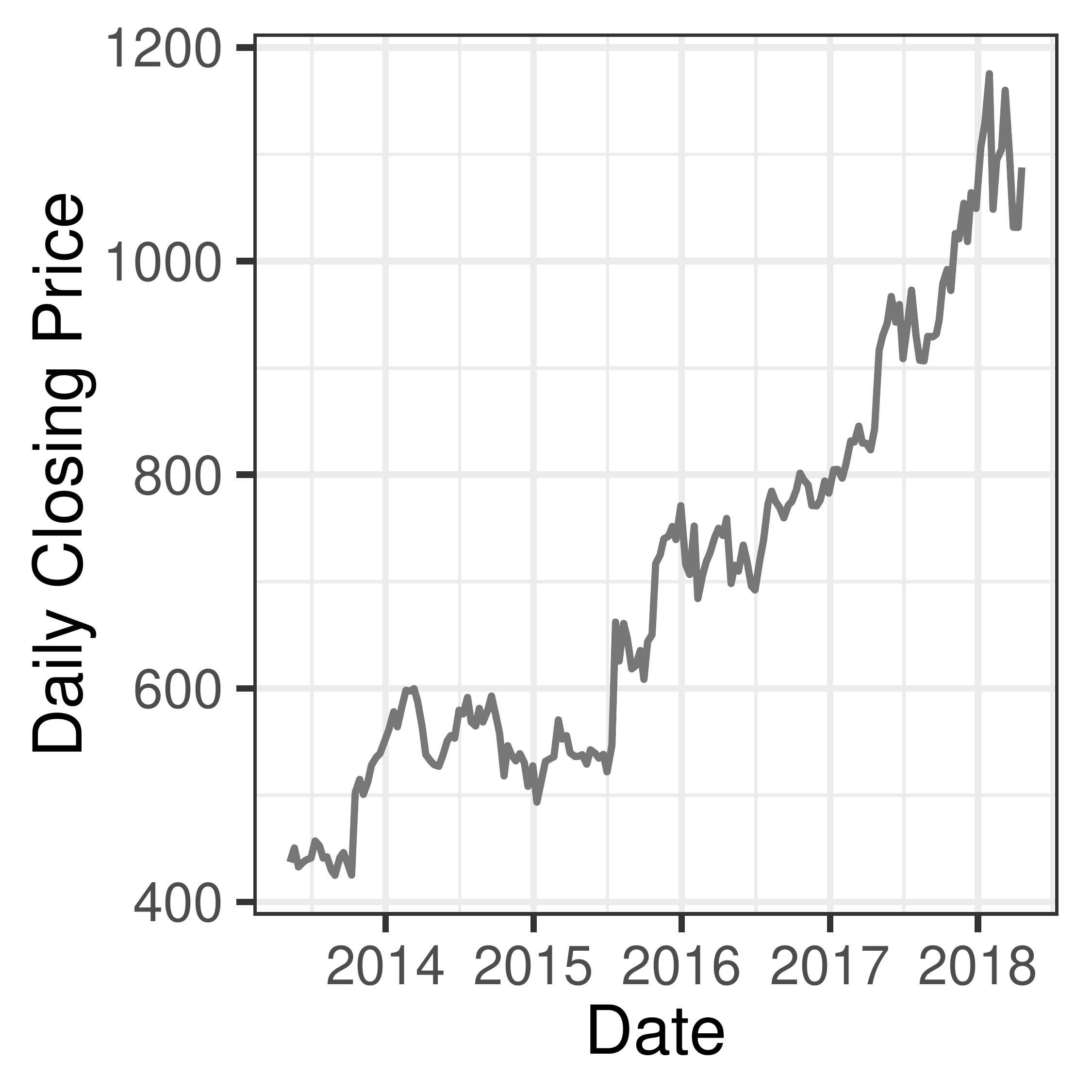}
    \vspace{-0.1 in}
  \end{minipage}
  \begin{minipage}[hp]{0.35\linewidth}
  \centering
    \vspace{-0.1 in}
    \includegraphics[width=1\textwidth]{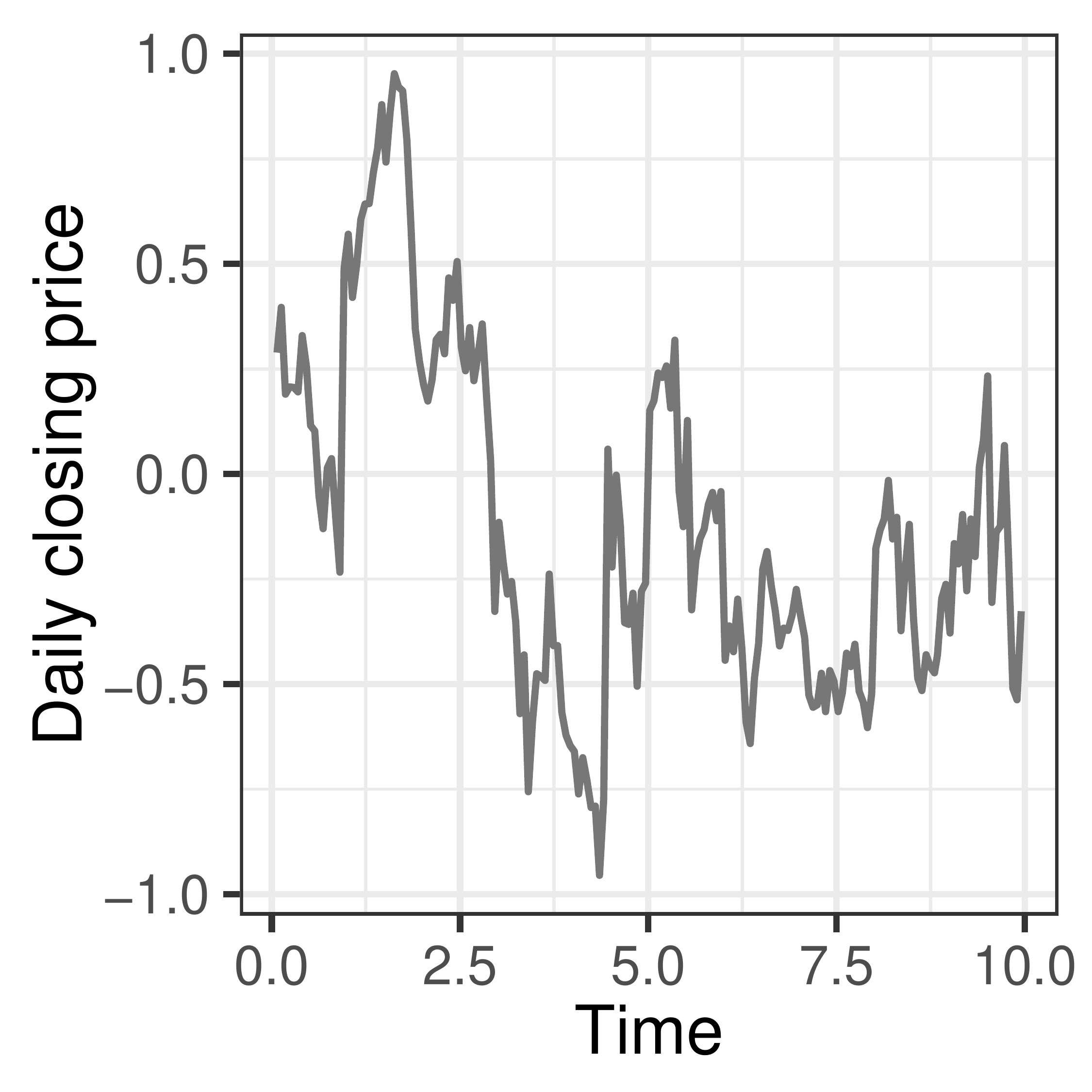}
    \vspace{-0.1 in}
  \end{minipage}
  \begin{minipage}[hp]{0.28\linewidth}
    \vspace{-0.8 in}
    \caption{Weekly stock prices for Alphabet Inc, from April 2013 to April 2018. The left panel shows the raw data, and the right one shows the transformed data which we model.}
  \label{fig:stock_price}
    \vspace{-0.1 in}
  \end{minipage}
    \vspace{-0.4 in}
\end{figure}

While stock prices have classically been modeled by geometric Brownian motion~\citep{black1973pricing}, limitations of such models, such as their inability to capture empirically observed heavy tails, have been well documented. 
\citet{bibby1996hyperbolic} used a hyperbolic distribution to model the increments of the process, and we use the hyperbolic diffusion of equation~\eqref{eq:hyper1}. 
We treat this as a latent process underlying the observed stock prices $S$. 
The observations themselves are modeled as additive Gaussian perturbations of the underlying diffusion. 
The overall model is 
\begin{align}
  \W_0 & \sim \prior, \qquad \qquad \df \W_t = - \frac{\theta\W_t}{\sqrt{1+\W_t^2}} \df t + \df B_t, \qquad
  s_t \sim \mathcal{N}(\W_t, \sigma^2),\quad t \in \{o_1,\dotsc,o_n\}.
\end{align}
For simplicity, we fix the standard deviation of the measurement noise to $0.2$, though we could easily place a conjugate prior on this.
We consider two settings, one with $\theta$ fixed to 1, and the second with a rate-1 exponential prior over $\theta$.
We apply our MCMC algorithm to the data in both settings. 
Figure~\ref{fig:hyper_trace_acf1} shows the MCMC traceplot and autocorrelation function of the trajectory value at $\W_{T/2}$, the midpoint of the interval, for the sampler that updates $\theta$. 
The results with $\theta$ fixed are similar: our sampler mixes well, with no significant autocorrelation at lags larger than 5. 
\begin{figure}[]
  \centering
  \begin{minipage}[hp]{0.34\linewidth}
  \includegraphics[width=.99\textwidth]{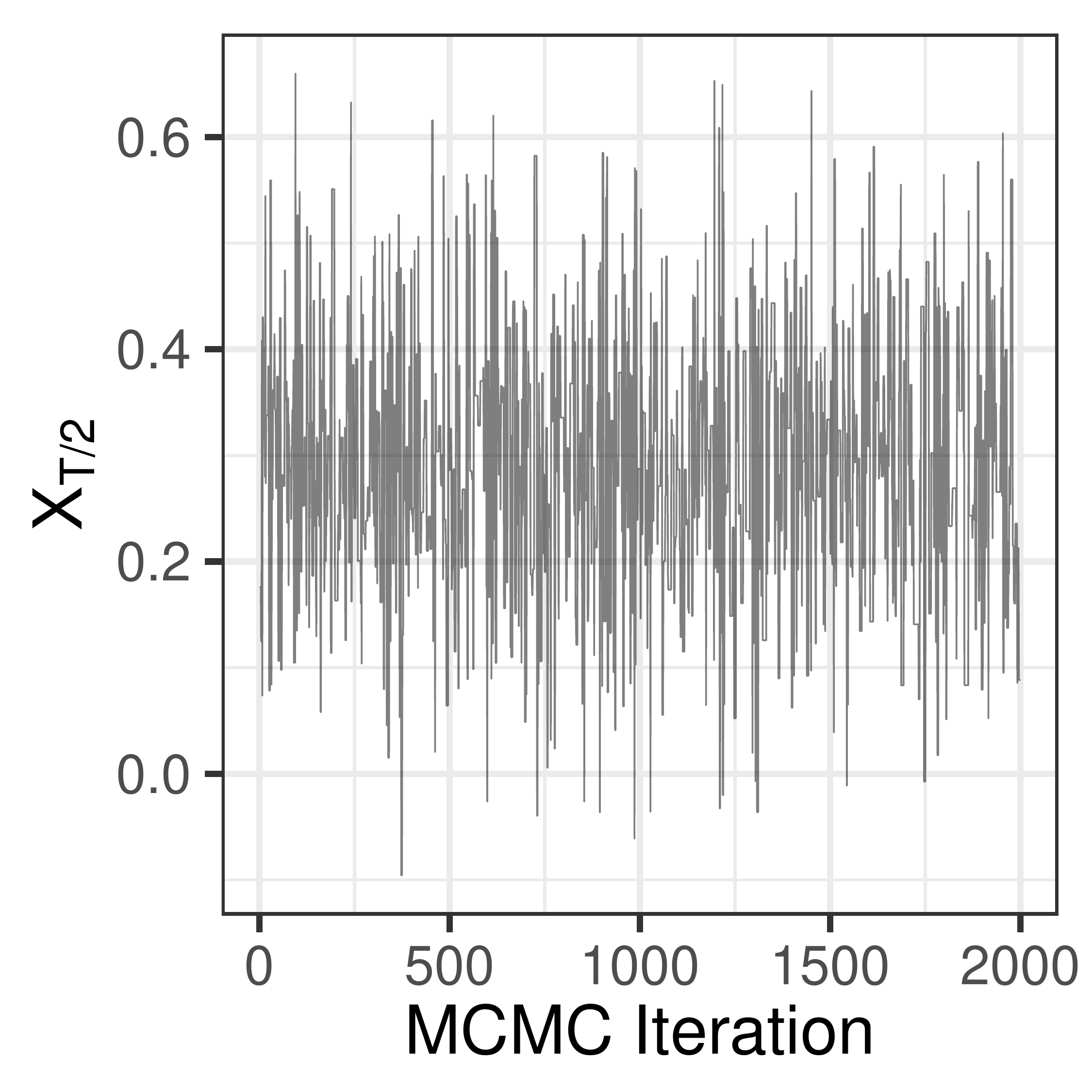}
\end{minipage}
  \begin{minipage}[hp]{0.34\linewidth}
  \includegraphics[width=.99\textwidth]{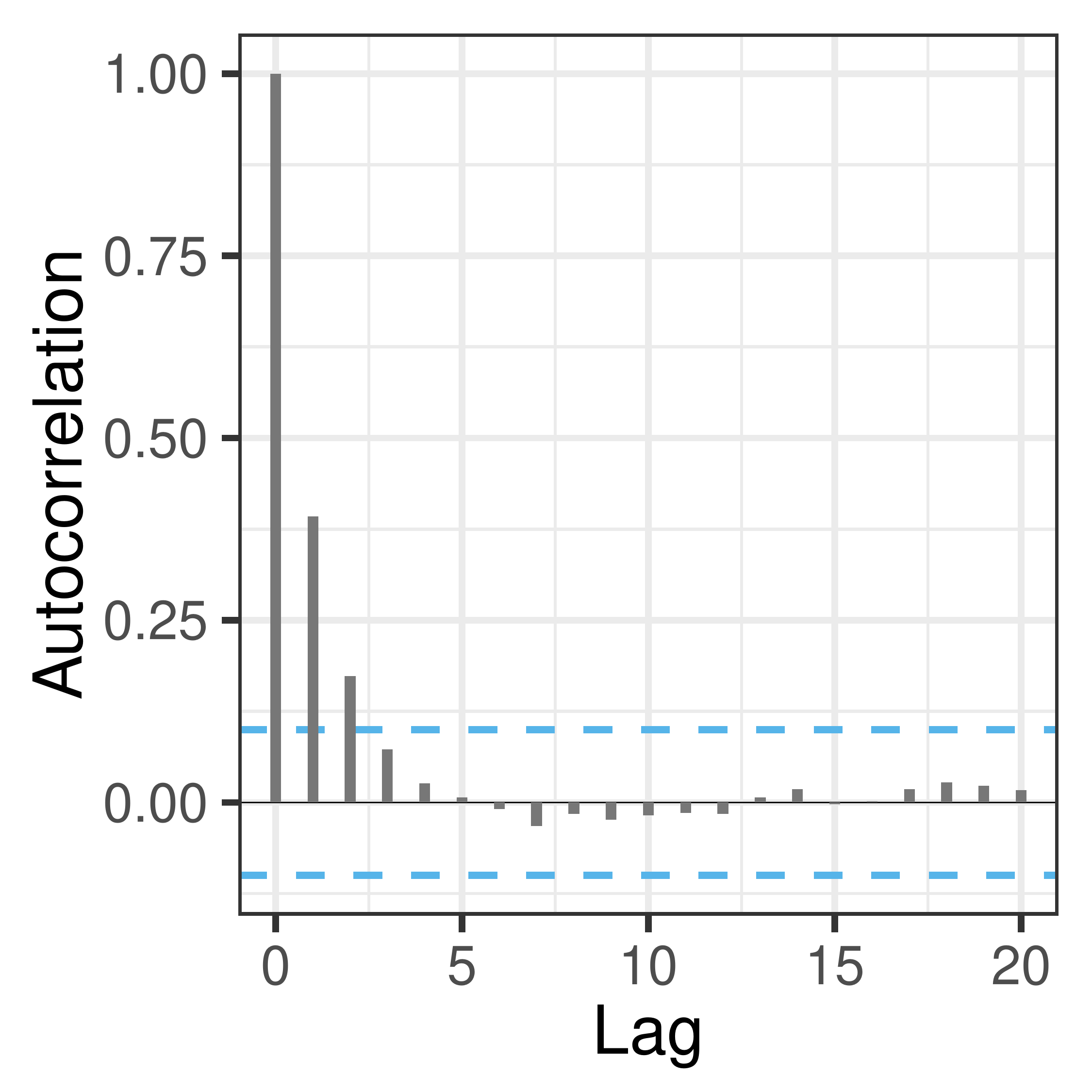}
\end{minipage}
  \begin{minipage}[hp]{0.3\linewidth}
  \caption{(left) MCMC trace of $\W_{T/2}$, the diffusion value at the midpoint of the observation interval, and (right) its autocorrelation function.}
  \label{fig:hyper_trace_acf1}
\end{minipage}
  \centering
  \begin{minipage}[hp]{0.32\linewidth}
  \centering
    \includegraphics[width=1\textwidth]{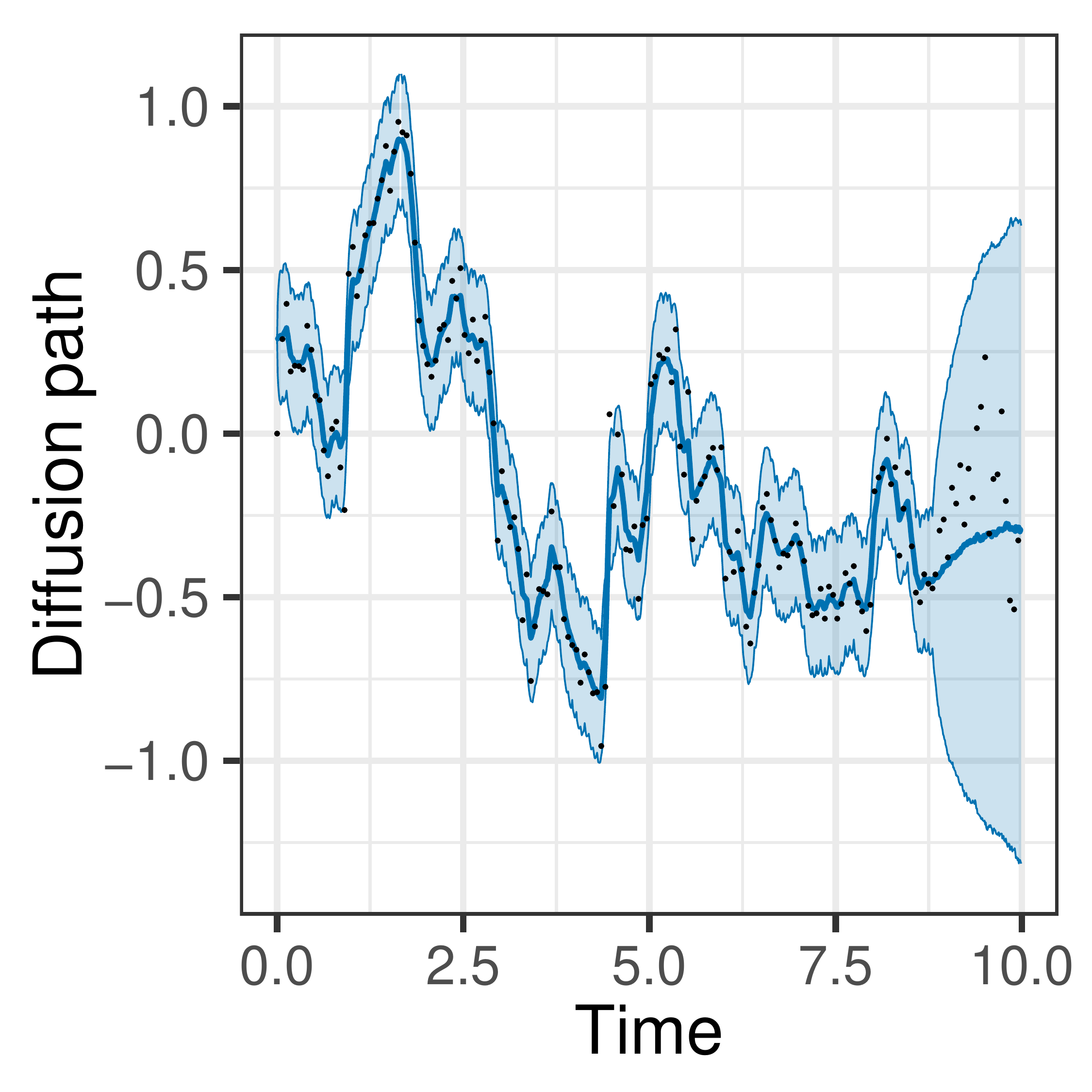}
    \vspace{-0.1 in}
  \end{minipage}
  \begin{minipage}[hp]{0.32\linewidth}
  \centering
    \vspace{-0.1 in}
    \includegraphics[width=1\textwidth]{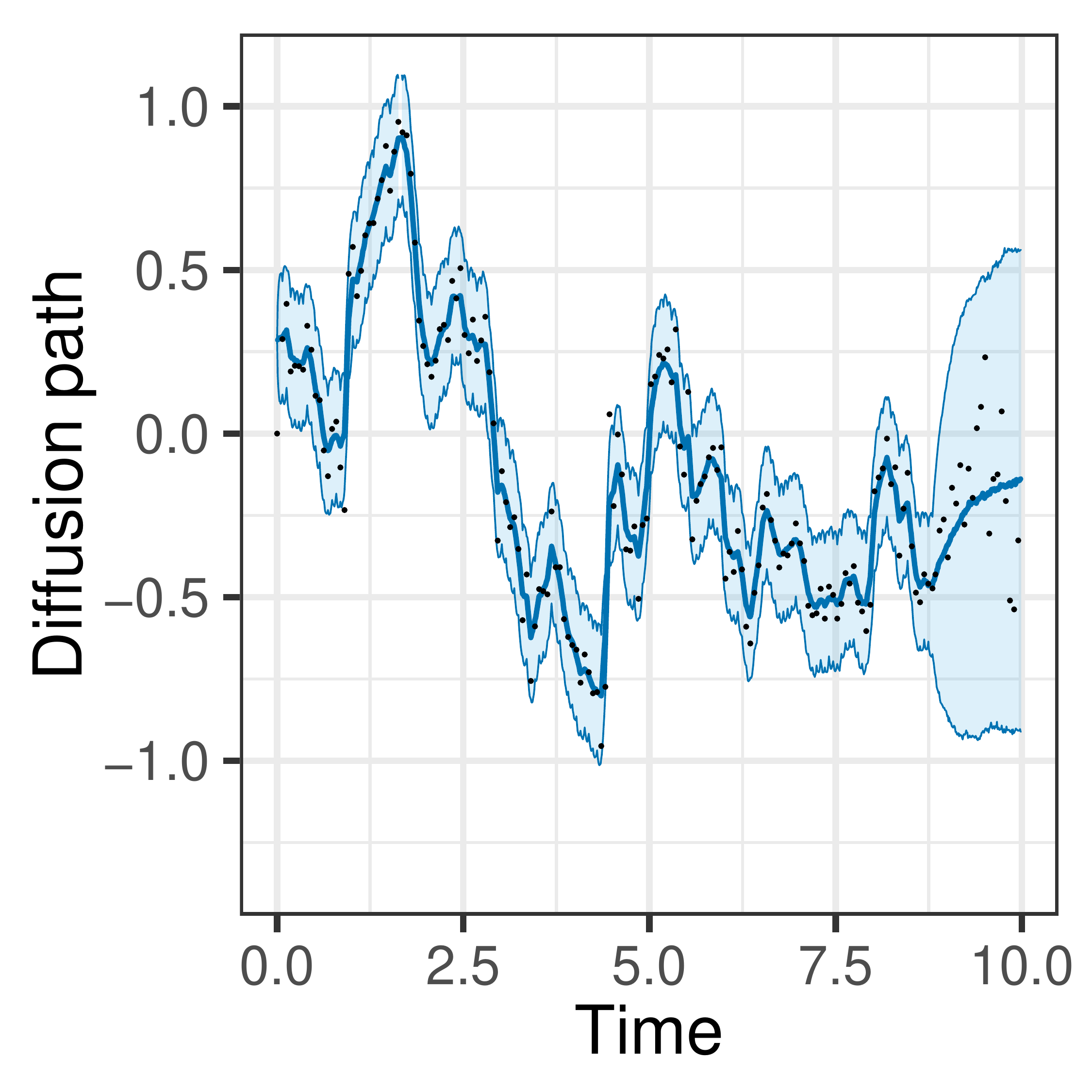}
    \vspace{-0.1 in}
  \end{minipage}
  \begin{minipage}[hp]{0.32\linewidth}
  \centering
    \vspace{-0.1 in}
    \includegraphics[width=1\textwidth]{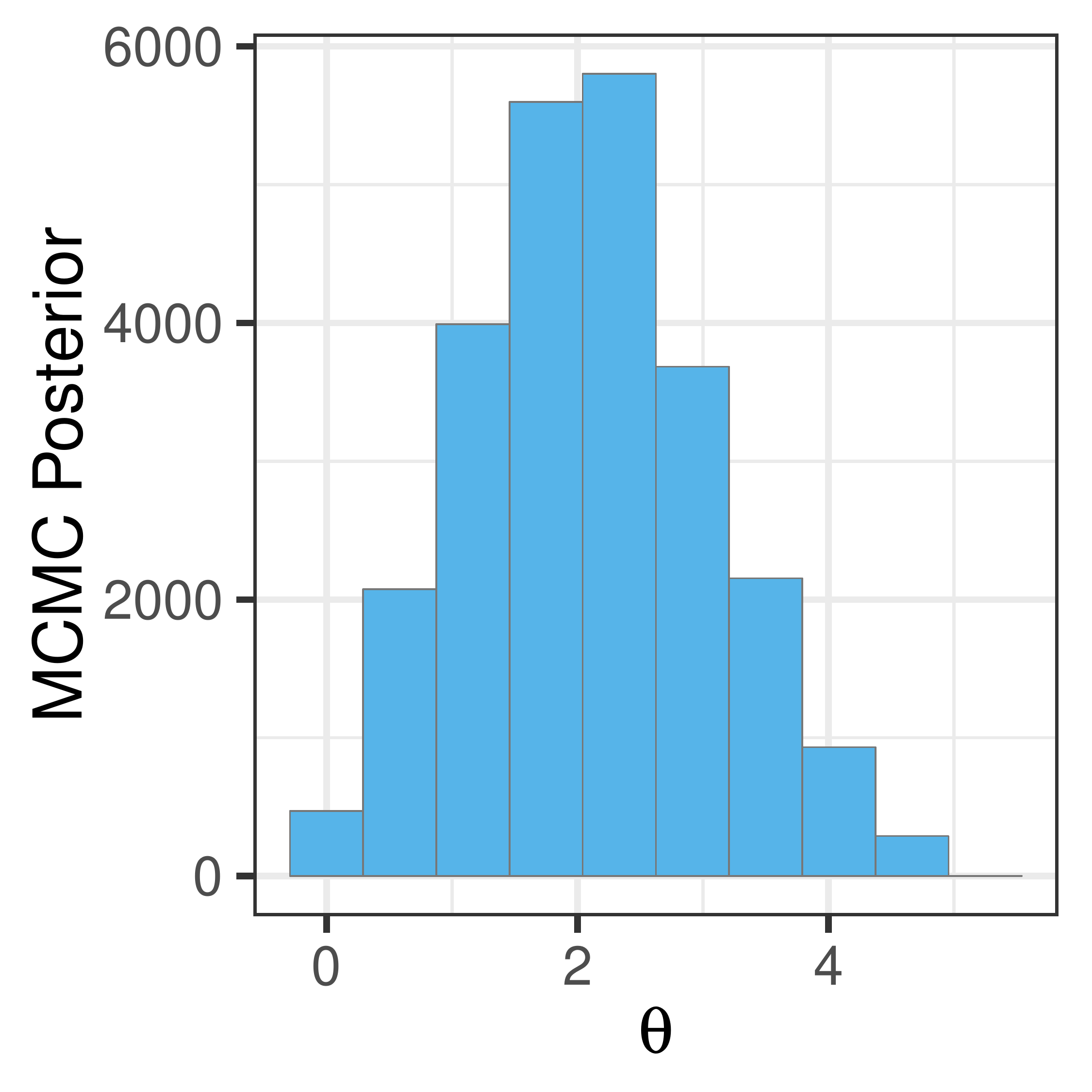}
    \vspace{-0.1 in}
  \end{minipage}
    \vspace{-0.2 in}
  \caption{Median and 90\% quantiles of the posterior over paths with (left) $\theta=1$, and (center) $\theta$ updated after placing a rate-1 exponential prior. (Right) is the corresponding posterior over $\theta$.}
  \label{fig:example_pmcmc2}
  \centering
    \vspace{0.15 in}
  \begin{minipage}[hp]{0.35\linewidth}
  \centering
    \includegraphics[width=1\textwidth]{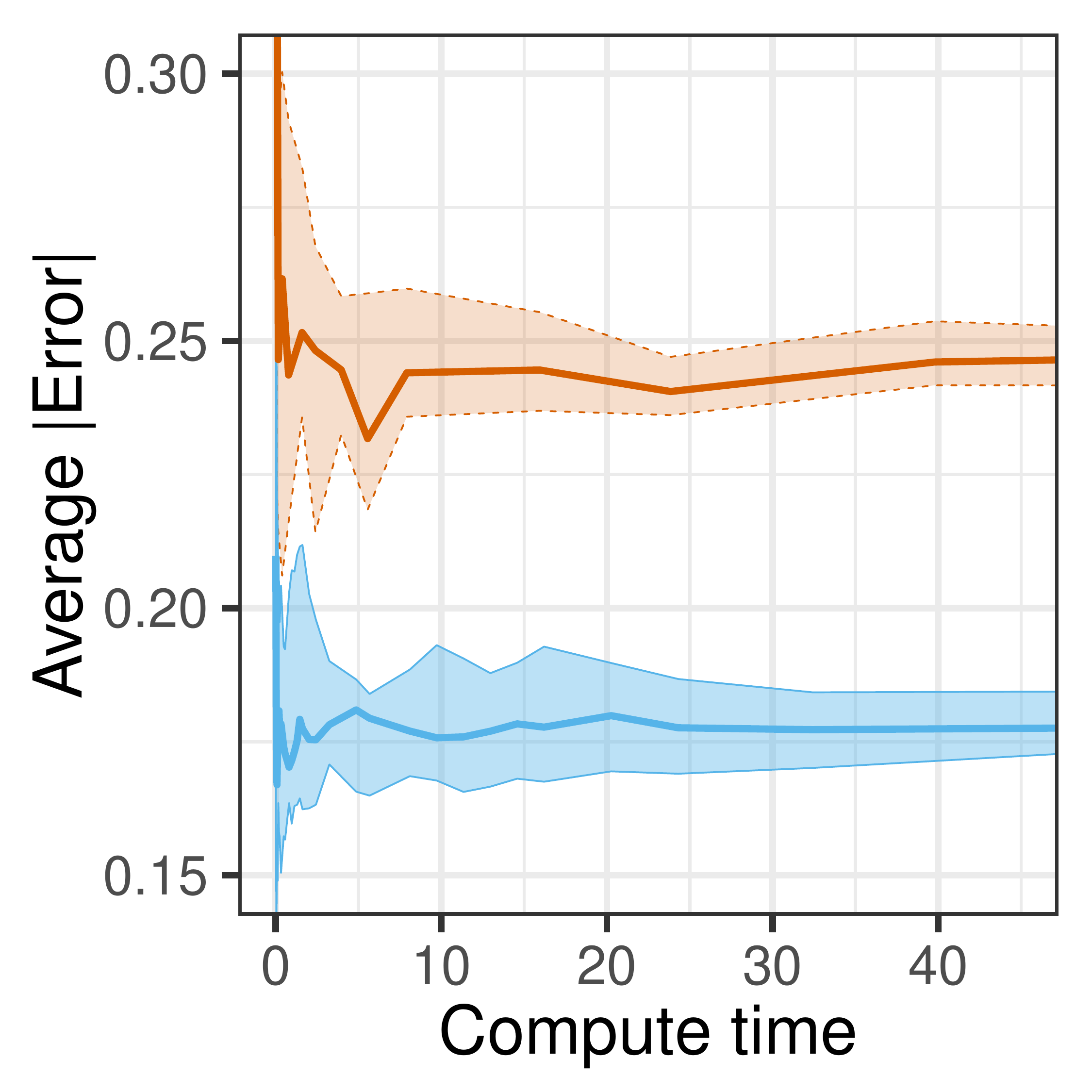}
    \vspace{-0.1 in}
  \end{minipage}
  \begin{minipage}[hp]{0.35\linewidth}
  \centering
    \vspace{-0.1 in}
    \includegraphics[width=1\textwidth]{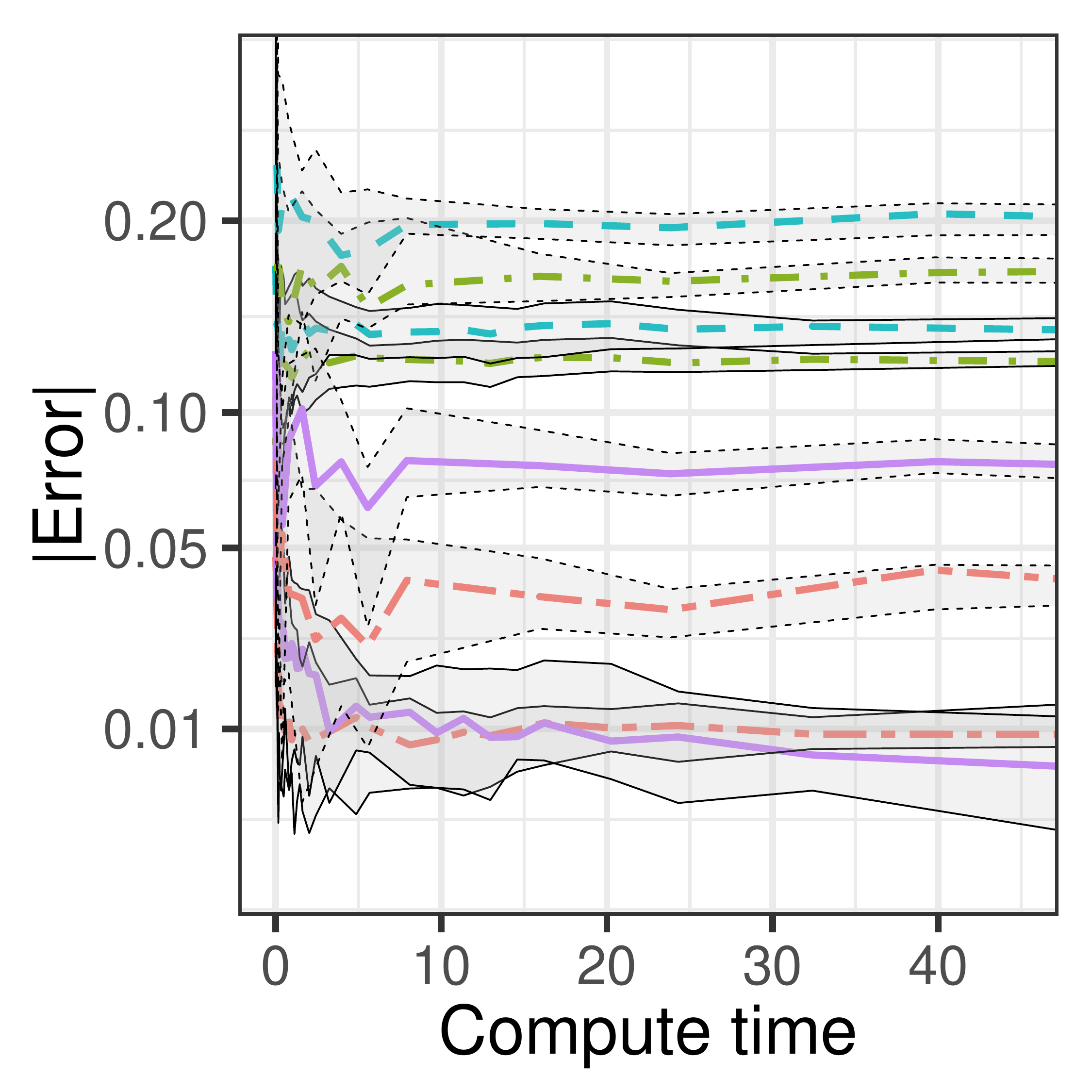}
    \vspace{-0.1 in}
  \end{minipage}
  \begin{minipage}[hp]{0.28\linewidth}
    \vspace{-0.4 in}
    \caption{ Absolute error (left) averaged across all test datapoints, and (right) for first 4 test datapoints (each a different line-type). Solid/dashed ribbons are our method/\pf. }
    \label{fig:stock_pred}
  \end{minipage}
\end{figure}

The left and middle panels in figure~\ref{fig:example_pmcmc2} plot the posterior distribution over diffusion paths, showing the median and a $90\%$ posterior credible interval.
The left is with $\theta$ fixed to $1$ and the middle is when we update $\theta$.
We see that the posterior spread includes most observations, suggesting that our model, viz.\ the hyperbolic diffusion with Gaussian noise produces a good fit for this dataset. 
We note that we only fit the models on training data (until about time $8$), the remaining datapoints are held-out test data that the models never see.
This explains the increase in uncertainty towards the end of the interval, though both models cover the observations.
Updating $\theta$ provides a tighter fit by virtue of infering stronger zero-reverting dynamics.
This is reflected in the posterior over $\theta$ (the rightmost panel) which concentrates on values larger than $1$.

Our algorithm is significantly faster that the two pMCMC baselines \eul\ and \fearn, and we do not include speed-accuracy comparisons with these. 
We note though that posterior approximations produced by all 3 MCMC algorithms agreed with each other, with 2-sample Kolmogorov-Smirnov tests failing to reject the null that they come from the same distribution.
Instead, we compare our algorithm with the random-weight particle filtering algorithm \pf.
We set $\theta$ to $1$ for both algorithms, and ran them on the training dataset. 
We then imputed path values on the test times, and calculated the absolute difference of the posterior predictive means at test times from the observed test values.
We repeated this 10 times for each algorithm for different random seeds, plotting the errors for increasing computational budgets.
As before, increasing the computational budget of our method involved running for more MCMC iterations, while for particle filtering, this involved rerunning the algorithm with more particles.
The left panel in figure~\ref{fig:stock_pred} shows the average absolute error across all 33 test datapoints: the ribbons are 90\% quantiles across multiple initializations, and the thick line is the median. 
Our algorithm (solid ribbons) outperformed \pf\ (dashed ribbons), producing more accurate results for comparable runtimes. 
The right limit of the x-axis (at 45 seconds) corresponds to our method run for about 20000 samples, and \pf\ run with 5000 particles. 
The right panel disaggregates these results, showing the absolute error for the first 4 test datapoints. These are typical of the results we observed: our algorithm outperformed \pf\ on most test data points.

\vspace{-.2in}
\section{Discussion}
\label{sec:conc}
\vspace{-.1in}
In this paper, we proposed a computationally efficient auxiliary variable Gibbs sampling algorithm that allows simulation from the EA1 class of SDEs without any discretization error.
Our sampler builds on the EA1 rejection sampling algorithm for diffusions, described in~\citet{beskos2005exact} and follow-up work.  
Our method allows prior simulation from the SDE, conditional simulation given noisy observations as well as parameter inference.

There are a number of avenues for future research. 
In follow-up work, Beskos and collaborators developed exact rejection sampling algorithms for larger classes of SDEs. 
Recall that the EA1 class is limited to SDEs where $\phi(\cdot) = \alpha^2(\cdot) + \alpha'(\cdot) \in [L, M+L]$ where $L$ and $M$ are finite. 
In~\citet{beskos2006retrospective, beskos2006exact}, the authors also consider two broader classes, EA2 where $\phi(\cdot)$ is bounded only from one side, and EA3, where it is not bounded at all.
Extending our MCMC scheme to such situations is conceptually straightforward, though much more involved: we need to augment our MCMC state-space to include the maximum and/or minimum of the trajectory.
Conditioned on these, imputing the SDE trajectory will involve simulating from Bessel processes instead of Brownian bridges.
We are currently exploring efficient ways to do this.
Work in~\cite{giesecke2013exact, pollock2016} has extended ideas from~\citet{beskos2005exact} to other stochastic processes, such as jump-diffusions processes, and similar ideas to this paper can be applied in that context, and to other diffusions not absolutely continuous with respect to Brownian motion.
Finally, it is interesting to better understand theoretically the convergence properties of our proposed MCMC algorithm.

\bibliography{refvr}

\newpage
\section{Appendix}
\begin{algorithm}[H]
  \caption{Euler-Maruyama algorithm~\citep{kloeden2012numerical} to simulate a diffusion process }
   \label{alg:Euler}
  \begin{tabular}{l l}
    \textbf{Input:  } & \text{A regular grid $G = \{0, t_1, t_2, \cdots, t_{n-1}, T\}$ on a time interval $[0, T]$}. \\
                     & \text{An initial distribution over states $\pi$, a drift term $\alpha(\cdot)$ and a diffusion term $\beta(\cdot)$}.\\
    \textbf{Output:  }& \text{A diffusion trajectory $\{\W_0, \W_{t_1}, \W_{t_2}, \cdots, \W_T\}$ evaluated on $G$}.\\
   \hline
   \end{tabular}
   \begin{algorithmic}[1]
      \State Simulate $\W_0 \sim \pi$
      \For{$i$ in $1$ to $N$}
      \State \text{Simulate $y_i$ from the standard normal distribution.}
      \State Set $\W_{t_{i+1}} \gets \W_{t_i} + \alpha(\W_{t_i}) (t_{i+1} - t_i) + \beta(\W_{t_i})\sqrt{t_{i+1} - t_i}y_i$
      \EndFor
   \end{algorithmic}
\end{algorithm}

\subsection{A particle MCMC algorithm for path inference}
We first describe a particle filtering algorithm to propose a new path 
$\W^*$
\begin{algorithm}[H]
  \caption{Particle filtering algorithm to simulate a diffusion process }
   \label{alg:partMCMC}
  \begin{tabular}{l l}
   \textbf{Input:  } & \text{A regular grid $G = \{0, t_1, t_2, \cdots, T\}$ on a time interval $[0, T]$}, \\
                     & \text{An initial distribution over states $\pi$, a drift term $\alpha(\cdot)$},\\
                     & \text{Observations at times $O = \{o_1,\dotsc,o_{|O|}\}$, with observation $i$ having likelihood $\ell_i(\W_{o_i})$ } \\
   \textbf{Output:  }& \text{A new trajectory $\W^*_G$ from the SDE conditioned on the observations}.\\
   \hline
   \end{tabular}
   \begin{algorithmic}[1]
      \State Sample initial states for N particles $X^k(0)$ from $\pi$, $k = 1,...,N$. 
      \For{$i$ in $1$ to $|O|$}
	\State  For $k = 1,2,...,N$, update particle $k$ from $[0,o_{i-1}]$ to $[0,o^i]$ by forward simulating 
    via the Euler-Maruyama algorithm on the grid.
    \State  Calculate the weights $w^k_i = \ell_i(X^k_{o_i})$ and normalize $W^k_i = \frac{w^k_i}{\sum_{k = 1}^N w^k_i},\ \  k = 1,2,...,N.$ 
	\State  Sample $J_{i}^k \sim \text{Multi}(\cdot| (W^1_{i},\dotsc,W^N_{i}))$ ,$k = 1,2,...,N$.
	\State  Set $X_{[0, o_i]}^k := X_{[0,o_i]}^{J^k_i},\ \  k = 1,2,...,N.$.
      \EndFor
   \end{algorithmic}
\end{algorithm}
Assume no observations at the end-time $T$. Then uniformly pick one of the $N$ particles, call this $X^*$. We have an estimate of $P_\ell(X^*)$, the conditional probability of $X^*$ given the observations:
$ P_\ell(X^*) = \prod_{i = 1}^n \left[ \sum_{k = 1}^N  \frac{1}{N} w_i^k \right].$

\subsubsection{Particle MCMC algorithm for diffusions}

\begin{algorithm}[H]
  \caption{The particle MCMC algorithm for SDE trajectories}
   \label{alg:SMC}
  \begin{tabular}{l l}
   \textbf{Input:  } & \text{A regular grid $G = \{0, t_1, t_2, \cdots, T\}$ on a time interval $[0, T]$}, \\
                     & \text{An initial distribution over states $\pi$, a drift term $\alpha(\cdot)$},\\
                     & \text{Observations at times $O = \{o_1,\dotsc,o_{|O|}\}$, with observation $i$ having likelihood $\ell_i(\W_{o_i})$ } \\
                     & \text{Current trajectory $\W_G$, parameter $\theta$, and
                     current estimate of probability $P(\W_G|O)$}.\\
   \textbf{Output:  }& \text{A new trajectory $\W^*_G$ from the SDE, new parameter $\theta^*$ and 
     new estimate $P(X^*_G|O)$}.\\
   \hline
   \end{tabular}
   \begin{algorithmic}[1]
     \State Propose a parameter $\theta^*$ from some distribution $q(\theta^*|\theta)$ 
     \State Run the particle filtering algorithm to generate a sample $X^*_G$ along with the estimate $P_\ell(X^*_G)$.\\
     Accept $(\theta^*, \W^*_G)$ with probability 
     $ \mathtt{acc} = 1 \wedge \frac{P_\ell(X^*_G) p(\theta^*) q(\theta|\theta^*)}{P_\ell(X_G )p(\theta) q(\theta^*|\theta)}.$

\end{algorithmic}
\end{algorithm}

\subsection{Details of Hamiltonian Monte Carlo updates }
The Hamiltonian  Monte  Carlo~\citep{duane1987hybrid,neal2011mcmc} sampling algorithm defines a Hamiltonian function, using the target distribution as the potential energy term, and introducing a kinetic energy term parameterized by a set of auxiliary momentum variables. 
The algorithm proceeds by updating the variables of interest (`position') as well as the momentum variables according to the Hamiltonian dynamics, keeping the Hamiltonian approximately constant.
In particular, if we want to sample from a distribution $L(q)$,
first, define $U(q) = -\log(L(q))$ to be the potential energy of position $q$. 
Then introduce an auxiliary variable called $p$ of the same dimension as $p$ and define $K(p) = \frac{1}{2} p^T M^{-1} p$ to be the kinetic energy. 
Here M is a symmetric, positive-definite mass matrix, which is typically diagonal, and is often a scalar multiple of the identity matrix.
The Hamiltonian is then defined as 
$H(q, p) = U(q) + K(p)$
In our settings, the variables of interest are the SDE path evaluated on the Poisson grid $\Psi$, as well as the observation times $O$: $q \equiv X_{\Psi\cup O}$.
The distribution of interest is given in equation~\eqref{eq:hmc_target}, and we repeat it below:
\begin{align}
  L(q) \equiv p(\W_{\Psi \cup O}) &\propto \prior(\W_0)h_{\W_0}(\W_T) \brobri(X_{O \cup \Psi}|0,X_0,T,X_T) \ell(\W_O) \prod_{g \in \Psi} \left( 1- \frac{\phi(\W_g)}{M}\right). 
  \label{eq:hmc_target_app}
\end{align}
Recall that $\prior(\W_0)$ is the distribution over the initial value of the diffusion, $h_{\W_0}(\W_T)$ is the bias term in the $h$-biased Brownian bridge, while $\ell(\cdot)$ is the likelihood term. 
The term $\brobri(X_{O \cup \Psi}|0,X_0,T,X_T)$ gives the probability of imputing values $\W_{O\cup\Psi}$ on ${O\cup\Psi}$ under a Brownian Bridge with values $\W_0$ and $\W_T$ at times $0$ and $T$.
Writing $O\cup\Psi \equiv \{t_1, \dotsc, t_{S}\}$, we have
\begin{align}
  \brobri(\{X_{t_1},\dotsc, X_{t_S}\}|0,X_0,T,X_T) &= 
  P(\W_{t_{S}} | \W_0, \W_T)   \times P(\W_{t_{S - 1}} | \W_0, \W_{t_S}) \times \cdots \times P(\W_{t_1} | \W_0,   \W_{t_2}), \nonumber \\
  \text{where } P(\W_{t_i} | \W_0, \W_{t_j}) & \sim
  \normal\left( \frac{(t_j - t_i) \W_0 + t_i \W_{t_j} }{t_j}, \frac{(t_j - t_i) t_i}{t_j}\right) \text{ for any } t_j > t_i > 0. \label{eq:brownian_br}
\end{align}
The potential energy is the logarithm of equation~\eqref{eq:hmc_target_app}, and factors into a summation of straightforward terms.
The Brownian bridge term in particular decomposes into a sum of quadratic terms.
The gradient of equation~\eqref{eq:hmc_target_app} with respect to $\W_{O\cup\Psi}$ is thus also straightforward to calculate, allowing an easy implementation of the HMC algorithm. We refer the reader to~\citet{neal2011mcmc} for more details, which are now completely standard.

\subsection{Improved mixing via tempering}

 Figure \ref{fig:sin_hmc_ea_euler} compares of our sampler with these settings with EA1 and the Euler-Maruyama approximation. The results are similar to the previous experiment: EA1 does not scale to large $T$, while our asymptotically exact sampler performs between the crude and fine discretizations.
 \begin{figure}[]
   \centering
   \begin{minipage}[]{0.34\linewidth}
   \centering
     \vspace{-0.0 in}
     \includegraphics[width=.99\textwidth]{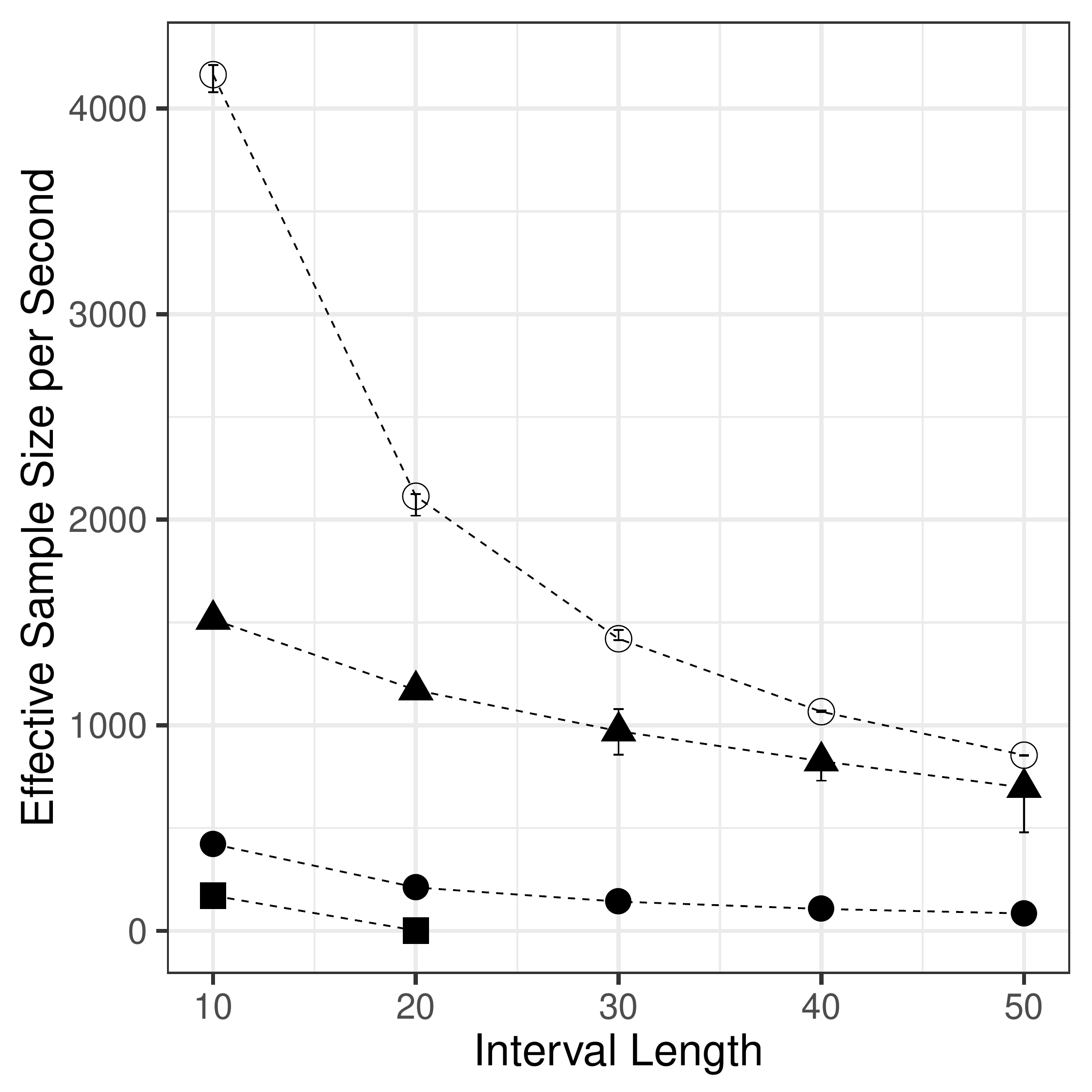}
     \vspace{-0.1 in}
   \end{minipage}
   \begin{minipage}[hp]{0.549\linewidth}
     \vspace{-1.1 in}
     \caption{ESS/s for different samplers against interval length $T$ for the SDE with a periodic drift function. {\footnotesize $\blacktriangle$} represents our method, $\bullet$ and $\circ$ represents Euler-Maruyama method with stepsize $0.01$ and $0.1$ respectively and {\tiny $\blacksquare$ } represents EA1. 
     Because of low acceptance rates, we did not run EA1 for interval lengths longer than 20.}
   \label{fig:sin_hmc_ea_euler}
     \vspace{-0.6 in}
   \end{minipage}
     \vspace{-0.4 in}
 \end{figure}

We introduce a more general and flexible {\em tempering} scheme~\citep{swendsen1986replica,neal1996sampling} to explore the trajectory space more effectively. 
For concreteness, consider the periodic diffusion in equation~\eqref{eq:sin1}.
We introduce an `inverse temperature' parameter $c$, and define a family of SDEs indexed by $c$: 
    \vspace{-0.1 in}
\begin{align}
\df \W_t = c \sin(\W_t) \df t + \df B_t, \quad c \in [0, 1]. 
\label{eq:temper1}
\end{align}
Observe that $c=0$ sets the drift term to $0$, and reduces the SDE to Brownian motion, while $c = 1$ recovers the SDE of interest. 
Intermediate values of $c$ interpolate between these two processes, with smaller values of $c$ having smaller repulsion away from $0$, and thus being easier for MCMC exploration.
It is easy to derive the EA1 sampling functions associated with an arbitrary $c$: 
\begin{align}
  A_c(u) &= c - c\cos(u), \ \  
  \phi_c(x) = \frac{c^2 \sin^2(x)}{2} + \frac{c \cos(x)}{2} + \frac{c}{2}, \ \ 
  M_c = \max(\phi_c(x)) = \frac{c^2 + c}{2} + \frac{1}{8}.
  \label{eq:temper2}
\end{align}
Our parallel tempering scheme picks a set of values for $c$, spanning the interval $[0,1]$ and including $1$. 
We focus here on using six values, $\{0,.2,.4,.6,.8,1\}$. 
Our approach is then to develop an MCMC sampler which targets a {\em joint} distribution over {six independent trajectories}, each marginally distributed according to equation~\eqref{eq:temper1} for one of the settings of $c$. 
The target distribution is thus a product distribution over the individual SDEs for each $c$. 
A simple MCMC step that targets this uses our Gibbs sampler to update each of the paths independently. 
Equation~\eqref{eq:temper2} includes the terms needed for this. 
This by itself does not solve the problem of poor mixing. However as mentioned earlier, we expect samplers corresponding to small $c$'s to explore the trajectory space better.
We exploit this to improve mixing for larger $c$'s, and thus for our SDE of interest, with $c=1$. 
In particular, we intersperse the previous trajectory-wise update steps with a {\em swap} proposal that uniformly picks two neighboring $c$'s, and proposes to exchange their associated MCMC states. 
In other words, for a chosen pair $i$ and $j$, with inverse-temperatures, $c^{(i)}$ and $c^{(j)}$, we propose swapping the associated skeletons $(\Psi^{(i)}, X^{(i)}_{\Psi^{(i)}})$ and $(\Psi^{(j)}, X^{(j)}_{\Psi^{(j)}})$. 

Write $P_{c}(\Psi, X_{\Psi})$ for the probability of the skeleton $(\Psi, X_{\Psi})$ under the measure ${}_c\QQ^{+}$ corresponding to inverse temperature $c$. 
This is just the product of equation~\eqref{eq:joint_data} with the probability of $\Psi$ under a rate $M_{c}$ Poisson process, with all terms given in equation~\eqref{eq:temper2}. 
Then the swap proposal is accepted with Metropolis-Hastings probability given by
\begin{align} 
\text{acc} = \min\left(1, \frac{P_{c_i}(\Psi^{(j)}, X^{(j)}_{\Psi^{(j)}}) \cdot P_{c_j}(\Psi^{(i)}, X^{(i)}_{\Psi^{(i)}})}{P_{c_i}(\Psi^{(i)}, X^{(i)}_{\Psi^{(i)}})\cdot  P_{c_j}(\Psi^{(j)}, X^{(j)}_{\Psi^{(j)}})}\right).
\end{align} 
Having a larger number of $c$'s will mean that the SDEs corresponding to two adjacent $c$'s will be similar, increasing the probability of acceptance. 
Of course, this comes at the price of more computation. 
Our choice of $6$ values (and thus $5$ auxiliary tempered chains) was made without too much care, and it is possible to be more systematic doing this.

\subsection{Miscellaneous results}

\begin{figure}[H]
  \centering
  \begin{minipage}[hp]{0.42\linewidth}
    \includegraphics[width=\textwidth]{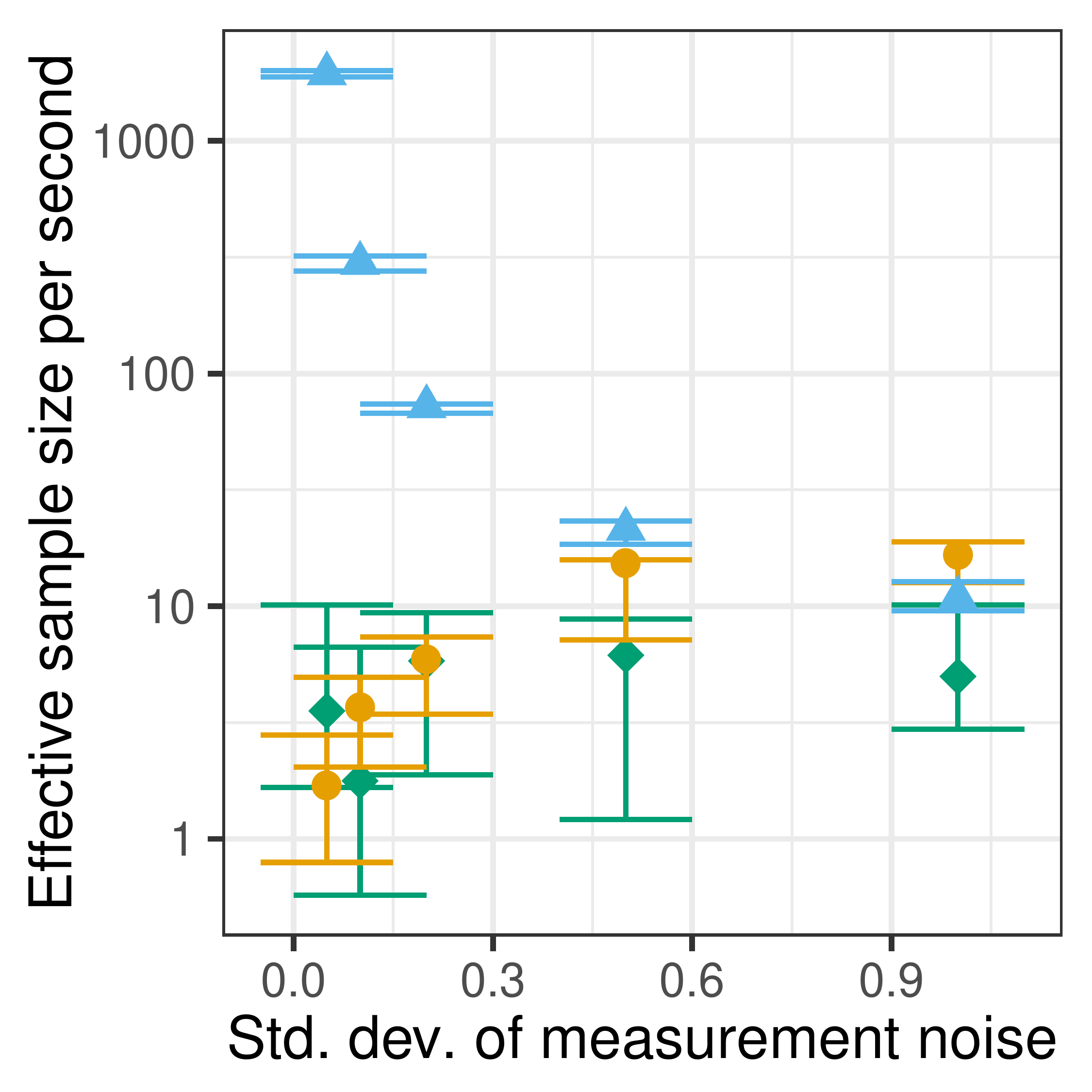}
  \end{minipage}
  \begin{minipage}[hp]{0.42\linewidth}
  \includegraphics[width=\textwidth]{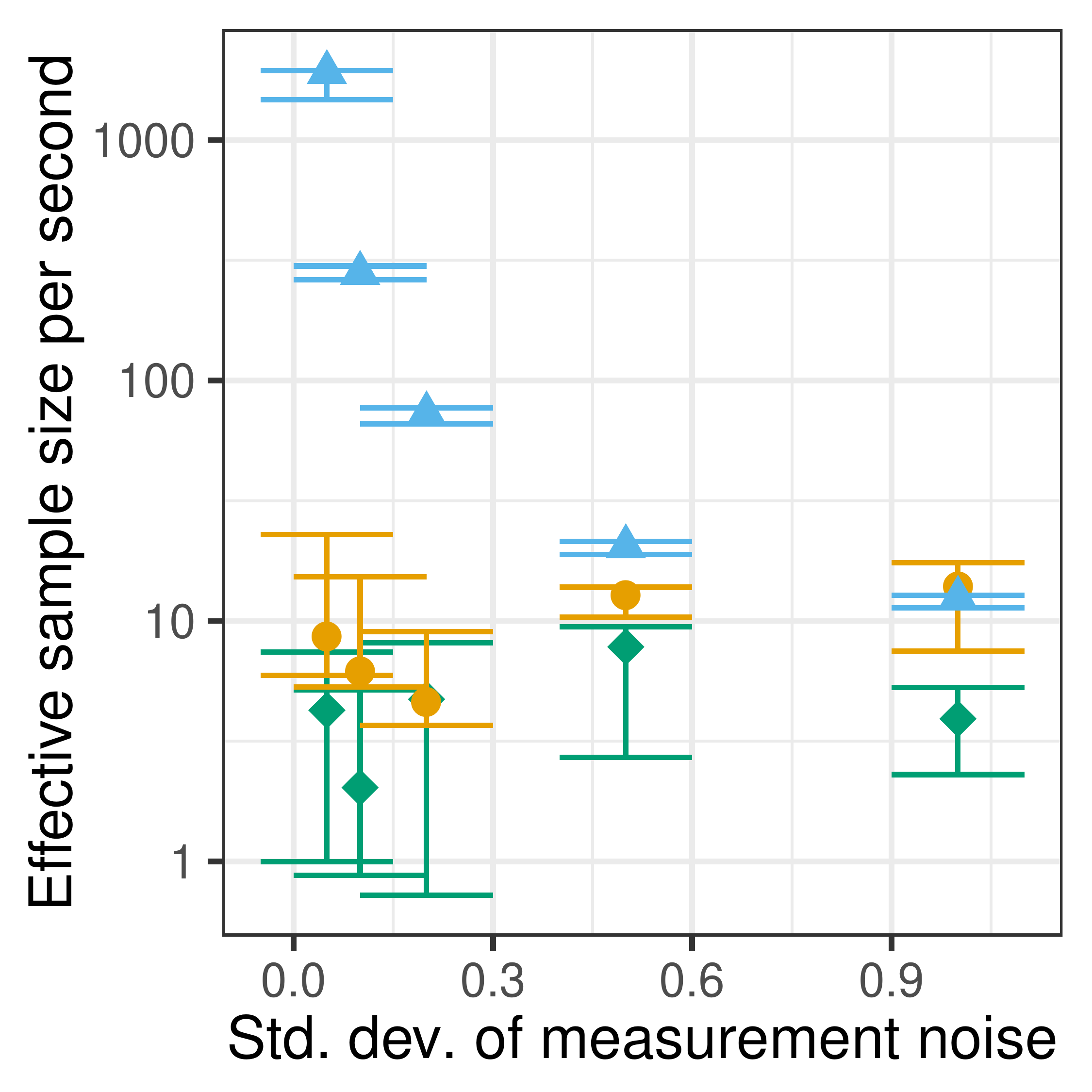}
  \end{minipage}
\caption{ESS/s of our Gibbs sampler {\footnotesize $\blacktriangle$} and $50$-particle pMCMC samplers \fearn\ $\bullet$ and \eul\ {\footnotesize $\blacklozenge$} for $T=20$ and $N=20$ as the standard deviation of the measurement noise increases. (Left) is for the hyperbolic diffusion, and (right) is for the periodic diffusion.} 
\vspace{-.2in}
\end{figure}

\end{document}